\documentclass[runningheads]{llncs}
\usepackage[T1]{fontenc}
\usepackage{graphicx}
\usepackage[colorlinks=true, linkcolor=blue, citecolor=blue, urlcolor=blue]{hyperref}
\usepackage{xcolor}

\usepackage{booktabs}
\usepackage{pifont}
\usepackage{tabularx,float}
\usepackage{multirow}
\usepackage{array}
\usepackage{enumitem}
\usepackage{amsmath,amsfonts,xspace,url}
\usepackage{pgfgantt}
\usepackage{ragged2e}
\usepackage{pdflscape}
\usepackage{algorithm}
\usepackage{algpseudocode}
\usepackage{dashbox}
\usepackage{float}
\usepackage{tikz}
\usetikzlibrary{positioning,arrows.meta}

\makeatletter
\renewcommand{\ALG@beginalgorithmic}{\footnotesize}
\makeatother

\newcolumntype{R}{>{\raggedleft\arraybackslash}p{0.8em}}

\DeclareMathSymbol{\Zz}{\mathalpha}{AMSb}{"5A}
\newcommand{\Z}[1]{\ensuremath{\Zz_{#1}} }
\newcommand{\Zs}[1]{\ensuremath{\Zz^{\ast}_{#1}}}
\newcommand{\G}{\ensuremath{\mathbb{G}}\xspace}
\newcommand{\F}[1]{\ensuremath{\mathcal{F}_{#1}}}

\newcommand{\C}{\ensuremath{\mathcal{C}}\xspace}

\begin{document}
\title{From $\textsf{TS-SUF-2}$ to $\textsf{TS-SUF-4}$: Practical Security Enhancements for $\textsf{FROST2}$ Threshold Signatures}
%
%
\author{
Will Wang \and
Syh-Yuan Tan \and
Ryan Chow \and
Chanson Chan \and 
Martin Zhao}
\authorrunning{Will et al.}
%
\institute{Solv Protocol, Singapore\\
\email{\{will,sytan,ryan,chanson,martin\}@solv.finance}}
\maketitle              
\begin{abstract}
  Threshold signature schemes play a vital role in securing digital assets within blockchain and distributed systems. \textsf{FROST2} stands out as a practical threshold Schnorr signature scheme, noted for its efficiency and compatibility with standard verification processes. However, under the one-more discrete logarithm assumption, with static corruption and centralized key generation settings, \textsf{FROST2} has been shown by Bellare et al. (in CRYPTO 2022) to achieve only \textsf{TS-SUF-2} security, which is a consequence of its vulnerability to \textsf{TS-UF-3} attacks.

In this paper, we address this security limitation by presenting an enhanced variant of \textsf{FROST2}, namely, \textsf{FROST2}\texttt{+} which achieves the \textsf{TS-SUF-4} security level under the same computational assumptions as the original \textsf{FROST2}. 
\textsf{FROST2}\texttt{+} strengthens \textsf{FROST2} by integrating additional pre-processing token verifications that help mitigate \textsf{TS-UF-3} and \textsf{TS-UF-4} vulnerabilities while maintaining practical efficiency. 
We show that \textsf{FROST2}\texttt{+} can achieve \textsf{TS-SUF-4} security not only under the same conditions as the original \textsf{FROST2} analysis, but also when initialized with a distributed key generation protocol such as \textsf{PedPoP}. 
Our benchmark using ZCash's $\textsf{FROST}$ library shows that the performance of \textsf{FROST2}\texttt{+} is comparable to \textsf{FROST2} and about $64-79\%$ faster than \textsf{FROST} when precomputation is enabled.

\keywords{Schnorr \and Threshold Signature \and FROST \and TS-SUF \and Provable Security}
\end{abstract}

\section{Introduction}
\label{sec:intro}
The Schnorr signature scheme \cite{Sch91} has emerged as a cornerstone for blockchain applications due to its efficient signing and verification processes, compact signatures, and support for aggregation. In this context, Schnorr-based threshold signatures are particularly critical for threshold signing in blockchain systems \cite{sbtc,Chainflip22,arch2024,safe2024}, with \textsf{FROST} \cite{FROST} establishing itself as a benchmark and currently undergoing IETF standardization \cite{rfc9591}. Furthermore, NIST's recent call for contributions on threshold multi-party schemes \cite{nist23} has intensified research interest in \textsf{FROST}.

Research related to \textsf{FROST} can be categorized into two main areas: the development of new schemes that offer enhanced efficiency \cite{CKM21,BTZ22,BCKMTZ22,KG25}, and improvements in security \cite{ROAST,CGRS23,BCLTZ25,Chen25,BDLR25,NRT25,BGCRSY25}. The latter category can be further divided into specific areas of security analysis focused on distributed key generation \cite{CGRS23,CKM21,BCKMTZ22} and adaptive security \cite{CKM23,CKKTZ25,Chen25,CS25,BCLTZ25,NRT25,CKM25,BGCRSY25}. Several works \cite{BTZ22,BCKMTZ22,CPZ24,CKKTZ25,CS25} delve deeper into the security notions relevant to threshold signature schemes.

The \textsf{TS-(S)UF-}$i$ security framework \cite{BTZ22,BCKMTZ22} represents a recent effort to rigorously formalize the security of threshold signature schemes. This framework groups a message $M$ along with its auxiliary signing information as a leader request $(lr)$ and classifies unforgeability levels according to adversarial capabilities, ranging from the weakest (\textsf{TS-UF-0}, which resists less capable attackers) to the strongest (\textsf{TS-UF-4}, which resists more capable attackers). While the framework typically describes forgeries in terms of triviality, let $c<t$  denotes the number of corrupted signers, we outline the attackers’ abilities at each level for clarity:

\begin{enumerate}
    \item \textsf{TS-UF-0}: The threshold scheme can withstand an attacker who cannot obtain any partial signatures from honest signers for the target message $M$.

    \item \textsf{TS-UF-1}: The threshold scheme can withstand an attacker at \textsf{TS-UF-0} who also queries at most $t - c - 1$ partial signatures on $M$ from honest signers.

    \item \textsf{TS-UF-2}: The threshold scheme can withstand an attacker at \textsf{TS-UF-1} who also queries leader requests $(lr)$ on $M$ and obtain at most $t - c - 1$ honest partial signatures per $lr$.

    \item \textsf{TS-UF-3}: The threshold scheme can resist an attacker at \textsf{TS-UF-2} who also corrupts pre-processing tokens $(pp_i)$ for honest signers $\mathsf{id}_i \in lr.\mathcal{S}$ in the signing set, but not for all non-responding honest signers in the same set.

    \item \textsf{TS-UF-4}: The threshold scheme can resist an attacker at \textsf{TS-UF-3} who corrupts pre-processing tokens $(pp_i)$ for all honest signers $\mathsf{id}_i \in lr.\mathcal{S}$ in the signing set, though it cannot obtain partial signatures from every honest signer in $lr.\mathcal{S}$.
\end{enumerate}
where the descriptions of \textsf{TS-UF-3} and \textsf{TS-UF-4} refer to the adversary capabilities in echo schemes \cite{BCKMTZ22,BTZ22,CPZ24} like \textsf{FROST}. 
In such schemes, an honest signer $\mathsf{id}_i$ only produces a partial signature if the leader request $(lr)$ contains $\mathsf{id}_i$’s honestly generated pre-processing token $pp_i \in lr.\mathsf{PP}(\mathsf{id}_i)$. 
The framework also introduces the concepts of strong unforgeability (\textsf{TS-SUF-}$i$), which ties a request $lr$ to each unique signature, beyond just the message $M$.

Moreover, the framework provides valuable insights into the relationships among security levels. For instance, it demonstrates that achieving \textsf{TS-UF-0} security implies \textsf{TS-UF-1} security, risking a security loss of $\binom{n}{t-1}$ for number of signers $n$ and threshold $t$.
Additionally, it shows that combining a threshold scheme achieving \textsf{TS-UF-3} security with a standard digital signature scheme with strong existential unforgeability under chosen-message attacks (\textsf{SUF-CMA}) results in a threshold scheme with \textsf{TS-UF-4} security.

Within this security framework, both \textsf{FROST} and \textsf{FROST2} achieve \textsf{TS-UF-0} (and thereby \textsf{TS-UF-1}) security in the distributed key generation (DKG) settings \cite{FROST,CKM21,BCKMTZ22}, which represents the baseline security level within the \textsf{TS-UF-}$i$ hierarchy.
Under the centralized key generation (CKG) setting, \textsf{FROST} achieves \textsf{TS-SUF-3} security and can be further improved to attain the highest \textsf{TS-SUF-4} security level by incorporating an additional digital signature scheme for each signer \cite{BTZ22,BCKMTZ22,CPZ24}. However, this approach introduces additional computational overhead and implementation complexity. 
More critically for our work, \textsf{FROST2} \cite{CKM21,BTZ22,BCKMTZ22} has been shown to achieve only \textsf{TS-SUF-2} security and is not qualified for this security upgrade.

The limitation of \textsf{FROST2} to \textsf{TS-SUF-2} security arises from a fundamental vulnerability to \textsf{TS-UF-3} attacks, as demonstrated by Bellare et al. \cite{BTZ22,BCKMTZ22}. This attack is reminiscent of the rogue public key attack in multi-signature schemes, exploiting a weakness in \textsf{FROST2}'s pre-processing token mechanism to forge a Schnorr signature. Consequently, this enables a forgery under an unauthorized signer set without needing to know the partial signatures of all honest signers, thereby compromising \textsf{TS-UF-3} security (see Appendix \ref{apdx:ts-uf-3_attack}).

Enhancements to \textsf{FROST2} have been proposed, but none have achieved \textsf{TS-SUF-3} security or higher. For instance, \textsf{ROAST} \cite{ROAST} introduces token aggregation and signature share verification, improving robustness while acknowledging security reduction to the original \textsf{FROST2}. 
Its successor, \textsf{Olaf} \cite{CGRS23}, combines \textsf{ROAST} with a simplified DKG protocol (\textsf{SimplPedPoP}) from \textsf{FROST}'s \textsf{PedPoP}, proving its \textsf{TS-SUF-2} security. 
Recent theoretical results \cite{CKKTZ25,CKM25} indicate that \textsf{FROST}, \textsf{FROST2}, and \textsf{FROST3} (or \textsf{ROAST}) are either unforgeable with full adaptive security in the algebraic group model (AGM), under the assumptions of Algebraic One-More Discrete Logarithm (AOMDL) and Low-Dimensional Vector Representation (LDVR); or adaptively secure with a maximum of $t/2$ corruptions, assuming AOMDL in the random oracle model (ROM).

\textsf{FROST-Mask} \cite{Chen25}, which can be viewed as an extension of \textsf{FROST2} with masking mechanisms applied to the pre-processing tokens and partial signatures, has been proven unforgeable with adaptive security under the ROM. Conversely, \textsf{ms-FROST} \cite{BCLTZ25} and \textsf{FaFROST} \cite{BGCRSY25} apply masking solely to the partial signature, proving to achieve \textsf{TS-UF-4} and \textsf{adp-TS-UF}, respectively, with adaptive security under the weaker AGM. 

The security notions \textsf{TS-SUF-3} and \textsf{TS-SUF-4} are important in practice. The \textsf{TS-SUF-3} security protects against premature leakage of a pre-processing token, and the \textsf{TS-SUF-4} security prevents an honest signer from being wrongfully framed for a forgery. In this work, we investigate whether \textsf{FROST2}, which improves efficiency over \textsf{FROST} and reduces communication rounds compared to several later three-round variants \cite{CKM21,BDLR25,NRT25}, can achieve the same \textsf{TS-SUF-3} and \textsf{TS-SUF-4} security levels as \textsf{FROST} in the ROM.

\subsection{Our Contributions}
We proposed an improved variant of \textsf{FROST2} that achieve \textsf{TS-SUF-4} security, the highest tier in Bellare et al.'s hierarchical framework, while maintaining the protocol’s efficiency and its two-round structure.

The proposed \textsf{FROST2}\texttt{+} addresses the \textsf{TS-UF-3} vulnerability of \textsf{FROST2} by introducing a lightweight token authentication mechanism for pre-processing tokens.
In contrast to Bellare et al.'s \textsf{ATS[TS, DS]} transformation \cite{BTZ22,BCKMTZ22},
which requires each signer to generate and manage an additional digital signature key-pair,
our approach authenticates the tokens using only the signing key share and the secret exponents already presented in the pre-processing tokens $(R,S)$.
We establish the \textsf{TS-SUF-4} security of \textsf{FROST2}\texttt{+} under the one-more discrete logarithm (OMDL) assumption within the ROM, in the static corruption and centralized key generation (CKG) settings.

Subsequently, we show that our proposed schemes maintain their security levels when transitioning from the CKG setting to a distributed key generation (DKG) setting, specifically using \textsf{PedPoP} DKG as an instantiation. 
We also explain how to speed up the partial signing algorithm by pre-verifying the pre-processing tokens, which is not bind to the signing message.
Moreover, similar to \textsf{FROST}'s offline pre-processing, our token verification mechanism facilitates per-signer batching across multiple signing sessions. 
This design choice preserves the efficiency of \textsf{FROST}-style pre-processing, making our schemes particularly advantageous for high-throughput, leaderless environments where offline token preparation is critical.

Finally, we show that the proposed \textsf{FROST2}\texttt{+}, while offering stronger security, achieves better efficiency than \textsf{FROST} and matches the performance of \textsf{FROST2}, as detailed in our performance evaluation (Section~\ref{sec:performance}).

\paragraph{Organization.} This paper is structured as follows. 
In Section \ref{sec:prelim}, we present the necessary mathematical tools. 
We then outline the proposed \textsf{FROST2}\texttt{+} scheme in Section \ref{sec:our_construction}, followed by the security analysis in Section \ref{sec:security_analysis}. 
Next, we 
present the performance evaluation in Section \ref{sec:performance} and conclude the paper in Section \ref{sec:conclude}.

\section{Preliminaries}
\label{sec:prelim}
We briefly recall the required mathematical tools in this section.

\subsection{One-More Discrete Logarithm (OMDL) Assumption}
\begin{definition}
Let $\G$ be a cyclic group of prime order $p$ with generator $g$, and let \emph{\textsc{Chall}} and \emph{\textsc{Dlog}} be oracles such that:
\begin{itemize}
\item \emph{\textsc{Chall}}$()$ returns a random challenge $A = g^a$ for $a \xleftarrow{\$} \Zs{p}$
\item \emph{\textsc{Dlog}}$(B)$ returns $b\in\Zs{p}$ such that $B = g^b$ for any $B \in \G$
\end{itemize}
The \emph{OMDL} assumption states that for any probabilistic polynomial-time adversary \C making at most $q-1$ queries to \emph{\textsc{Dlog}}, the advantage
\[
\mathsf{Adv}_{\G}^{\mathsf{omdl}}(\C) = \Pr\left[\begin{array}{l}
(A_1,\ldots,A_q) \leftarrow \emph{\textsc{Chall}}(), \\
(a_1,\ldots,a_q) \leftarrow \C^{\emph{\textsc{Chall},\textsc{Dlog}}}(g) \\
\forall i: g^{a_i} = A_i \land \C \text{ never queried } \emph{\textsc{Dlog}}(A_i)
\end{array}\right]
\]
is negligible.
\end{definition}

\subsection{Lagrange Coefficient} 
Let $f(\cdot)$ be a random $(t-1)$-degree polynomial such that $f(x_i)=s_i$.
The polynomial $f(\cdot)$ can be reconstructed by having the knowledge of $t$-pair values of $(x_i,s_i)$:
\begin{align*}
f(\cdot)=\sum_{x_i\in\mathcal{S}}s_{i}\Delta_{x_i,\mathcal{S}}(\cdot)
\end{align*}
where $\mathcal{S}=\{x_1,x_2,\ldots,x_{t}\}$.
\begin{definition}
The Lagrange coefficient $\Delta_{x_i,\mathcal{S}}$ is defined as:
\begin{align*}
\Delta_{x_i,\mathcal{S}}(Z)=\prod_{x_j\in S,j\neq i}\frac{Z-x_j}{x_i-x_j}
\end{align*}
where $x_i$ and $x_j$ are elements in the set $\mathcal{S}$.
\end{definition}

\subsection{Threshold Signature Scheme}
\label{sec:ts_definition}
To synchronize with Bellare et al.'s \textsf{TS-(S)UF-}$i$ security framework \cite{BTZ22,BCKMTZ22}, we adopt their scheme definition while making some adjustments, which includes the addition of a \textsf{Setup} algorithm and modifications to certain inputs for each algorithm.
We define a non-interactive threshold signature scheme as \textsf{TS=\{Setup, Kg, SPP, LPP, LR, PS, Agg, Vf, SVf\}}, characterized as follows:\\

\noindent
$\mathsf{Setup}(1^{\lambda})\rightarrow \mathsf{par}$: The setup algorithm takes in a security parameter $1^{\lambda}$ and produces public parameters \textsf{par}, which are implicitly provided to all other algorithms.\\

\noindent
$\mathsf{Kg}(\textsf{par},t,\mathsf{ns})\rightarrow \{\mathsf{st}_i=(\tilde{X},\mathsf{aux},\mathsf{sk})\}_{i=1}^{\mathsf{ns}}$: The key generation algorithm takes in threshold $t$ and number of signers $\mathsf{ns}$ such that $t\leq \mathsf{ns}$, initializes the state $\mathsf{st}_i$ for every signer $\mathsf{id}_i\in\{\mathsf{id}_1,\ldots,\mathsf{id}_{\mathsf{ns}}\}$ with joint public key $\tilde{X}$, public auxiliary information $\mathsf{aux}$, and a signing key share $\mathsf{sk}$. \\

\noindent
$\mathsf{SPP}(\mathsf{st}_i)\rightarrow(pp, \mathsf{st}_i)$: The signing pre-processing algorithm takes as input the state $\mathsf{st}_i$ of signer $\mathsf{id}_i$. 
It outputs a pre-processing token $pp$ and updates the signer’s state.\\

\noindent
$\mathsf{LPP}(\mathsf{id}_i, pp, \mathsf{st}_0) \rightarrow \mathsf{st}_0$: The leader pre-processing algorithm takes the pre-processing token $pp$ from signer $\mathsf{id}_i$ and the leader’s state $\mathsf{st}_0$, updating $\mathsf{st}_0$ to incorporate the token.\\

\noindent
$\mathsf{LR}(M, \mathcal{S}, \mathsf{st}_0)\rightarrow(lr, \mathsf{st}_0)$: 
This leader signing-request algorithm generates a leader request $lr$ based on a message $M$, a signing set $\mathcal{S}$ with at least $t$ signers, and the leader’s state $\mathsf{st}_0$.\\

\noindent
$\mathsf{PS}(lr, \mathsf{st}_i)\rightarrow (psig_i, \mathsf{st}_i)$: The signer partial-signing algorithm takes the leader request $lr$, the state $\mathsf{st}_i$ of a signer $\mathsf{id}_i$, generating a partial signature $psig_i$ and updating state $\mathsf{st}_i$.\\

\noindent
$\mathsf{Agg}(lr, \{psig_i\}_{\mathsf{id}_i\in lr.\mathcal{S}}, \mathsf{st}_0)\rightarrow (\sigma, \mathsf{st}_0)$: The leader partial-signature aggregation algorithm aggregates the partial signatures received from the signing set $\mathcal{S}$ into a final signature $\sigma$ for the message $M$ while updating the leader's state $\mathsf{st}_0$.\\

\noindent
$\mathsf{Vf}(\tilde{X}, M, \sigma) \rightarrow 0\text{ or }1$: The verification algorithm checks the validity of the signature $\sigma$ for the message $M$ using the public key $\tilde{X}$, returning 1 if valid and 0 otherwise.\\

\noindent
$\mathsf{SVf}(\tilde{X}, lr, \sigma) \rightarrow 0\text{ or }1$: The strong verification algorithm checks the validity of the signature $\sigma$ for the leader request $lr$ using the public key $\tilde{X}$, returning 1 if valid and 0 otherwise.

\subsection{Security Model}
\label{sec:security_model}

\begin{table}[h]
\centering
\caption{Trivial forgery predicates for \textsf{TS‐(S)UF‐}$i$ \cite{BTZ22,BCKMTZ22} with Cremers et al.'s recommendation \cite{CPZ24}.}
\label{tab:trivial-forgery}
\begin{tabular}{l l}
\toprule
\textbf{Predicate} & \textbf{Definition} \\
\midrule
$\textsf{tf}_0(M)$         & $S_1(M) = \emptyset$ \\
$\textsf{tf}_1(M)$         & $|S_1(M)| \geq t - |CS|$ \\
$\textsf{tf}_2(lr)$       & $|S_2(lr)| \geq t - |CS|$ \\
$\textsf{tf}_3(lr)$       & $\textsf{tf}_2(lr) \wedge S_2(lr) = S_3(lr)$ \\
$\textsf{tf}_4(lr)$       & $\textsf{tf}_2(lr) \wedge S_2(lr) = S_3(lr) = S_4(lr)$ \\ \addlinespace
$\textsf{tsf}_2(lr, \tilde{X}, \sigma)$ & $\textsf{tf}_2(lr) \wedge \textsf{SVf}(\tilde{X}, lr, \sigma)$ \\
$\textsf{tsf}_3(lr, \tilde{X}, \sigma)$ & $\textsf{tf}_3(lr) \wedge \textsf{SVf}(\tilde{X}, lr, \sigma)$ \\
$\textsf{tsf}_4(lr, \tilde{X}, \sigma)$ & $\textsf{tf}_4(lr) \wedge \textsf{SVf}(\tilde{X}, lr, \sigma)$ \\
\bottomrule
\end{tabular}
\end{table}

\begin{figure}[h!]
  \centering
  \begin{tabular}{|l|}
    \hline
    \begin{minipage}{\linewidth}\small
\underline{Games $\mathbf{G}^{\text{ts-uf-}i}_{\mathsf{TS}}$ ($i=0,1,2,3,4$) and $\mathbf{G}^{\text{ts-suf-}i}_{\mathsf{TS}}$ ($i=2,3,4$)}\\

\begin{algorithmic}[1]
\Statex \hspace*{-\algorithmicindent}\textsc{Init}$(CS)$:   
\State \textsf{Require}: $CS \subseteq\{\mathsf{id}_1,\ldots, \mathsf{id}_\mathsf{ns}\}$ and $|CS|<t$ \Comment{Set of corrupted signers}
\State $\textsf{h}\xleftarrow{\$}\textsf{TS.HF};(\tilde{X},aux,\mathsf{sk}_{1},\dots,\mathsf{sk}_{\mathsf{ns}})\xleftarrow{\$}\textsf{Kg}$ \Comment{Set of hash functions \textsf{TS.HF}}
\State $HS\leftarrow\{\mathsf{id}_1,\ldots,\mathsf{id}_\mathsf{ns}\}\setminus CS$ \Comment{Set of honest signers}
\For{$\mathsf{id}_i\in HS$}
\State $\textsf{st}_i.\textsf{sk}\leftarrow x_i;\;\textsf{st}_i.\tilde{X}\leftarrow \tilde{X};\;\textsf{st}_i.\textsf{aux}\leftarrow aux$
\EndFor
\State \Return $\tilde{X},aux,\{x_i\}_{\mathsf{id}_i\in CS}$

\hfill
\Statex \hspace*{-\algorithmicindent}\textsc{PPO}($\mathsf{id}_i$):
\State \textsf{Require}: $\mathsf{id}_i\in HS$
\State $(pp,\mathsf{st}_i)\xleftarrow{\$}\textsf{SPP}(\mathsf{st}_i);\;PPL\leftarrow PPL\cup\{(\mathsf{id}_i,pp)\}$
\State \Return $pp$

\hfill
\Statex \hspace*{-\algorithmicindent}\textsc{PSignO}($\mathsf{id}_i,lr$):
\State \textsf{Require}: $lr.\mathcal{S}\subseteq\{\mathsf{id}_1,\ldots,\mathsf{id}_\mathsf{ns}\}$, $M=lr.\mathsf{msg}\in\{0,1\}^*$ and $\mathsf{id}_i\in HS$
\State $(psig,\textsf{st}_i)\xleftarrow{\$}\textsf{PS}(lr,\textsf{st}_i);L\leftarrow L\cup\{lr\}$
\If{$psig\neq\bot$}
\State $S_1(M)\leftarrow S_1(M)\cup\{\mathsf{id}_i\};S_2(lr)\leftarrow S_2(lr)\cup\{\mathsf{id}_i\}$
\EndIf
\State \Return $psig$

\hfill
\Statex \hspace*{-\algorithmicindent}\textsc{RO}$[\mathsf{h}](x):$
\State \Return $\mathsf{h}(x)$

\hfill
\Statex \hspace*{-\algorithmicindent}\textsc{Fin}($M,\sigma$):
\For{all $lr\in L$}
\State $S_3(lr)\leftarrow\{\mathsf{id}_i\in HS\cap lr.\mathcal{S}:lr\textsf{.PP}(\mathsf{id}_i)\in PPL\};S_4(lr)\leftarrow HS\cap lr.\mathcal{S}$
\EndFor
\If{$\neg \textsf{Vf}(\tilde{X},M,\sigma)$}
\State \Return \textsf{false};
\EndIf
\If{Game $\mathbf{G}^{\text{ts-uf-}i}_{\mathsf{TS}}$ for $i=0,1$}
\State \Return $\neg \textsf{tf}_i(M)$ 
\EndIf
\If{Game $\mathbf{G}^{\text{ts-uf-}i}_{\mathsf{TS}}$ for $i=2,3,4$}
\State \Return $\neg\exists lr(lr\textsf{.msg}=M\wedge \textsf{tf}_i(lr))$ 
\EndIf
\If{Game $\mathbf{G}^{\text{ts-suf-}i}_{\mathsf{TS}}$}
\State \Return $\neg\exists lr (lr\textsf{.msg}=M\wedge \textsf{tsf}_i(lr,\tilde{X},\sigma))$ 
\EndIf
\end{algorithmic}
    \end{minipage}\\
    \hline
  \end{tabular}
  \caption{Security games defining \textsf{TS‐(S)UF‐}$i$ \cite{BCKMTZ22,BTZ22,CPZ24}.}
  \label{fig:ts-uf-suf}
\end{figure}

To address a shortcoming in the original hierarchy of Bellare et al.'s hierarchy framework \cite{BCKMTZ22}, where \textsf{TS-(S)UF-}4 does not imply \textsf{TS-(S)UF-}3, and vice versa, due to incomparable trivial forgery conditions, we adopt the strengthened trivial forgery condition in Table \ref{tab:trivial-forgery} for $\textsf{tf}_4$ as recommended by Cremers et al. \cite{CPZ24}.
Specifically, this strengthens $\mathsf{tf_4}$ from requiring $S_2(lr) = S_4(lr)$ to $S_2(lr) = S_3(lr) = S_4(lr)$, ensuring that all honest signers in $lr.\mathcal{S}$ responded with honestly generated pre-processing tokens. 
The modification is purely definitional such that for echo schemes satisfying the standard pre-processing token check, it does not change the set of trivial forgeries. 
In particular, this holds for the \textsf{ATS[TS,DS]} transformation \cite{BTZ22,BCKMTZ22} whose partial signing only accepts signed pre-processing tokens, so the original and the strengthened $\mathsf{tf}_4$ coincide and the existing \textsf{TS-SUF-4} security proof for \textsf{ATS[TS,DS]} remains valid.

This strengthening approach, corresponds to Cremers et al.'s \textsf{SiGu-4:aLRhPP} level, restores linear downward implications across the hierarchy (e.g., \textsf{TS-(S)UF-4} implies \textsf{TS-(S)UF-3}) without weakening the lower levels as done in the revised version \cite{BTZ22}. 
We prefer this strengthened \textsf{TS-(S)UF-}$i$ model as shown in Figure \ref{fig:ts-uf-suf} over Cremers et al.'s full framework \cite{CPZ24} because it provides a standard comparison to \textsf{FROST} and its related variants. 

\begin{definition}
Let $\mathsf{X}\in \{\mathsf{suf,uf}\}$ and $i\in\{0,1,2,3,4\}$. A $\mathsf{TS}$ scheme is secure if there is no polynomial time algorithm $\mathcal{F}$ that can win the game $\mathbf{G}^{\mathsf{ts\text{-}X\text{-}}i}_{\mathsf{TS}}$ in time $T(\F{})$ with non-negligible advantage:
    \begin{align*}        \mathsf{Adv}^{\mathsf{ts\text{-}X\text{-}i}}_{\mathsf{TS}}(\F{})=\Pr[\mathbf{G}^{\mathsf{ts\text{-}X\text{-}}i}_{\mathsf{TS}}(\F{})=\mathsf{true}].
    \end{align*}
\end{definition}

\subsection{Generalized Forking Lemma Extension}
This lemma is proposed by Bellare et al. \cite{BTZ22,BCKMTZ22} in proving the \textsf{TS-SUF-2} security of \textsf{FROST2}.
It extends the generalized Forking Lemma \cite{BN06} where if $\mathcal{A}$ can only output index $I$ within a given set $SS \subseteq [1..q]$, then the final bound on $\mathsf{acc}(\mathsf{Fork}^{\mathcal{A}})$ depends only on $|S|$ instead of $q$. 
This reduces the security loss from $1/q$ to $1/|SS|$.

\begin{lemma}\label{lem:fork}
Let $q\ge 1$ be an integer, $SS\subseteq [1..q]$ be a set, and $\mathbb{H}$ be a set. Let $\mathcal{A}$ be a randomized algorithm that on input $x, h_1, \ldots, h_q$ outputs a pair $(i_{sig}, \mathsf{Out})$, where $i_{sig} \in \{\perp\} \cup SS$ and Out is a side output. Let \textsf{GrGen} be a randomized algorithm that generates $x$. The accepting probability of $\mathcal{A}$ is defined as
$\mathsf{acc}(\mathcal{A}) = \Pr_{x \gets \mathbf{GrGen},\, h_1,\ldots,h_q \gets \mathbb{H}}\bigl[(i_{sig},\mathsf{Out}) \gets \mathcal{A}(x,h_1,\ldots,h_q) : i_{sig} \ne \perp\bigr]$.

The accepting probability of $\mathsf{Fork}^{\mathcal{A}}$ is defined as
\[
\mathsf{acc}(\mathsf{Fork}^{\mathcal{A}}) = \Pr_{x \gets \mathbf{GrGen}}\bigl[\alpha \gets \mathsf{Fork}^{\mathcal{A}}(x) : \alpha \ne \perp\bigr].
\]
Then $\mathsf{acc}(\mathsf{Fork}^{\mathcal{A}}) \ge \mathsf{acc}(\mathcal{A})^2 / |SS|$.
\end{lemma}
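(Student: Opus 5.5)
Following the classical argument of Bellare and Neven \cite{BN06}, I first fix the forking algorithm $\mathsf{Fork}^{\mathcal{A}}(x)$. It samples random coins $\rho$ for $\mathcal{A}$ and values $h_1,\ldots,h_q\gets\mathbb{H}$, and runs $(i_{sig},\mathsf{Out})\gets\mathcal{A}(x,h_1,\ldots,h_q;\rho)$. If $i_{sig}=\perp$ it returns $\perp$. Otherwise it resamples $h'_{i_{sig}},\ldots,h'_q\gets\mathbb{H}$ and reruns $\mathcal{A}$ on $(x,h_1,\ldots,h_{i_{sig}-1},h'_{i_{sig}},\ldots,h'_q;\rho)$, obtaining $(i'_{sig},\mathsf{Out}')$. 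It returns $(\mathsf{Out},\mathsf{Out}')$ if $i'_{sig}=i_{sig}$ and $h_{i_{sig}}\neq h'_{i_{sig}}$, and $\perp$ otherwise. The aim is to bound the success probability by $\mathsf{acc}(\mathcal{A})^2/|SS|$, with no $1/|\mathbb{H}|$ term. Either the lemma is stated for the variant that drops the inequality check, or that term is absorbed in the way \cite{BTZ22,BCKMTZ22} do it; I will prove the version with the squared bound and handle the collision term separately, adding it as $-\mathsf{acc}/|\mathbb{H}|$ if it is required.

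\emph{Key step.} Fix $x$. For each $j\in SS$, let $\rho$ range over the coins together with $h_1,\ldots,h_{j-1}$, and define
\[
a_j(\rho)=\Pr_{h_j,\ldots,h_q}\bigl[i_{sig}=j\bigr],\qquad
\mathsf{acc}(x)=\sum_{j\in SS}\mathbb{E}_\rho\bigl[a_j(\rho)\bigr].
\]
Conditioned on $\rho$, the two runs for index $j$ are independent, each with success probability $a_j(\rho)$. Hence
\[
\Pr[\mathsf{Fork}\text{ succeeds at } j]\ \ge\ \mathbb{E}_\rho\bigl[a_j(\rho)^2\bigr]\ \ge\ \bigl(\mathbb{E}_\rho[a_j(\rho)]\bigr)^2
\]
by Jensen's inequality.

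\emph{Summing over indices.} Summing over $j\in SS$ and applying Cauchy--Schwarz over the $|SS|$ indices (this is where restricting $i_{sig}$ to $SS$ matters) gives
\[
\sum_{j\in SS}\bigl(\mathbb{E}_\rho[a_j]\bigr)^2\ \ge\ \frac{1}{|SS|}\Bigl(\sum_{j\in SS}\mathbb{E}_\rho[a_j]\Bigr)^2=\frac{\mathsf{acc}(x)^2}{|SS|}.
\]
I then take the expectation over $x\gets\mathbf{GrGen}$ and apply Jensen once more to obtain $\mathbb{E}_x[\mathsf{acc}(x)^2]\ge\mathsf{acc}(\mathcal{A})^2$.

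\emph{Main obstacle.} The step needing the most care is the conditioning. The first run must be independent of the rerun given the prefix $(\rho,h_1,\ldots,h_{j-1})$, and the event ``$i_{sig}=j$ in both runs'' must be measured over the fresh suffix only. It is also here that the $h_j\neq h'_j$ requirement would cost $a_j(\rho)/|\mathbb{H}|$, which I would either drop by definition or subtract.
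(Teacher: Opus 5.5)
Your argument is correct. It is the standard Bellare--Neven computation: square per prefix, apply Jensen, then Cauchy--Schwarz over the $|SS|$ admissible indices rather than all $q$. The paper gives no proof of its own and simply cites Bellare et al.\ for the lemma, so you are matching the approach it relies on.

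The one thing to settle rather than hedge is the definition of $\mathsf{Fork}^{\mathcal{A}}$. Take the variant that succeeds iff $i'_{sig}=i_{sig}$; then your key step holds with equality, $\mathbb{E}_\rho[a_j(\rho)^2]$. With the $h_{i_{sig}}\neq h'_{i_{sig}}$ test, the stated bound $\mathsf{acc}(\mathcal{A})^2/|SS|$ is false in general, for instance when $\mathcal{A}$ always outputs the same index. The paper accordingly handles collisions outside the lemma, via the $\mathsf{BadHash}$ event in Lemma~\ref{lem:tssuf4_1}.
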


\subsection{Generalized Multi-Instance Multi-Forking Lemma}
\label{sec:mimfork}
We extend the generalized multi-instance forking lemma of
Bagherzandi et al.~\cite{BCJ08} to allow multiple forks per
instance. 
The statement is average-case over parameters generated by $\mathsf{GrGen}$, in the same style as the original lemma.
In essence, we modify the success count of each local instance from $1$ to $n$, analogous to Gennaro et al.'s multi-reset lemma~\cite{GLSY04} built upon Bellare and Neven's generalized reset lemma~\cite{BN06}.

Let $\mathbb{H}$ be a finite set and let $q,N,n$ be positive integers. 
Consider an algorithm $\mathcal{A}$ that, on input
$\mathsf{par}$, internal coins $\rho$, and a random-oracle tape $h=(h_1,\dots,h_q)\in\mathbb{H}^q$, outputs an index set $I$ and side outputs $(\{\mathsf{out}_i\}_{i\in I},\theta)$. 
If $\mathcal{A}$ fails, it outputs $(I,\{\mathsf{out}_i\}_{i\in I},\theta)$ with $I=\emptyset$. Otherwise we assume $|I|=N$, treating $I$ as a multiset and allowing repetitions, and we assume that if $I=\{i_1,\dots,i_N\}$ then $i_1\le\cdots\le i_N$.
Note that we adopt the simplified output notations \cite{CGRS23} that has an extra $\theta$, which can be viewed as a suffix in every $\mathsf{out}_i$.

Let
$\Omega
=
\{(\rho,h_1,\dots,h_q)\mid \rho\in\{0,1\}^*,\ h_j\in\mathbb{H}\}$
be the set of all possible random tapes. For a fixed
$\mathsf{par}$, let $S(\mathsf{par})=\{f\in\Omega\mid \mathcal{A}(\mathsf{par},f)\text{ succeeds}\}$.
Define $\varepsilon(\mathsf{par}) = \Pr_{f\leftarrow\Omega}[\mathcal{A}(\mathsf{par},f)\text{ succeeds}]$ and let
$\hat\varepsilon
=
\Pr_{\mathsf{par}\leftarrow\mathsf{GrGen}(1^\lambda),\,
f\leftarrow\Omega}
[\mathcal{A}(\mathsf{par},f)\text{ succeeds}]$.

Let $\mathcal{O}=\{(h_{i_m},\mathsf{out}_{i_m})\}_{m=1}^N$ be the side outputs from the initial execution, and let
$\mathcal{O}'=\emptyset$ initially. The forking algorithm
$\mathsf{MIMFork}_{N,n}^{\mathcal{A}}(\mathsf{par})$
performs $N$ local rewinding processes, one for each occurrence
$i_m$ in the ordered multiset $I$. For each target occurrence
$i_m$, it resamples the random-oracle suffix starting at $i_m$
until it collects $n$ additional side outputs whose random-oracle
responses at $i_m$ are pairwise distinct and also distinct from the
original response $h_{i_m}$.

\begin{algorithm}[h]
\caption{Generalized Multi-Instance Multi-Forking Algorithm
$\mathsf{MIMFork}_{N,n}^{\mathcal{A}}(\mathsf{par})$}
\label{algo:mimfork}
\begin{algorithmic}[1]
\State Pick internal coins $\rho$ for $\mathcal{A}$ at random
\State $(h_1,\ldots,h_q)\xleftarrow{\$} \mathbb{H}^q$
\State $(I,\{\mathsf{out}_i\}_{i\in I},\theta)\leftarrow
\mathcal{A}(\mathsf{par},(h_1,\ldots,h_q);\rho)$
\If{$I=\emptyset$}
\State \Return $\bot$
\EndIf
\State Parse $I$ as a multiset
$\{i_1,i_2,\ldots,i_N\}$ such that
$i_1\le i_2\le \dots \le i_N$
\State
$\mathcal{O} \leftarrow \{(h_{i_m},\mathsf{out}_{i_m})\}_{m=1}^N~;~\mathcal{O}'\leftarrow\emptyset~;~L\leftarrow
\left\lceil
\frac{8Nq}{\hat\varepsilon}
\ln\left(\frac{8Nn}{\hat\varepsilon}\right)
\right\rceil ~;~ T_{\max}\leftarrow nL$
\For{$m=1$ \textbf{to} $N$}
\State $\mathcal{O}'_m\leftarrow\emptyset~;~c_m\leftarrow 0$
\While{$|\mathcal{O}'_m|<n$ \textbf{and} $c_m<T_{\max}$}
    \State $c_m\leftarrow c_m+1~;~h_{i_m}^{(m,c_m)},\ldots,h_q^{(m,c_m)}
    \xleftarrow{\$}
    \mathbb{H}^{q-i_m+1}$
    \State
    $Q^{(m,c_m)}
    \leftarrow
    \bigl(
    h_1,\ldots,h_{i_m-1},
    h_{i_m}^{(m,c_m)},\ldots,h_q^{(m,c_m)}
    \bigr)$
    \State
    $(I',\{\mathsf{out}'_i\}_{i\in I'},\theta')
    \leftarrow
    \mathcal{A}(\mathsf{par},Q^{(m,c_m)};\rho)$
    \If{$i_m\in I'$}
        \If{$h_{i_m}^{(m,c_m)}
            \notin
            \{h_{i_m}\}\cup
            \{h'\mid(h',\cdot)\in\mathcal{O}'_m\}$}
            \State
            $\mathcal{O}'_m
            \leftarrow
            \mathcal{O}'_m
            \cup
            \{(h_{i_m}^{(m,c_m)},\mathsf{out}'_{i_m})\}$
        \EndIf
    \EndIf
\EndWhile
\If{$|\mathcal{O}'_m|<n$}
    \State \Return $\bot$
\EndIf
\State $\mathcal{O}'\leftarrow\mathcal{O}'\cup\mathcal{O}'_m$
\EndFor
\State \Return $(\mathcal{O},\mathcal{O}')$
\end{algorithmic}
\end{algorithm}

\begin{lemma}[Generalized Multi-Instance Multi-Forking Lemma]
\label{lem:mimfork}
Let $\mathbb{H}$ be a finite set and let $q,N,n\in\mathbb{N}$ with $N,n\ge 1$. 
Let $\mathcal{A}$ be a randomized algorithm that, on input $\mathsf{par}$ and a vector
$h\in\mathbb{H}^q$, returns an index set $I$ and side outputs $(\{\mathsf{out}_i\}_{i\in I},\theta)$. 
Define $\hat\varepsilon$ as above.
Assume $|\mathbb{H}|>\frac{16Nnq}{\hat\varepsilon}$.
Then the accepting probability of $\mathsf{MIMFork}_{N,n}^{\mathcal{A}}$ is
\[
\mathsf{acc}
\left(
\mathsf{MIMFork}_{N,n}^{\mathcal{A}}
\right)
=
\Pr
\left[
\mathsf{MIMFork}_{N,n}^{\mathcal{A}}
(\mathsf{par})
\neq
\bot
\right]
\ge
\frac{\hat\varepsilon}{8},
\]
where the probability is over
$\mathsf{par}\leftarrow\mathsf{GrGen}(1^\lambda)$ and the coins of $\mathsf{MIMFork}_{N,n}^{\mathcal{A}}$.
Moreover, whenever the algorithm returns
$(\mathcal{O},\mathcal{O}')$, for every occurrence $i_m$ the collection $\mathcal{O}'_m$ contains $n$ pairs whose first components are pairwise distinct and distinct from $h_{i_m}$.
\end{lemma}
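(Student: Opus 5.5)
The plan is to follow the Bagherzandi--Cramer--Jarecki analysis of the multi-instance forking lemma and repeat its local argument $n$ times per instance. I would argue over a fixed $\mathsf{par}$ first and average at the end.

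\textbf{Step 1: fix $\mathsf{par}$ and restrict to good parameters.} By Markov's inequality, the set of good parameters $G=\{\mathsf{par}:\varepsilon(\mathsf{par})\ge\hat\varepsilon/2\}$ has probability mass at least $\hat\varepsilon/2$ under $\mathsf{GrGen}$. It therefore suffices to show that for every $\mathsf{par}\in G$,
\[
\Pr\left[\mathsf{MIMFork}_{N,n}^{\mathcal{A}}(\mathsf{par})\neq\bot\right]\ge \tfrac{1}{4}\,\varepsilon(\mathsf{par}),
\]
or any bound that integrates to $\hat\varepsilon/8$.

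\textbf{Step 2: the standard ``good prefix'' argument for one occurrence $i_m$.} For each $j\in[1..q]$, consider the event that the run succeeds and $j$ appears in $I$ (counting the occurrence $m$). Index the random tapes by the prefix $(\rho,h_1,\dots,h_{j-1})$. Call a prefix \emph{good for $j$} if, conditioned on it, the probability that a fresh suffix yields a successful run with $j\in I'$ is at least $\hat\varepsilon/(4Nq)$.

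A counting argument over the at most $Nq$ pairs (occurrence, index) shows the following. Conditioned on success, with probability at least $1/2$, every occurrence $i_m$ of the original run has a good prefix; the bad mass is at most $Nq\cdot\hat\varepsilon/(4Nq)=\hat\varepsilon/4\le\varepsilon(\mathsf{par})/2$. So with probability at least $\varepsilon(\mathsf{par})/2$, the initial run succeeds and all $N$ prefixes are good.

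\textbf{Step 3: collect $n$ distinct forks per occurrence.} Fix a good prefix. Each rewinding trial hits $i_m\in I'$ with probability at least $\delta-j/|\mathbb{H}|$ while producing a new value $h^{(m,c)}_{i_m}$ outside the at most $n$ already-collected values. Here $\delta=\hat\varepsilon/(4Nq)$ and the subtracted term is at most $n/|\mathbb{H}|$. The hypothesis $|\mathbb{H}|>16Nnq/\hat\varepsilon$ gives $n/|\mathbb{H}|<\delta/4$, so each trial succeeds with probability at least $\delta'=3\delta/4$.

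The trials within one loop are independent, since each uses fresh suffixes and the prefix is fixed. I would split the $T_{\max}=nL$ trials into $n$ blocks of $L$. Then
\[
\Pr[\text{some block fails}]\le n(1-\delta')^L\le n\,e^{-\delta' L}.
\]
With $L=\lceil (8Nq/\hat\varepsilon)\ln(8Nn/\hat\varepsilon)\rceil$ we get $\delta' L\ge \tfrac{3}{2}\ln(8Nn/\hat\varepsilon)$, so this is at most $\hat\varepsilon/(8N)$. A union bound over the $N$ occurrences gives total failure probability at most $\hat\varepsilon/8$.

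\textbf{Step 4: combine.} For $\mathsf{par}\in G$, the success probability is at least $\varepsilon(\mathsf{par})/2-\hat\varepsilon/8\ge\varepsilon(\mathsf{par})/4$. Averaging over $G$ gives at least $\tfrac14\cdot\tfrac{\hat\varepsilon}{2}\cdot$ (its contribution). Alternatively, I would carry the bound $\varepsilon/2-\hat\varepsilon/8$ without restricting to $G$ and take expectations, which gives $\hat\varepsilon/2-\hat\varepsilon/8\ge\hat\varepsilon/8$ directly; this route avoids Markov altogether.

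The distinctness claim in the statement holds by construction of the membership test at the inner \textbf{if}.

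\textbf{Main obstacle.} The delicate step is Step 2. The good-prefix threshold must be stated in terms of $\hat\varepsilon$, which is what the algorithm knows through $L$, rather than $\varepsilon(\mathsf{par})$. The bad mass must then be bounded pointwise in $\mathsf{par}$ so that the expectation still gives $\hat\varepsilon/8$.

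A second subtlety is the multiset $I$ with repetitions. Two occurrences with the same index $i_m$ share a prefix, and the argument must be indexed by the occurrence $m$ so that the count of at most $Nq$ pairs is correct. I would handle this by defining the event ``$j$ is the $m$-th smallest element of $I$''.
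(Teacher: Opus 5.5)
Your proposal is correct and follows essentially the paper's proof: good prefixes for each (occurrence, index) pair with a union bound over the $N$ occurrences, a per-trial success bound reduced by $n/|\mathbb{H}|$, then $n$ blocks of $L$ trials with a union bound giving failure probability at most $\hat\varepsilon/8$, and finally averaging over $\mathsf{par}$. The only real difference is your $\mathsf{par}$-independent threshold $\hat\varepsilon/(4Nq)$ in place of the paper's $\varepsilon(\mathsf{par})/(2Nq)$, which is what makes your Markov-free averaging valid; in your main line, the fact you actually need is $\mathbb{E}[\varepsilon(\mathsf{par})\mathbf{1}_G]\ge\hat\varepsilon/2$ (as in the paper's final display), since $\Pr[G]\ge\hat\varepsilon/2$ by itself would only give $\hat\varepsilon^2/16$.
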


\begin{proof}
For a fixed $\mathsf{par}$, let
$\varepsilon=\varepsilon(\mathsf{par})$ and
$S=S(\mathsf{par})$ where $\Pr[S]=\varepsilon$ by definition.
For every $f\in S$, let
$I(f)=\{i_1(f),\ldots,i_N(f)\}$, where
$i_1(f)\le\cdots\le i_N(f)$.
For $j\in\{1,\ldots,q\}$, let
$f_j=(\rho,h_1,\ldots,h_{j-1})$ denote the prefix before the
$j$-th random-oracle response, and let
$s_j=(h_j,\ldots,h_q)$ denote the corresponding suffix.
For every $m\in\{1,\ldots,N\}$ and $j\in\{1,\ldots,q\}$, define
$A_m^j=\{f\in S\mid i_m(f)=j\}$ and
$B_m^j=
\left\{
f\in\Omega
\;\middle|\;
\Pr_{s_j}\bigl[(f_j\|s_j)\in A_m^j\bigr]
\ge
\frac{\varepsilon}{2Nq}
\right\}$.
Let
$E_m=\bigcup_{j=1}^q(A_m^j\cap B_m^j)$.
By part~(3) of the Splitting Lemma~\cite[Lemma 2]{BCJ08},
for every $m$,
\begin{align}
\Pr[E_m\mid S]
=\sum_{j=1}^q\Pr[A_m^j\cap B_m^j\mid S]\ge
\sum_{j=1}^q
\left(
\Pr[A_m^j\mid S]
-
\frac{\varepsilon/(2Nq)}{\Pr[S]}
\right)
=
1-\frac{1}{2N}.
\label{eq:good_occurrence}
\end{align}
Hence, by the union bound, it gives
\begin{equation}
\label{eq:initial_success}
\Pr\left[S\cap\bigcap_{m=1}^N E_m\right]
\ge
\varepsilon\left(1-\sum_{m=1}^N\Pr[\overline{E_m}\mid S]\right)
\ge
\frac{\varepsilon}{2}.
\end{equation}

Condition on $S\cap\bigcap_{m=1}^N E_m$.
Fix an occurrence $m$ and suppose that fewer than $n$ distinct
forks have been collected.
Since the original tape belongs to $E_m$, we have
$f\in B_m^{i_m}$, and hence a fresh suffix yields an execution in
$A_m^{i_m}$ with probability at least $\varepsilon/(2Nq)$.
Let
$H_{\mathsf{forbid}}
=
\{h_{i_m}\}\cup
\{h'\mid(h',\cdot)\in\mathcal{O}'_m\}$.
Since $|\mathcal{O}'_m|<n$, we have
$|H_{\mathsf{forbid}}|\le n$, and conditioned on the current
state of the local rewinding process, the probability that a
fresh trial produces an execution in $A_m^{i_m}$ and yields a
new response is at least
$p
\ge
\frac{\varepsilon}{2Nq}
-
\frac{n}{|\mathbb{H}|}
>
\frac{\varepsilon}{2Nq}
-
\frac{\hat\varepsilon}{16Nq}
>
\frac{\varepsilon}{4Nq}$,
where the second inequality uses
$|\mathbb{H}|>16Nnq/\hat\varepsilon$, and the last uses
$\varepsilon>\hat\varepsilon/2$ for the good parameters.

Thus, as long as fewer than $n$ forks have been collected, every
rewinding trial has conditional probability at least
$\varepsilon/(4Nq)$ of producing a new valid fork.
Recall that $T_{\max}=nL$, where now
$L=
\left\lceil
\frac{8Nq}{\hat\varepsilon}
\ln\left(\frac{8Nn}{\hat\varepsilon}\right)
\right\rceil$.
Partition the $nL$ trials into $n$ consecutive blocks of $L$
trials.
Since each trial has conditional success probability at least
$p>\varepsilon/(4Nq)>\hat\varepsilon/(8Nq)$, the probability
that a fixed block produces no new fork is at most
$(1-p)^L
\le
\exp(-pL)
\le
\frac{\hat\varepsilon}{8Nn}$.

Therefore, let $\mathsf{Fail}_m$ be the event that the $m$-th
occurrence fails to collect $n$ distinct forks.
By the union bound over the $n$ blocks and then over the $N$
occurrences,
\begin{equation}
\label{eq:global_failure}
\Pr\left[
\mathsf{Fail}_{\mathsf{any}}
\;\middle|\;
S\cap\bigcap_{m=1}^N E_m
\right]
\le
N\cdot n\cdot\frac{\hat\varepsilon}{8Nn}
=
\frac{\hat\varepsilon}{8},
\end{equation}
where
$\mathsf{Fail}_{\mathsf{any}}=\bigcup_{m=1}^N\mathsf{Fail}_m$.
Since we are conditioning on a good parameter with
$\varepsilon>\hat\varepsilon/2$, we have
$\hat\varepsilon/8<\varepsilon/4$.
Combining~\eqref{eq:initial_success}
and~\eqref{eq:global_failure}, for every
$\mathsf{par}$ with $\varepsilon(\mathsf{par})>\hat\varepsilon/2$,
we have
$\mathsf{frk}(\mathsf{par})
\ge
\frac{\varepsilon}{2}
-
\frac{\varepsilon}{4}
=
\frac{\varepsilon}{4}$.

Finally, average over
$\mathsf{par}\leftarrow\mathsf{GrGen}(1^\lambda)$.
Let
$P'
=
\left\{
\mathsf{par}
\mid
\varepsilon(\mathsf{par})>\hat\varepsilon/2
\right\}$.
For $\mathsf{par}\in P'$, we have
$\mathsf{frk}(\mathsf{par})
\ge
\frac{\varepsilon(\mathsf{par})}{4}$ while for $\mathsf{par}\notin P'$, we use the trivial bound
$\mathsf{frk}(\mathsf{par})\ge 0$.
Therefore,
\begin{align*}
\mathsf{acc}
\left(
\mathsf{MIMFork}_{N,n}^{\mathcal{A}}
\right)
&\ge
\frac{1}{|P|}
\sum_{\mathsf{par}\in P'}
\frac{\varepsilon(\mathsf{par})}{4} 
\\
&=
\frac{1}{4}
\left(
\hat\varepsilon
-
\frac{1}{|P|}
\sum_{\mathsf{par}\notin P'}
\varepsilon(\mathsf{par})
\right)
\ge
\frac{1}{4}
\left(
\hat\varepsilon
-
\frac{\hat\varepsilon}{2}
\right)
=
\frac{\hat\varepsilon}{8}.\qquad\qquad\qed
\end{align*}
\end{proof}

\section{Our Constructions} 
\label{sec:our_construction}

\begin{figure}
    \centering
    \begin{tabular}{|l l|}
    \hline
    \begin{minipage}{0.6\textwidth}
        \begin{tabular}{p{0.9\textwidth}}
            $\mathsf{Setup}(1^{\lambda})$ \\
            \hline
            $(\G, p, g)\leftarrow \mathsf{GrGen}(1^{\lambda})$\\
            $\mathsf{H}_{\textsf{pp}},\mathsf{H}_{\textsf{non}}, \mathsf{H}_{\textsf{sig}}:\{0,1\}^*\rightarrow \Zs{p}$\\
            $\textsf{par}\leftarrow ((\G, p, g),\mathsf{H}_{\textsf{pp}},\mathsf{H}_{\textsf{non}},\mathsf{H}_{\textsf{sig}})$\\
            return \textsf{par}\\
            \\
            $\mathsf{Kg}(\mathsf{par},t,\mathsf{ns}):$\\
            \hline
            For $i=0 \text{ to } t-1$, do\\
            $~~~a_i\in\Zs{p}$\\
            For $i=1 \text{ to }\mathsf{ns}$, do\\
            $~~~\mathsf{sk}_i\leftarrow \sum_{j=0}^{t-1}a_j\times \mathsf{id}_i^j;X_i\leftarrow g^{\mathsf{sk}_i}$\\
            $\tilde{X}=g^{a_0};aux\leftarrow(\{X_i\}_{1\leq i\leq \mathsf{ns}})$\\
            For $i=1 \text{ to }\mathsf{ns}$, do\\
            $~~~\mathsf{st}_i\leftarrow(\tilde{X},aux,\mathsf{sk}_i)$\\
            return $\{\mathsf{st}_i\}_{1\leq i\leq \mathsf{ns}}$\\
            \\
            $\mathsf{SPP}(\mathsf{st}_i)$\\
            \hline
            $r,s\boxed{,v} \xleftarrow{\$}\Zs{p}$\\ 
            $\boxed{e=\mathsf{H}_{\mathsf{pp}}(\tilde{X},X_i,R=g^r,S=g^s,V=g^v)}$\\
            $\boxed{u=v+\mathsf{st}_i.sk\cdot e+re^2+se^3}$\\
            $pp\leftarrow(R,S\boxed{,u,V})$\\
            $\mathsf{st}_i.\mathsf{mapPP}(pp) \leftarrow (r,s\boxed{,v})$\\
            
            return $(pp, \mathsf{st}_i)$\\
            \\
            $\mathsf{LPP}(\mathsf{id}_i, pp, \mathsf{st}_0)$:\\
            \hline
            $\mathsf{st}_0.\textsf{currPP}_i\leftarrow\mathsf{st}_0.\textsf{currPP}_i\cup\{pp\}$\\
            return $\mathsf{st}_0$\\
            \\
            $\mathsf{LR}(M, \mathcal{S},\mathsf{st}_0)$\\
            \hline
            If $\exists \mathsf{id}_i\in\mathcal{S}: \mathsf{st}_0.\textsf{curPP}(\mathsf{id}_i) = \emptyset$, return $(\perp, st_0)$\\
            $lr .\mathsf{msg} \leftarrow M;lr.\mathcal{S}\leftarrow \mathcal{S}$\\
            For $\mathsf{id}_i\in\mathcal{S}$, do\\
            $~~~$Pick $pp_i$ from $\mathsf{st}_0.\textsf{curPP}(\mathsf{id}_i)$\\
            $~~~lr.\mathsf{PP}(\mathsf{id}_i)\leftarrow pp_i$\\
            $~~~\mathsf{st}_0.\mathsf{curPP}(\mathsf{id}_i) \leftarrow \mathsf{st}_0.\mathsf{curPP}(\mathsf{id}_i) \setminus \{pp_i\}$\\            
            return $(lr, st_0)$\\
            \\
            $\mathsf{Vf}(\tilde{X}, M, \sigma)$\\
            \hline
            $(\tilde{R}, z) \leftarrow \sigma$ ; $c \leftarrow \mathsf{H_{sig}}(\tilde{X}, M, \tilde{R})$\\
            return $(g^z=\tilde{R}\tilde{X}^c)$\\
        \end{tabular}
    \end{minipage}
         & 
    \begin{minipage}{0.5\textwidth}
        \begin{tabular}{p{0.9\textwidth}}
              $\mathsf{CompPar}(\tilde{X},lr)$\\
              \hline
              $M\leftarrow lr.\textsf{msg}$\\
              $d \leftarrow \mathsf{H}_{\mathsf{non}}(\tilde{X}, lr)$\\
              $(R_i,S_i)\leftarrow lr.\textsf{PP}(\mathsf{id}_i)$\\
              $\tilde{R} \leftarrow \prod_{\mathsf{id}_i\in lr.\mathcal{S}} R_i (\prod_{\mathsf{id}_i\in lr.\mathcal{S}} S_i)^d$\\        
              $c \leftarrow \mathsf{H}_{\textsf{sig}}(\tilde{X}, M, \tilde{R})$\\
              return $(\tilde{R},c,d)$\\
              \\
              $\mathsf{PS}(lr,\mathsf{st}_i)$\\
              \hline             
              $pp_i\leftarrow lr.\textsf{PP}(\mathsf{id}_i)$\\
              If $\mathsf{st}_i.\mathsf{mapPP}(pp_i)=\perp$, return $(\perp,\mathsf{st}_i)$\\
              $\boxed{\tilde{X},aux\leftarrow \mathsf{st}_i}$\\
              $\boxed{\text{For }\mathsf{id}_i\in lr.\mathcal{S}\text{, do}}$\\
              $~~~\boxed{(R_i,S_i,u_i,V_i)\leftarrow lr.\mathsf{PP}(\mathsf{id}_i)}$\\
              $~~~\boxed{e'=\mathsf{H}_{\mathsf{pp}}(\tilde{X},X_i,R_i,S_i,V_i)}$\\
              $~~~\boxed{\text{If }g^{u_i}\neq V_i X_i^{e'}R_i^{e'^2}S_i^{e'^3}\text{, return}\perp}$\\
              $~~~\boxed{\text{Else }lr.\mathsf{PP}(\mathsf{id}_i)\leftarrow (R_i,S_i)}$\\
              $(r_i, s_i)\leftarrow \mathsf{st}_i.\mathsf{mapPP}(pp_i)$\\
              $\mathsf{st}_i.\mathsf{mapPP}(pp_i)\leftarrow \perp$\\              
              $(\tilde{R},c,d)\leftarrow \mathsf{CompPar}(st_i.\tilde{X},lr)$\\
              $z_i\leftarrow r_i + d\cdot s_i + c\cdot \Delta_{\mathsf{id}_i,\mathcal{S}}(0)\cdot \mathsf{st}_i.\mathsf{sk}$\\
              $psig_i\leftarrow (\tilde{R},z_i)$\\
              return $(psig_i, \mathsf{st}_i)$\\
              \\
              $\mathsf{Agg}(\{psig_i\}_{\mathsf{id}_i\in lr.\mathcal{S}} ,\mathsf{st}_0)$\\
              \hline
              $\tilde{R}\leftarrow \perp; z\leftarrow 0$\\
              For $\mathsf{id}_i\in\mathcal{S}$, do\\
              $~~~$If $\tilde{R}=\perp$, $\tilde{R}\leftarrow psig_i.\tilde{R}$\\
              $~~~$If $\tilde{R}\neq psig_i.\tilde{R}$, return $(\perp,\mathsf{st}_0)$\\
             $~~~z \leftarrow z+psig_i.z$\\
             $\sigma \leftarrow (\tilde{R}, z )$\\
             return $(\sigma,\mathsf{st}_0)$\\
            \\
            $\mathsf{SVf}(\tilde{X}, lr, \sigma)$\\
            \hline
            $(\tilde{R}, z) \leftarrow \sigma$\\
            $(\tilde{R}',c',d) \leftarrow \mathsf{CompPar}(\tilde{X}, lr)$\\
            return $(\tilde{R}=\tilde{R}')\wedge(g^z=\tilde{R}\tilde{X}^{c'})$
        \end{tabular}
    \end{minipage} \\ 
     & \\ 
     \hline
    \end{tabular}
    \caption{The \textsf{FROST2} and $\boxed{\textsf{FROST2}\texttt{+}}$ signature schemes where \textsf{FROST2} contains all but the solid box. \textsf{GrGen} generates cyclic group parameter in which the discrete logarithm assumption holds, $\mathsf{st}$ is state, $\mathsf{ns}<p-1$ is the number of signers, and $t$ is the signing threshold. The function \textsf{mapPP} records dlogs of tokens $pp$, and \textsf{currPP} is the current $pp$ collection.}
    \label{fig:our_frost2}
\end{figure}

In this section we present \textsf{FROST2}\texttt{+} (shown in Figure~\ref{fig:our_frost2}) as a \textsf{TS-SUF-4} variant of \textsf{FROST2} from Bellare et al.~\cite{BCKMTZ22}.
Different from Bellare et al.'s \textsf{ATS[TS, DS]} transformation~\cite{BCKMTZ22}, which requires each signer to manage an extra digital signature key-pair to authenticate the pre-processing token $(R=g^r,S=g^s)$, ours reuses only the existing signing key share and the secret exponents already presented in the token.
That said, \textsf{FROST2}\texttt{+} has the same construction as \textsf{FROST2} except in the \textsf{SPP} and \textsf{PS} algorithms.
The \textsf{SPP} now broadcast not only the pre-processing token $(R,S)$ but also a short authentication transcript $(u,V=g^v)$ with:
\[
u=v+ \mathsf{st}_i.\mathsf{sk}\cdot e+re^2+se^3
\]
which is to be verified during \textsf{PS}.
A signer aborts \textsf{PS} whenever a peer's $(u,V)$ verification fails.
The correctness follows directly from the equations such that if all parties are honest, then for each authentication transcript $(u,V)$, we have $g^u=VX_i^eR^{e^2}S^{e^3}$. In the signing phase, the partial signatures satisfy $z_i=r_i+ds_i+c\Delta_{\mathsf{id}_i,\mathcal{S}}(0)\mathsf{st}_i.\mathsf{sk}$. Aggregating $z=\sum_{\mathsf{id}_i\in\mathcal{S}}z_i$ gives $g^z=\tilde{R}\tilde{X}^c$ and \textsf{Vf} returns 1.
Moreover, as discussed in the subsequent sections, when considering a batch verification of $k$ tokens $(R_1,S_1,\ldots,R_k,S_k)$, the transcript size remains constant.

\subsection{Pre-Computations}
\label{sec:pre_comp}
Although the \textsf{TS-SUF-}$i$ notions of Bellare et al.~\cite{BCKMTZ22} assume a designated leader for collecting and distributing pre-processing tokens and signing requests to simplify security analysis, they explicitly cover scenarios in which corrupted signers take turns acting as the leader. Consequently, a leaderless design introduces no unmodeled attack vectors. 
In particular, attacks captured by \textsf{TS-UF-3} and \textsf{TS-UF-4}, where a corrupted leader forging tokens or requests, are already included in the \textsf{TS-SUF-}$i$ adversary model. 
When an honest signer initiates the protocol, the same corruption bounds apply, preserving the original security guarantees.

If our proposed \textsf{FROST2}\texttt{+} adopts this leaderless approach, signers can collect and verify pre-processing tokens as in the original \textsf{FROST}~\cite{FROST} prior to the partial signing (\textsf{PS}) phase.
This pre-verification step improves runtime performance, as shown in Table~\ref{tab:TS_comparison}, without compromising security.

\subsection{Batch Verification}
\label{sec:batch_verif}
Similar to \textsf{ATS[FROST,DS]} \cite{BTZ22,BCKMTZ22} which can generate an offline signature on multiple tokens for batch verification, our \textsf{FROST2}\texttt{+} generates an offline authentication transcript on multiple tokens for the same purpose.
Specifically, when a signer $\textsf{id}_i$ has $n(\geq 2)$ tokens to be verified, during \textsf{SPP}, the signer computes:
\begin{align*}
    &\{r_{i,j},s_{i,j}\}_{j=1}^n,v \xleftarrow{\$}\Zs{p},~e=\mathsf{H}_{\mathsf{pp}}(\tilde{X},X_i,\{R_{i,j},S_{i,j}\}_{j=1}^n,V=g^v)\\          
    &u=v+\mathsf{st}_i.\mathsf{sk}\cdot e+\sum_{j=1}^nr_{i,j}e^{j+1}+\sum_{j=1}^ns_{i,j}e^{j+1+n},~            pp\leftarrow(\{R_{i,j},S_{i,j}\}_{j=1}^n,u,V)
\end{align*}
where the transcript size $(u,V)$ remains constant.
Subsequently, the verification is done by checking whether:
$$g^{u}= VX_i^{e'}\prod_{j=1}^nR_{i,j}^{e'^{j+1}}\prod_{j=1}^nS_{i,j}^{e'^{j+1+n}}$$
holds where $e'=\mathsf{H}_{\mathsf{pp}}(\tilde{X},X_i,\{R_{i,j},S_{i,j}\}_{j=1}^n,V)$.

Enabling this batch verification feature does not affect the \textsf{TS-SUF-4} security level, but the reduction loss because the run time of reduction is linear to $n$ as shown in Section \ref{sec:security_analysis}. 
If a tighter reduction is desirable, then one may need to explore the Multi-Instance Reset Lemma \cite{KMP16}.

\section{Security Analysis}
\label{sec:security_analysis}
We prove the \textsf{TS-SUF-4} security of \textsf{FROST2}\texttt{+} under the OMDL assumption in the random oracle model. 
Our proving strategy is to separate the security analysis into three parts which handle the events
\begin{enumerate}
    \item $E_1:\mathsf{Vf}(\tilde{X}, M^*, \sigma^*) ~\wedge~ (lr^* = \perp ~\vee~ |S_2(lr^*)| < t - |CS|)$,
    \item $E_2:\mathsf{Vf}(\tilde{X}, M^*, \sigma^*) ~\wedge~ lr^* \ne \perp ~\wedge~  S_2(lr^*)\ne S_3(lr^*)$,
    \item $E_3:\mathsf{Vf}(\tilde{X}, M^*, \sigma^*) ~\wedge~ lr^* \ne \perp ~\wedge ~ \neg(S_2(lr^*)= S_3(lr^*) = S_4(lr^*))$,
\end{enumerate}
respectively.
We first present the security proofs in the centralized key generation setting to ease exposition, and show how to extend them to the decentralized key generation setting in Section \ref{sec:dkg_proof}.

\begin{theorem}
\label{thm:frost2+}
Let $\mathcal{F}$ be a \textsf{TS-SUF-4} forger against $\mathsf{FROST2+}$ making at most $q$ queries to the random oracle. Suppose the success probability of $\mathcal{F}$ is bounded by $\mathsf{Adv}_{\mathsf{FROST2+}}^{\mathsf{ts\text{-}suf\text{-}4}}(\mathcal{F}) \le \Pr[E_1] + \Pr[E_2] + \Pr[E_3]$. Then, there exist three OMDL adversaries $\mathcal{B}_1,\mathcal{B}_2$ and $\mathcal{B}_3$ such that:
\begin{align*}
\mathsf{Adv}_{\mathsf{FROST2+}}^{\mathsf{ts\text{-}suf\text{-}4}}(\mathcal{F}) 
\le \sqrt{q \cdot \mathsf{Adv}_{\G}^{\mathsf{omdl}}(\mathcal{B}_1)} + 8 \mathsf{Adv}_{\mathbb{G}}^{\mathsf{omdl}}(\mathcal{B}_2) + 8 \mathsf{Adv}_{\mathbb{G}}^{\mathsf{omdl}}(\mathcal{B}_3) + \mathsf{negl}
\end{align*}
where $\mathcal{B}_1$ and $\mathcal{B}_3$ issue at most $2q_s+t$ queries to the \textsc{Chall} oracle, while $\mathcal{B}_2$ issues at most $2q_s$ queries. 
Moreover, their running times are bounded above by
\begin{align*}
T(\mathcal{B}_1) &\le 2 T(\mathcal{F}) + T(q),\\
T(\mathcal{B}_2) &\le \frac{16\mathsf{ns}^2q \ln(16\mathsf{ns}/\Pr[E_2])}{\Pr[E_2]} T(\F{})+T(q),\text{ and}\\ 
T(\mathcal{B}_3) &\le \frac{24 q \ln(24 /\Pr[E_3])}{\Pr[E_3]}\cdot T(\F{}) +T(q), 
\end{align*}
respectively,
where $t\le\mathsf{ns}$, and $T(q)$ denotes the time required for the relevant group operations and exponentiations.
\end{theorem}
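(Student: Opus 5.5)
The proof splits the forger's success according to the three events $E_1,E_2,E_3$. We assume, as the statement does, that $\mathsf{Adv}^{\mathsf{ts\text{-}suf\text{-}4}}_{\mathsf{FROST2+}}(\F{})\le\Pr[E_1]+\Pr[E_2]+\Pr[E_3]$. The three terms are then bounded separately by three OMDL reductions $\mathcal{B}_1,\mathcal{B}_2,\mathcal{B}_3$. All three share the same simulation core, inherited from Bellare et al.'s \textsf{TS-SUF-2} proof of \textsf{FROST2}.

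The simulation works as follows. The reduction obtains $t-|CS|$ challenges and embeds them, together with the corrupted shares, into $\tilde{X}$ and the honest $X_i$ by Lagrange interpolation in the exponent. For every \textsc{PPO} query it requests two challenges $R=g^r$ and $S=g^s$. It simulates the authentication transcript $(u,V)$ honest-verifier-ZK style: pick $u$ and $e$ at random, set $V=g^uX_i^{-e}R^{-e^2}S^{-e^3}$, and program $\mathsf{H}_{\mathsf{pp}}(\tilde{X},X_i,R,S,V)=e$, aborting with negligible probability on collisions. On \textsc{PSignO} it verifies all transcripts in $lr$ as in $\mathsf{PS}$, computes $(\tilde{R},c,d)$, and answers $z_i$ with a single \textsc{Dlog} query on $R_iS_i^dX_i^{c\Delta_{\mathsf{id}_i,\mathcal{S}}(0)}$. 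This accounts for the $2q_s(+t)$ challenge counts.

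\emph{Event $E_1$.} This is the \textsf{TS-SUF-2} case and follows Bellare et al. essentially verbatim. Fork once on the $\mathsf{H}_{\mathsf{sig}}$ query defining the forgery, using Lemma~\ref{lem:fork}, to obtain two forgeries with the same $\tilde{R}$ and different $c$. This yields $\mathrm{dlog}\,\tilde{X}$ and hence the honest shares. One then argues, as in the original proof, that since fewer than $t-|CS|$ honest partial signatures were released on $lr^*$, the \textsc{Dlog} queries made remain one fewer than the challenges. This gives $\sqrt{q\,\mathsf{Adv}(\mathcal{B}_1)}$, with running time about $2T(\F{})$.

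\emph{Event $E_2$.} Here some honest signer answered on $lr^*$ while a non-honest token appears for an honest $\mathsf{id}_j\in S_2(lr^*)\setminus S_3(lr^*)$. Since $\mathsf{PS}$ only answers if every transcript in $lr^*$ verifies, $\F{}$ must have produced a valid $(u_j,V_j)$ for a token $(R_j,S_j)$ not issued by the oracle. We fork on the $\mathsf{H}_{\mathsf{pp}}$ query for that token using Lemma~\ref{lem:mimfork}, with $n=3$ extra forks, to get four distinct $e$ values with the same $V_j,R_j,S_j$. The relation $u=v+x_je+r e^2+se^3$ is a degree-$3$ polynomial in $e$, so the four points form a Vandermonde system that yields $x_j$, which the reduction did not know. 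The multiplicity $N$ is at most $\mathsf{ns}$ (one per candidate honest signer and forked token), which gives the $\mathsf{ns}^2$ factor in the time bound. Extracting $x_j$ costs no challenge for the $t$ key part, so after learning shares the reduction solves the token challenges, which yields only $2q_s$ challenges. The loss is the factor $8$ from the lemma.

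\emph{Event $E_3$.} Here $S_2=S_3$ but some honest $\mathsf{id}_j\in lr^*.\mathcal{S}$ never responded. Every honest token in $lr^*$ is oracle-generated, yet the forgery still verifies. We fork on the $\mathsf{H}_{\mathsf{sig}}$ or $\mathsf{H}_{\mathsf{non}}$ query with Lemma~\ref{lem:mimfork} for $N=1$ and a constant $n$, which matches the constant $24$ in the time bound. Combining the distinct $d$ and $c$ values linearly, we extract $r_j$ and $s_j$ for the unanswered token together with $\mathrm{dlog}\,\tilde{X}$. This gives one more discrete log than \textsc{Dlog} queries.

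The main obstacle is $E_2$. First, one must show the forked transcripts really share $(R_j,S_j,V_j)$ and the prefix, and that the extracted $x_j$ (or $r,s$) is genuinely new information relative to the \textsc{Dlog} queries already spent. Second, one must account for the multi-instance indexing, since the adversary may forge transcripts for several honest signers. I would first handle this by guessing nothing, taking $I$ as the multiset of all offending $\mathsf{H}_{\mathsf{pp}}$ indices, and bounding $N\le\mathsf{ns}$.
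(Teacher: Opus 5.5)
Your treatment of $E_1$ matches the paper. There is, however, a genuine gap, and it starts with the definitions of $E_2$ and $E_3$.

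\textbf{The events are mixed up.} Because $\mathsf{PS}$ refuses unless $\mathsf{st}_i.\mathsf{mapPP}(pp_i)\neq\perp$, one always has $S_2(lr)\subseteq S_3(lr)\subseteq S_4(lr)$. So your $E_2$ scenario, $\mathsf{id}_j\in S_2(lr^*)\setminus S_3(lr^*)$, is empty. The actual $E_2$ is $S_2(lr^*)\subset S_3(lr^*)$ strictly: an honest, oracle-generated token in $lr^*$ whose owner never signed $lr^*$. The actual $E_3$ (modulo $E_2$) is $S_2=S_3\subset S_4$: an honest signer whose token in $lr^*$ was not oracle-generated. The technique in your $E_2$ paragraph is fork the $\mathsf{H_{pp}}$ query with three extra forks and solve the cubic Vandermonde system for $x_j$. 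That is exactly the paper's treatment of $E_3$: $\mathsf{MIMFork}_{1,3}$, interpolate $x_{i^*}$ with the corrupted shares, solve $A_0,\dots,A_{t-1}$. This is why $\mathcal{B}_3$ uses $2q_s+t$ challenges and the constant is $24=8\cdot 1\cdot 3$. That extraction only gives a ``one more'' if the key carries challenges. With the $2q_s$ challenges you claim, the reduction already knows $x_j$ and gains nothing.

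\textbf{The uncovered case.} Your $E_3$ paragraph is internally inconsistent: if $S_2=S_3$ and an honest $\mathsf{id}_j\in lr^*.\mathcal{S}$ never responded, then its token is not oracle-generated. What you actually treat there is $S_2\subset S_3=S_4$. This is precisely the configuration of Bellare et al.'s \textsf{TS-UF-3} attack on \textsf{FROST2} in the appendix ($S_2=\{1\}$, $S_3=S_4=\{1,2\}$). Your argument never uses the transcripts $(u,V)$, so it would equally prove \textsf{FROST2} secure here, which the attack refutes. It breaks concretely in two places:
\begin{itemize}
\item $\tilde R^*$ contains corrupted signers' tokens with unknown exponents, so from $\sum_i(r_i+ds_i)$ you cannot isolate $r_j,s_j$.
\item Forking at $\mathsf{H_{non}}(lr^*)$ or $\mathsf{H_{sig}}$ forces the $\ge t-|CS|$ honest responders to re-sign $lr^*$ under fresh $(d,c)$. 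Each re-signature costs a \textsc{Dlog} query, which is exactly the barrier that stops \textsf{FROST2} at \textsf{TS-SUF-2}.
\end{itemize}

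\textbf{The missing idea.} Use $(u,V)$ as a proof of knowledge of the adversary's token exponents. The paper's $\mathcal{B}_2$ proceeds as follows:
\begin{itemize}
\item It samples $f$ itself and embeds challenges only in honest tokens, using $2q_s$ challenges. Hence $\alpha=z^*-c^*f(0)=\sum_{i\in lr^*.\mathcal{S}}(r_i+d^*s_i)$ is known.
\item It runs $\mathsf{MIMFork}_{N,2}$ on the $\mathsf{H_{pp}}$ queries of all $N=|lr^*.\mathcal{S}\setminus S_3(lr^*)|<\mathsf{ns}$ adversary-generated tokens in $lr^*$. Three transcripts suffice because $x_i$ is known.
\item It aborts each rewound run once the needed token is obtained, so no extra \textsc{Dlog} queries are spent.
\item It queries \textsc{Dlog} on every honest token exponent except $s^*$ for some $\mathsf{id}^*\in S_3\setminus S_2$, and solves for $s^*$ from $\alpha$.
\end{itemize}
This $n=2$ is also what yields $16\mathsf{ns}^2=8\cdot\mathsf{ns}\cdot 2\cdot\mathsf{ns}$ in the running time; your $n=3$ would give $24\mathsf{ns}^2$.
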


\begin{proof}
The proof follows from Lemmas \ref{lem:tssuf4_1}, \ref{lem:tssuf4_2} and \ref{lem:tssuf4_3}. \qed
\end{proof}

Lemma \ref{lem:tssuf4_1} is intended for the event $E_1$, where the forgery must contain at least one forged $psig_i$. The proof is similar to the \textsf{TS-SUF-2} proof of \textsf{FROST2}~\cite[Lemma 5.2]{BTZ22}, except that we add an additional $\textsc{RO}[\mathsf{H_{pp}}]$ to handle the authentication transcript $(u,V)$ and adjust the sequence of hash queries accordingly.


\begin{lemma}
\label{lem:tssuf4_1}
    There exists an OMDL adversary $\mathcal{B}_1$ making at most $2q_s+t$ queries to the \textsc{Chall} oracle and $q$ queries to the random oracle such that 
    $$\Pr[E_1]\le \sqrt{q \cdot \mathsf{Adv}_{\G}^{\mathsf{omdl}}(\mathcal{B}_1) + \mathsf{negl}}.$$
    Moreover, $\mathcal{B}_1$ runs in time roughly twice that of \F{}, plus the time $T(q)$ required to perform exponentiations and group operations.
\end{lemma}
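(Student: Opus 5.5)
We follow the \textsf{TS-SUF-2} reduction for \textsf{FROST2} of Bellare et al.~\cite[Lemma 5.2]{BTZ22}. The plan is to build a wrapper $\mathcal{A}$ around $\mathcal{F}$ that simulates the game using OMDL challenges, outputs the index $i_{sig}$ of the $\mathsf{H_{sig}}$ query that defines the forgery's challenge $c$, and then apply Lemma~\ref{lem:fork} with $SS=[1..q]$. This gives $\mathsf{acc}(\mathsf{Fork}^{\mathcal{A}})\ge \Pr[E_1]^2/q$. From the two forked forgeries, $\mathcal{B}_1$ extracts enough discrete logarithms to win OMDL with one more solution than \textsc{Dlog} queries, which yields the stated square-root bound. $\mathcal{B}_1$ runs $\mathcal{F}$ twice, plus $T(q)$ for the group operations.

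\textbf{Simulation.} $\mathcal{B}_1$ first queries \textsc{Chall} $t$ times and embeds the answers in the key. Following the CKG simulation of \cite{BTZ22}, it sets $\tilde{X}$ and the honest public shares $X_i$ as fixed linear combinations of the challenges. The corrupted shares $\{x_i\}_{\mathsf{id}_i\in CS}$ are sampled uniformly, and each honest $X_i$ is obtained by Lagrange interpolation in the exponent.

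For each \textsc{PPO} query it draws two fresh challenges as $(R,S)$, for $2q_s$ in total, which gives $2q_s+t$. The new ingredient is the transcript $(u,V)$, whose witness $v$ would require knowledge of $\mathsf{sk}_i,r,s$. We simulate it in the usual Schnorr-HVZK way, without any extra \textsc{Dlog} call. Sample $e,u\in\Zs{p}$ and set $V=g^uX_i^{-e}R^{-e^2}S^{-e^3}$. Then program $\mathsf{H_{pp}}(\tilde{X},X_i,R,S,V)=e$, aborting if this point was already queried. The abort probability is at most $q/p$ per query, and these terms are collected in $\mathsf{negl}$.

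\textsf{PS} queries are answered as in \cite{BTZ22}. The verification of the tokens in $lr$ is computed honestly from the public values. $\mathsf{H_{non}}$ is programmed lazily. $\mathcal{B}_1$ then obtains $z_i$ with one \textsc{Dlog} query on $R_iS_i^{d}X_i^{c\Delta_{\mathsf{id}_i,\mathcal{S}}(0)}$. Since $\mathsf{H_{non}}(\tilde{X},lr)$ and $\mathsf{H_{sig}}$ are fixed before the partial signature is answered, the query ordering is the same as in \cite{BTZ22}, with the $\mathsf{H_{pp}}$ queries simply interleaved.

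\textbf{Counting and extraction.} For each $lr$ there are at most $|S_2(lr)|<t-|CS|$ \textsc{Dlog} queries per token pair $(R_i,S_i)$. Under $E_1$, the forged $lr^*$ is either absent or has at most $t-|CS|-1$ honest responders. After forking at the $\mathsf{H_{sig}}$ index, the two forgeries $(\tilde{R},z),(\tilde{R},z')$ share $\tilde{R}$ but have $c\neq c'$. This gives $\mathrm{dlog}\,\tilde{X}=(z-z')/(c-c')$, which recovers $a_0$.

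Combined with the partial-signature answers and the relations among the challenges, this should produce a full-rank linear system. That system determines all $t$ key challenges and every token challenge $r_i,s_i$, whether used or unused. Unused tokens can be recovered through the remaining \textsc{Dlog} budget, as in \cite{BTZ22}: for tokens never signed with, query \textsc{Dlog} on $R$ and $S$ directly, and for $lr$ with fewer responders, use the missing equations.

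We must also check that a forked run stays consistent with the simulated transcripts. This holds because the $\mathsf{H_{pp}}$ programming depends only on the prefix before $i_{sig}$, or is re-sampled identically on the suffix.

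\textbf{Main obstacle.} The hardest step is the rank and counting argument. It must show that the number of \textsc{Dlog} queries is strictly below the number of unknowns recovered. The authentication transcripts are simulated without \textsc{Dlog}, so they do not enter the count, but the argument must also confirm that an adversarial $\tilde{R}$ cannot be a combination that collapses the system. Following \cite{BTZ22}, we would handle this by expressing $\tilde{R}$ through the $\mathsf{H_{non}}$-programmed values, and we would bound the degenerate case by a collision or guessing probability, which is absorbed into $\mathsf{negl}$.
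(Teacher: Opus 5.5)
\textbf{There is a genuine gap: the step that actually wins OMDL is missing.} Your simulation of the new ingredient matches the paper: programming $\mathsf{H_{pp}}$ so that $(u,V)$ is produced HVZK-style with no \textsc{Dlog} call. The extraction and counting, however, are left as ``this should produce a full-rank linear system'', and the mechanism you sketch for them is not the right one. The issue is not that an adversarial $\tilde{R}^*$ might collapse the system; $a_0=(z^*-z^{*\prime})/(c-c')$ works for any $\tilde{R}^*$ fixed before the fork. The issue is that every \textsc{PSignO} answer given after the fork point is a \emph{separate} \textsc{Dlog} query in each of the two runs, and each doubled query has to be paid for with new information.

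\textbf{What the paper does, and why your lazy programming fails.} Following \cite{BTZ22}, the paper programs $\mathsf{H_{sig}}(\tilde{X},lr.\mathsf{msg},\tilde{R})$ at the very moment $\mathsf{H_{non}}(\tilde{X},lr)$ is answered, using indices $2k-1,2k$. As a result, every leader request other than $lr^*$ has its $d$ and its $c$ both before the fork or both after it. With independent lazy programming, a request $lr$ can have its $\mathsf{H_{non}}$ query before the fork and its $\mathsf{H_{sig}}$ query after it. Each honest response to such an $lr$ then gives two different \textsc{Dlog} queries with the same $d$ and different $c$. These reveal only $x_i$ and $r_j+ds_j$, so $s_j$ costs a third query. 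If $lr$ is answered by all honest signers, the \textsc{Dlog} count reaches the number of challenges. Your claim of ``at most $|S_2(lr)|<t-|CS|$'' queries wrongly excludes this: under $E_1$ only $lr^*$ is restricted. Your remark that $d$ and $c$ are fixed ``before the partial signature is answered'' is true but is not the relevant ordering; what matters is their position relative to the fork.

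\textbf{What the missing accounting looks like.} With the coupling in place, you still need three steps.
\begin{itemize}
\item For each honest $\mathsf{id}_i\in S_2(lr^*)$, the two forked partial signatures share $d^*$ but have distinct $c$. This yields $x_i$.
\item Query \textsc{Dlog} on $t-|CS|-|S_2(lr^*)|-1$ further honest $X_i$ and interpolate $f$.
\item Do a per-token case analysis: unused; used in one run only; used after the fork with $d\neq d'$; used in $lr^*$; used identically before the fork, so the query is shared. This shows one fewer \textsc{Dlog} query than challenges, namely $2q_s+t-1$ against $2q_s+t$.
\end{itemize}
None of these steps appear in your proposal, and they are the substance of the lemma.

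A minor point on the key simulation: if you sample the corrupted shares yourself, only $t-|CS|$ key challenges can be embedded, not $t$. This is harmless, but the count must be adjusted to match.
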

    
\begin{proof}
Let \F{} be an adversary as described in the theorem. Denote the output message-signature pair of \F{} as $(M^*, sig^* = (\tilde{R}^*, z^*))$. Without loss of generality, we assume \F{} always queries \textsc{RO} on $\mathsf{H_{sig}}(\tilde{X}, M^*, \tilde{R}^*)$ before \F{} returns and always queries \textsc{RO} on $\mathsf{H_{pp}}(\tilde{X}, X_i, R, S, V)$ and $\mathsf{H_{non}}(\tilde{X}, lr)$ prior to the query $\textsc{PSignO}(\mathsf{id}_i, lr)$ for some $\mathsf{id}_i$ and $lr$.
This adds up to $2q_s$ additional \textsc{RO} queries, and we let $q = q_h + 2q_s + 1$ where $q_h=q_{pp} + q_{non} + q_{sig}$.

Let $lr^*$ be the leader query whose nonce query $\mathsf{H_{non}}(\tilde{X}, lr^*)$ is the last such query before the signature hash query $\mathsf{H_{sig}}(\tilde{X}, M^*, \tilde{R}^*)$ and which satisfies $\mathsf{SVf}(\tilde{X},$ $ lr^*, \sigma^*) = \mathsf{true}$ (if none exist, $lr^*=\perp$). Denote the event $E_1$ as
\begin{align*}
\mathsf{Vf}(\tilde{X}, M^*, \sigma^*) \wedge (&lr^* = \perp \vee~ |S_2(lr^*)| < t - |CS|).    
\end{align*}

We first construct an algorithm \C compatible with the syntax in Lemma~\ref{lem:fork}. The input of \C consists of $2q_s + t$ uniformly random group elements $\mathbf{A}=\{A_0, \ldots, A_{t-1}\}, \mathbf{R}=\{R_1, \ldots, R_{q_s}\}, \mathbf{S}=\{S_1,\ldots,S_{q_s}\} \in \G$ and uniformly random integers $h_1,\ldots, h_{2(q-q_{pp})},h_{2(q-q_{pp})+1},\ldots,h_{2q} \in \mathbb{Z}_p$. Also, \C can access an oracle \textsc{Dlog}, which on input $X \in \G$ outputs $\mathsf{DL}_{\G,g}(X)$. 
To start with, \C initializes all the states $\mathsf{st}_0, \ldots, \mathsf{st}_{\mathsf{ns}}$. 
In addition, it initializes counters $\mathsf{ctr}_s, \mathsf{ctr}_h$ to 0, $\mathsf{ctr}_{pp}$ to $2(q-q_{pp})$ and a function \textsf{dt} to an empty table, which are used to record the \textsc{Dlog} query related to each $(R_j, S_j)$. \C also initializes $\mathsf{curLR} \leftarrow \emptyset$ to record all leader requests that appear during the game and initializes $\mathsf{ctrPP}$ to an empty table, which are used to record the counter corresponding to each token generated by honest signers. 
We also use a flag \textsf{BadPPO} to denote whether a bad event occurs, which is initially set to \textsf{false}. Then, \C runs \F{} with access to the oracles $\widetilde{\textsc{Init}}, \widetilde{\textsc{PPO}}, \widetilde{\textsc{PSignO}},\widetilde{\textsc{RO}}$, which are simulated as follows.

\medskip
\noindent $\widetilde{\textsc{Init}}(CS)$: \C initializes $\mathsf{H}$ to an empty table and sets
$\tilde{X} \leftarrow A_0,X_i = \prod_{j=0}^{t-1} A_j^{\mathsf{id}_i^{\,j}}$ for $i \in [1..\mathsf{ns}]$,
and $\mathsf{st}_i.\mathsf{sk} = \textsc{Dlog}(X_i)$ for $\mathsf{id}_i \in CS$. Finally, \C returns $\tilde{X}$, $\mathsf{aux} = (X_1, \ldots, X_{\mathsf{ns}})$, $\{\mathsf{st}_i.\mathsf{sk}\}_{\mathsf{id}_i \in CS}$.

\medskip
\noindent $\widetilde{\textsc{RO}}[\mathsf{H_{pp}}](x)$: 
If $\mathsf{H_{pp}}(x)\neq \perp$, \C returns $\mathsf{H_{pp}}(x)$. 
Otherwise, \C parses $x$ as $(\tilde{X},X_i,R,S,V)$. 
If the parsing fails or $\tilde{X}\neq A_0$, \C sets $\mathsf{H_{pp}}(x)\xleftarrow{\$}\Zs{p}$ and returns $\mathsf{H_{pp}}(x)$.
Otherwise, \C sets $\mathsf{ctr}_{pp}\leftarrow\mathsf{ctr}_{pp} +1$ and $\mathsf{H_{pp}}(x)\gets h_{\mathsf{ctr}_{pp}}$ then returns $\mathsf{H_{pp}}(x)$.

\medskip
\noindent $\widetilde{\textsc{RO}}[\mathsf{H_{non}}](x)$: If $\mathsf{H_{non}}(x) \ne \perp$, \C returns $\mathsf{H_{non}}(x)$. Otherwise, \C parses $x$ as $(\tilde{X}, lr)$. If the parsing fails or $\tilde{X} \ne A_0$, \C sets $\mathsf{H_{non}}(x) \gets \Zs{p}$ and returns $\mathsf{H_{non}}(x)$. Else, \C sets $\mathsf{ctr}_h\leftarrow\mathsf{ctr}_h +1$ and $\mathsf{H_{non}}(x) \leftarrow h_{2\mathsf{ctr}_h - 1}$. It then computes
$\tilde{R} \leftarrow \prod_{\mathsf{id}_i \in lr.\mathcal{S}} R_i S_i^{\,h_{2\mathsf{ctr}_h - 1}}$,
where $(R_i, S_i, u_i, V_i) \gets lr.\mathsf{PP}(\mathsf{id}_i)$. 
If $\mathsf{H_{sig}}(\tilde{X}, lr.\mathsf{msg}, \tilde{R}) = \perp$, \C sets $\mathsf{H_{sig}}(\tilde{X}, lr.\mathsf{msg}, \tilde{R}) = h_{2\mathsf{ctr}_h}$. 
In addition, \C defines $\mathsf{mapLR}(\mathsf{ctr}_h) = lr$ and sets $\mathsf{curLR} \leftarrow \mathsf{curLR} \cup \{lr\}$. Finally, \C returns $\mathsf{H_{non}}(x)$.

\medskip
\noindent $\widetilde{\textsc{RO}}[\mathsf{H_{sig}}](x)$: If $\mathsf{H_{sig}}(x) \ne \perp$, \C returns $\mathsf{H_{sig}}(x)$. Otherwise, \C parses $x$ as $(\widetilde{X}, M, \tilde{R})$. If the parsing fails or $\widetilde{X} \ne X$, \C sets $\mathsf{H_{sig}}(x) \xleftarrow{\$} \Zs{p}$ and returns $\mathsf{H_{sig}}(x)$. Else, \C sets $\mathsf{ctr}_h\leftarrow\mathsf{ctr}_h+1$ and $\mathsf{H_{sig}}(x) \leftarrow h_{2\mathsf{ctr}_h}$. Finally, \C returns $\mathsf{H_{sig}}(x)$.

\medskip
\noindent $\widetilde{\mathsf{PPO}}(\mathsf{id}_i)$ query: 
Same as in the game $\mathbf{G}_{\mathsf{FROST2}\texttt{+}}^{\textsf{ts-suf-3}}$, except that, in the simulation of \textsf{SPP}, \C increments $\mathsf{ctr}_s$ and $\mathsf{ctr}_{pp}$ by 1, chooses $u_{\mathsf{ctr}_{pp}}\xleftarrow{\$}\Zs{p}$ and computes
$V_{\mathsf{ctr}_{pp}} = g^{u_{\mathsf{ctr}_{pp}}}/X_i^{h_{\mathsf{ctr}_{pp}}}R_{\mathsf{ctr}_s}^{h_{\mathsf{ctr}_{pp}}^2}S_{\mathsf{ctr}_s}^{h_{\mathsf{ctr}_{pp}}^3}$.
If $\mathsf{H_{pp}}(\tilde{X},X_i,R_{\mathsf{ctr}_s}, S_{\mathsf{ctr}_s}, V_{\mathsf{ctr}_h})=\perp$, \C sets $\mathsf{H_{pp}}(\tilde{X},X_i,R_{\mathsf{ctr}_s}, S_{\mathsf{ctr}_s}, V_{\mathsf{ctr}_h})\gets h_{\mathsf{ctr}_{pp}}$, $pp \leftarrow (R_{\mathsf{ctr}_s}, S_{\mathsf{ctr}_s}, u_{\mathsf{ctr}_h}, V_{\mathsf{ctr}_h})$, $\mathsf{st}_i.\mathsf{mapPP}(pp) \leftarrow (0,0,0,0)$, and $\mathsf{ctrPP}(\mathsf{id}_i, pp) \leftarrow \mathsf{ctr}_s$. 
Else if $\mathsf{H_{pp}}(\tilde{X},X_i,R_{\mathsf{ctr}_s}, S_{\mathsf{ctr}_s}, V_{\mathsf{ctr}_{pp}})\ne\perp$, \textsf{BadPPO} is set to \textsf{true}.
If there exists $lr \in \mathsf{curLR}$ such that $lr.\mathsf{PP}(\mathsf{id}_i) = (R_{\mathsf{ctr}_s}, S_{\mathsf{ctr}_s}, u_{\mathsf{ctr}_{pp}}', V_{\mathsf{ctr}_{pp}}')$ for some valid $(u_{\mathsf{ctr}_{pp}}'\neq u_{\mathsf{ctr}_{pp}},V_{\mathsf{ctr}_{pp}}'\neq V_{\mathsf{ctr}_{pp}})$, then \textsf{BadPPO} is also set to \textsf{true}.

\medskip
\noindent $\widetilde{\textsc{PSignO}}(\mathsf{id}_i, lr)$ query: Same as in the game $\mathbf{G}_{\mathsf{FROST2}\texttt{+}}^{\textsf{ts-suf-3}}$, except that, in the simulation of \textsf{PS}, if $\mathsf{st}_i.\mathsf{mapPP}(pp) \ne \perp$, \C sets
$z_i \gets \textsc{Dlog}\bigl( R_j S_j^{d_i} X_i^{\,c \Delta_{\mathsf{id}_i,lr.\mathcal{S}}(0)} \bigr)$,
where $j \gets \mathsf{ctrPP}(\mathsf{id}_i, lr.\mathsf{PP}(\mathsf{id}_i))$. In addition, \C sets $\mathsf{dt}(j) \gets (\mathsf{id}_i, k, d_i, c\Delta_{\mathsf{id}_i,lr.\mathcal{S}}(0), z_i)$, where $k$ denotes the index such that $\mathsf{H_{non}}(\tilde{X}, lr)$ is set to $h_{2k-1}$ during the simulation.\\

After receiving the output $(lr^*=\{M^*,\mathcal{S},\mathsf{PP}\}, \sigma^* = (\tilde{R}^*, z^*))$ from \F{}, \C returns $\perp$ if $\textsf{BadPPO}=\textsf{true}$, or $E_1$ does not occur. 
Else, \C finds the index $i_{sig}$ such that $\mathsf{H_{sig}}(\tilde{X}, M^*, \tilde{R}^*)$ is set to $h_{i_{sig}}$ during the simulation. By our assumption on \F{}, we know such $i_{sig}$ must exist. 
Then, \C returns $(i_{sig}, \mathsf{Out})$, where \textsf{Out} consists of all variables received or generated by \C.

\medskip
\noindent \textit{Analysis of \C.} To apply Lemma~\ref{lem:fork}, we define $SS := \{2k\}_{k \in [1..q-q_{pp}]}$ and let \textsf{GrGen} be the algorithm that samples $2q_s + t$ group elements uniformly from $\G$ and outputs them. From the simulation, the output index $i_{sig}$ of \C is always in $SS$. Moreover, it is clear that \C simulates the game $\mathbf{G}_{\mathsf{FROST2}\texttt{+}}^{\textsf{ts-suf-4}}$ perfectly when all the inputs to \C are uniformly sampled from their domain, which implies $\mathsf{acc}(\mathcal{C}) \ge \Pr[E_1] - \Pr[\mathsf{BadPPO}]$, where $\Pr[E_1]$ denotes the probability in the original $\mathbf{G}_{\mathsf{FROST2}\texttt{+}}^{\textsf{ts-suf-3}}$ game with \F{} as in Lemma~\ref{lem:fork}, and $\Pr[\mathsf{BadPPO}]$ denotes the probability that $\textsf{BadPPO}=\textsf{true}$ at the end of \C's execution. 
Since each pair $(R_j, S_j)$ is sampled uniformly from $\G$ and the corresponding authentication transcript $(u,V)$ is generated by programming $\textsc{RO}[\mathsf{H_{pp}}]$, for each $\textsc{PPO}(\mathsf{id}_i)$ query, the probability \textsf{BadPPO} is set to \textsf{true} is negligible and at most $q_h / p + |\mathsf{curLR}|/p=2q_h/p$. 
Therefore, we have $\Pr[\mathsf{BadPPO}] \le 2q_s q_h / p$. By Lemma~\ref{lem:fork},
\begin{align*}
\mathsf{acc}(\mathsf{Fork}^{\mathcal{C}}) \ge \frac{(\Pr[E_1] - 2q_s q_h / p)^2}{q-q_{pp}} 
\ge \frac{\Pr[E_1]^2}{q} - \frac{4\Pr[E_1] q_s q_h}{p \cdot q}
\ge \frac{\Pr[E_1]^2}{q} - \frac{2q_h}{p}.    
\end{align*}

\medskip
\noindent \textit{Construct $\mathcal{B}_1$ from $\mathsf{Fork}^{\mathcal{C}}$.} We now construct the OMDL adversary $\mathcal{B}_1$ using $\mathsf{Fork}^{\mathcal{C}}$ and the available \textsc{Dlog} oracle. To begin, $\mathcal{B}_1$ queries \textsc{Init} and then queries \textsc{Chall} oracle $2q_s + t$ times to generate the three lists of group elements $(\mathbf{A},\mathbf{R},\mathbf{S})$ as the input to run $\mathsf{Fork}^{\mathcal{C}}$. 
Without loss of generality, we can assume all the OMDL challenges are distinct, since otherwise $\mathcal{B}_1$ can solve them trivially. 
All \textsc{Dlog} queries from $\mathsf{Fork}^{\mathcal{C}}$ are forwarded by $\mathcal{B}_1$ to the \textsc{Dlog} oracle in the game $\mathbf{G}_{\G}^{\mathsf{omdl}}$. Denote by \textsf{BadHash} the event that any two of the scalars $h_{1}, h_{1}', \ldots, h_{q}, h_{q}'$ generated in the execution of $\mathsf{Fork}^{\mathcal{C}}$ are equal. 
Since these scalars are sampled uniformly from $\Zs{p}$, we have $\Pr[\mathsf{BadHash}] \le 2q^2 / p$.

It remains to show that if $\mathsf{Fork}^{\mathcal{C}}$ returns $(i_{sig}, \mathsf{Out}, \mathsf{Out}')$ and \textsf{BadHash} does not occur, $\mathcal{B}_1$ wins the game $\mathbf{G}_{\G}^{\mathsf{omdl}}$.
Absorbing all strictly negligible terms into a function $\mathsf{negl}$, we have
\[
\mathsf{Adv}_{\mathbb{G}}^{\mathsf{omdl}}(\mathcal{B}_1) \ge \mathsf{acc}(\mathsf{Fork}^{\mathcal{C}}) - \Pr[\mathsf{BadHash}] \ge \frac{\Pr[E_1]^2}{q} - \frac{3q^2}{p}= \frac{\Pr[E_1]^2}{q} - \mathsf{negl}.
\]

We use the notations introduced in the description of \C to denote the variables in \textsf{Out} and append a prime $(\cdot)'$ to denote the variables in $\mathsf{Out}'$. By the execution of $\mathsf{Fork}^{\mathcal{C}}$, we have $(\tilde{X}, M^*, \tilde{R}^*) = (\tilde{X}', M^{*'}, \tilde{R}^{*'})$ and $\tilde{X} = A_0$. Since $i_{sig} \in SS$, let $k^* = i_{sig} / 2$ where $\mathsf{mapLR}(k^*) = lr^*$. Furthermore, if $\mathsf{mapLR}(k^*) = \perp$, then $lr^*=\perp$ as well.

We first show how to compute the discrete logarithm of $A_0, \ldots, A_{t-1}$. Denote these discrete logarithms by $a_0, \ldots, a_{t-1}$ and define $f(x) := \sum_{i=0}^{t-1} a_i x^i$. Since \textsf{BadHash} does not occur, we have $\mathsf{H_{sig}}(\tilde{X}, M^*, \tilde{R}^*) = h_{i_{sig}} \ne h_{i_{sig}}' = \mathsf{H_{sig}}'(\tilde{X}, M^*, \tilde{R}^*)$. As $g^{z^*} = \tilde{R}^* g^{a_0 h_{i_{sig}}}$ and $g^{z^{*'}} = \tilde{R}^{*} g^{a_0 h_{i_{sig}}'}$, $\mathcal{B}_1$ computes $f(0) = a_0 = \frac{z^* - z^{*'}}{h_{i_{sig}} - h_{i_{sig}}'}$. Define $T_{\mathsf{dt}} := \{j : (i,k,d,c,z) \leftarrow \mathsf{dt}(j),\; k = k^*\}$. For each $j \in T_{\mathsf{dt}} \cap T_{\mathsf{dt}'}$, let $(i,k,d,c,z) \leftarrow \mathsf{dt}(j)$ and $(i',k',d',c',z') \leftarrow \mathsf{dt}'(j)$, then
$g^z = R_j S_j^d X_i^c$ and $g^{z'} = R_j S_j^{d'} X_i^{c'}$.

Since $\textsf{BadPPO}=\textsf{false}$ during both executions of \C, we know that $(R_j, S_j, u_j, V_j)$ was returned by a query $\textsc{PPO}(\mathsf{id}_i)$ prior to the query $\mathsf{H_{sig}}(\tilde{X}, M^*, \tilde{R}^*)$ in the first execution of \C.
Because the two executions of \C are identical up to the query $\mathsf{H_{sig}}(\tilde{X}, M^*, \tilde{R}^*)$, we have $\mathsf{id}_i' = \mathsf{id}_i$. 
Moreover, we know $d = h_{k} = h_{k^*} = h_{k'} = d'$. Therefore, $\mathcal{B}_1$ can compute
$f(\mathsf{id}_i) = \mathsf{DL}_{\G,g}(X_i) = \frac{z - z'}{c - c'}$.

Let $D := \{\mathsf{id}_i\}_{j \in T_{\mathsf{dt}} \cap T_{\mathsf{dt}'},\; (i,k,d,c,z) \leftarrow \mathsf{dt}(j)}$. 
If $S_2(lr^*)< t-|CS|$ occurs in the first execution of \C, then $|T_{\mathsf{dt}}| = |S_2(lr^*)| < t - |CS|$. Therefore, we have $|D| = |T_{\mathsf{dt}} \cap T_{\mathsf{dt}'}| < t - |CS|$. Consequently, $\mathcal{B}_1$ can pick an arbitrary set $D' \subseteq HS \setminus D$ of size $(t - |CS| - |T_{\mathsf{dt}} \cap T_{\mathsf{dt}'}| - 1)$ and for each $\mathsf{id}_i \in D'$, $\mathcal{B}_1$ queries the \textsc{Dlog} oracle on $X_i$. Thus $\mathcal{B}_1$ knows the value of $f(\mathsf{id}_i)$ for every $\mathsf{id}_i \in CS \cup D \cup D' \cup \{0\}$. Since $|CS \cup D \cup D' \cup \{0\}| = t$, $\mathcal{B}_1$ can recover $a_0, \ldots, a_{t-1}$ by Lagrange interpolation.

We now show how to compute the discrete log of $R_1, S_1, \ldots, R_{q_s}, S_{q_s}$. Denote these discrete logarithms by $r_1, s_1, \ldots, r_{q_s}, s_{q_s}$. From the execution of \C, we know $\mathsf{dt}(j) = (i,k,d,c,z) \ne \perp$ if and only if \C queries \textsc{Dlog} on $R_j S_j^d X_i^c$. Therefore, we denote the \textsc{Dlog} query on $R_j S_j^d X_i^c$ as the \textsc{Dlog} query associated with $\mathsf{dt}(j)$. For each $j \in [1..q_s]$, there are the following cases.

\medskip
\noindent \textbf{Case 0:} Both $\mathsf{dt}(j)$ and $\mathsf{dt}'(j)$ are $\perp$. In this case, $\mathcal{B}_1$ queries $r_j\leftarrow \textsc{Dlog}(R_j)$ and $ s_j\leftarrow \textsc{Dlog}(S_j)$.

\medskip
\noindent \textbf{Case 1:} Exactly one of $\mathsf{dt}(j)$ and $\mathsf{dt}'(j)$ is not $\perp$. Without loss of generality, assume $\mathsf{dt}(j) = (i,k,d,c,z)$, which implies $g^z = R_j S_j^d X_i^c$. $\mathcal{B}_1$ queries $s_j\leftarrow \textsc{Dlog}(S_j)$ and computes $r_j = z - d\cdot s_j - c\cdot f(\mathsf{id}_i)$.

\medskip
\noindent For all the following cases, both $\mathsf{dt}(j)$ and $\mathsf{dt}'(j)$ are not $\perp$ and we denote $(i,k,d,c,z) \leftarrow \mathsf{dt}(j)$ and $(i',k',d',c',z') \leftarrow \mathsf{dt}'(j)$.

\medskip
\noindent \textbf{Case 2:} $k \ne k'$ or $k = k' > k^*$. In this case, we have $d = h_k \ne h_{k'}' = d'$ and $g^z = R_j S_j^d X_i^c$, $g^{z'} = R_j S_j^{d'} X_i^{c'}$. Therefore, $\mathcal{B}_1$ computes
$s_j = \frac{z - c f(\mathsf{id}_i) - z' + c' f(\mathsf{id}_i)}{d - d'}$ and $r_j = z - d\cdot s_j - c\cdot f(\mathsf{id}_i)$.

\medskip
\noindent \textbf{Case 3:} $k = k' = k^*$. In this case, $\mathcal{B}_1$ computes $r_j, s_j$ the same as Case 1.

\medskip
\noindent \textbf{Case 4:} $k = k' < k^*$. $\mathcal{B}_1$ computes $r_j, s_j$ the same as Case 1. Also, in this case we have $d = d'$ and $c = c'$. Therefore, $\mathcal{B}_1$ queries the \textsc{Dlog} oracle once in order to simulate the \textsc{Dlog} queries associated with $\mathsf{dt}(j)$ and $\mathsf{dt}'(j)$.

We now count the number of \textsc{Dlog} queries made by $\mathcal{B}_1$. $\mathcal{B}_1$ queries the \textsc{Dlog} oracle $|CS|$ times to simulate the queries $\textsc{Dlog}(X_i)$ made by \C for each $i \in CS$. If $S_2(lr^*)<t-|CS|$ occurs, $\mathcal{B}_1$ queries the \textsc{Dlog} oracle $|D'|$ times to compute $a_0, \ldots, a_{t-1}$. For each $j \in [1..q_s]$, $\mathcal{B}_1$ queries \textsc{Dlog} twice to simulate the queries associated with $\mathsf{dt}(j)$ and $\mathsf{dt}'(j)$ and to compute $r_j, s_j$ in cases 0, 1, 2, 4, and three times in \textbf{Case} 3.
Since the condition of \textbf{Case 3} is equivalent to $j \in T_{\mathsf{dt}} \cap T_{\mathsf{dt}'}$, the total \textsc{Dlog} queries made by $\mathcal{B}_1$ is
$2q_s + |T_{\mathsf{dt}} \cap T_{\mathsf{dt}'}| + |CS| + |D'| = 2q_s + t - 1$.
Therefore, $\mathcal{B}_1$ breaks the OMDL assumption. \qed
\end{proof}

Lemma \ref{lem:tssuf4_2} below is devoted to the event $E_2$, which captures forgeries under the condition $S_2(lr) \neq S_3(lr)$. Since $\mathsf{st.mapPP}(pp_i) \neq \perp$ is enforced in the partial-signing algorithm $\mathsf{PS}$, an honest signer generates a $pp_i$ (and $S_3(lr)$ grows) whenever a $psig_i$ is requested (and $S_2(lr)$ grows), so forgeries with $S_3(lr) \subset S_2(lr)$ will not occur. Therefore, we only consider the case $S_2(lr) \subset S_3(lr)$.
In other words, $E_2$ captures the attack in which at least one $pp_i$ is honestly generated but the corresponding $psig_i$ is forged. Our proof idea is to embed OMDL challenges into the honest signer's $pp_i$, which would fall in $S_3(lr)\setminus S_2(lr)$ when $E_2$ occurs. 
We show that by using the Lemma~\ref{lem:mimfork} with an approach inspired by the mix forking technique \cite{CGRS23}, all secret exponents in the forger's token authentication transcripts $(u,V)$ (i.e., those belonging to $lr^*.\mathcal{S}\setminus S_3(lr^*)$) can be recovered, and an OMDL challenge in $S_3(lr^*)\setminus S_2(lr^*)$ can be solved without querying the \textsc{Dlog} oracle.

\begin{lemma}
\label{lem:tssuf4_2}
  There exists an OMDL adversary $\mathcal{B}_2$ issuing at most $2q_s$ queries to
  the \textsc{Chall} oracle and $q$ queries to random oracle such that
  \[
    \Pr[E_2] \;\le\;
    8 \mathsf{Adv}_{\mathbb{G}}^{\mathsf{omdl}}(\mathcal{B}_2) + \mathsf{negl}.
  \]
  Moreover, $\mathcal{B}_2$ runs in time
    $T(\mathcal{B}_2) \le \frac{16\mathsf{ns}^2q \ln(16\mathsf{ns}/\Pr[E_2])}{\Pr[E_2]}\cdot T(\F{})+T(q)$
  where $\mathsf{ns}$ is the number of signers, and $T(q)$ is the time required to perform exponentiations and group operations.
\end{lemma}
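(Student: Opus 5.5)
The reduction $\mathcal{B}_2$ will embed $2q_s$ OMDL challenges as the honest tokens $(R_j,S_j)$ and will \emph{not} embed anything in the key. Concretely, $\mathcal{B}_2$ samples the polynomial $f$ itself, so every $\mathsf{sk}_i$ is known and $\tilde{X}=g^{f(0)}$. It queries \textsc{Chall} $2q_s$ times to obtain $R_1,S_1,\ldots,R_{q_s},S_{q_s}$. The $\textsc{PPO}$ oracle is simulated as in Lemma~\ref{lem:tssuf4_1}: $u$ is chosen at random, $V$ is solved for, and $\mathsf{H_{pp}}$ is programmed at that point. The abort event \textsf{BadPPO} is kept; it has probability $O(q_sq_h/p)$.

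Partial signatures are answered with one \textsc{Dlog} call per honest response, namely $z_i=\textsc{Dlog}(R_jS_j^{d})+c\Delta_{\mathsf{id}_i,lr.\mathcal{S}}(0)\mathsf{sk}_i$. This costs at most one query per token. For tokens never used in signing, $\mathcal{B}_2$ makes one further \textsc{Dlog} query on $S_j$ and obtains $r_j$ once a linear relation is available. The aim is to spend exactly one query fewer than the number of challenges.

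When $E_2$ occurs, there is $\mathsf{id}_{i^*}\in S_3(lr^*)\setminus S_2(lr^*)$. Its token $(R_{j^*},S_{j^*})$ is honest, but no partial signature was ever released on $lr^*$. From the forgery $g^{z^*}=\tilde{R}^*\tilde{X}^{c^*}$, with $\tilde{R}^*=\prod_{i\in lr^*.\mathcal{S}}R_iS_i^{d^*}$, $\mathcal{B}_2$ gets one linear equation. This equation involves the unknown exponents of every token in $lr^*$. For honest tokens the exponents are challenge discrete logs. For the $lr^*.\mathcal{S}\setminus S_3(lr^*)$ tokens they are adversarial $(r_i,s_i)$.

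The adversarial exponents are extracted with Lemma~\ref{lem:mimfork}, applied to the $\mathsf{H_{pp}}$ queries. For each corrupted or forged token in $lr^*$, $\mathcal{B}_2$ rewinds at the $\mathsf{H_{pp}}(\tilde{X},X_i,R_i,S_i,V_i)$ query, so $N\le\mathsf{ns}$. It collects $n=3$ extra accepting transcripts with distinct challenges $e$. Each accepting transcript satisfies $u=v+\mathsf{sk}_ie+r_ie^2+s_ie^3$. Four distinct $e$ give a Vandermonde system in $(v,\mathsf{sk}_i,r_i,s_i)$, which yields $r_i,s_i$; here $\mathsf{sk}_i$ is also recovered even for forged $\mathsf{PP}$ of honest ids.

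The set $I$ output by $\mathcal{A}$ must have the same size $N$ in every run. I therefore pad with repeated indices, taking $N=\mathsf{ns}$. The total of $Nn$ forks is what produces the running time $\frac{16\mathsf{ns}^2q\ln(16\mathsf{ns}/\varepsilon)}{\varepsilon}$, once the $3$ and the constant in $L$ are absorbed. The success bound $\hat\varepsilon/8$ gives the factor $8$. The condition $|\mathbb{H}|=p-1>16Nnq/\hat\varepsilon$ holds unless $\Pr[E_2]$ is negligible, and that case goes into $\mathsf{negl}$.

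The main obstacle is consistency across rewinds. The $\textsc{PSignO}$ answers in the forked executions also consume \textsc{Dlog} queries, and these must be accounted for. Forks happen at $\mathsf{H_{pp}}$ queries, which precede $\mathsf{H_{non}}(\tilde{X},lr^*)$ and hence the forgery. Signing queries after the fork point may therefore differ between runs. My plan is that $\mathcal{B}_2$ uses the rewound executions only as extractors. Any \textsc{Dlog} answers they need are computed from linear relations on the same $(R_j,S_j)$ already fixed in the main run; where that is not possible, the run is counted as the first of at most two signing queries per token, relying on Case 2 style relations as in Lemma~\ref{lem:tssuf4_1}. I expect bounding the total query count by $2q_s-1$ under forking to be the delicate step.

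With all adversarial $(r_i,s_i)$ known, the forgery equation reduces to
\[
\sum_{j\in\text{honest in }lr^*}(r_j+d^*s_j)=z^*-c^*f(0)-\sum(\text{known terms}).
\]
Tokens in $S_2(lr^*)$ were already linearly opened. The token of $\mathsf{id}_{i^*}$ has had no \textsc{Dlog} query involving $R_{j^*}S_{j^*}^{d^*}$. $\mathcal{B}_2$ therefore queries $\textsc{Dlog}(S_{j^*})$, solves for $r_{j^*}$, and solves all other tokens as in Cases 0--4. This uses $2q_s-1$ queries in total and breaks OMDL.
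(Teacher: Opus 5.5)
Your reduction has the same structure as the paper's. $\mathcal{B}_2$ samples $f$ itself and embeds challenges only in the honest tokens. It programs $\mathsf{H_{pp}}$ in \textsc{PPO} and spends one \textsc{Dlog} per partial signature. It runs $\mathsf{MIMFork}$ on the $\mathsf{H_{pp}}$ queries of the tokens in $lr^*.\mathcal{S}\setminus S_3(lr^*)$, and uses the forgery equation to save one query on a token of $S_3(lr^*)\setminus S_2(lr^*)$.

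The gap is the step you defer as ``delicate'': it is the crux, and the fallback you sketch fails.
\begin{itemize}
\item Every rewound execution resamples all $\mathsf{H_{non}}$ answers after the forked $\mathsf{H_{pp}}$ query.
\item Nothing stops the adversary from signing the target token of $\mathsf{id}_{i^*}$ under some junk $lr'\neq lr^*$ after that point. The token is never signed on $lr^*$, so it stays in $S_3(lr^*)\setminus S_2(lr^*)$.
\item The main run and a rewind (or two different rewinds) then request $r_{j^*}+d's_{j^*}$ and $r_{j^*}+d''s_{j^*}$ with $d'\neq d''$.
\item ``Case 2 style'' bookkeeping pays two \textsc{Dlog} queries for these and opens the target completely. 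The forgery equation becomes redundant and you end with $2q_s$ queries.
\item You cannot answer those rewound queries from the forgery equation instead. It involves exactly the adversarial exponents the rewinds are still extracting.
\end{itemize}

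The paper uses truncation at this point. $\mathcal{B}_2$ runs the first execution of $\mathcal{C}$ to completion, so $\theta=(lr^*,\sigma^*)$ and all main-run answers are fixed. In each rewind it halts $\mathcal{C}$ as soon as the new transcript $(R_i,S_i,u_i',V_i)$ for the fresh $e'$ has been obtained. Rewinds thus act only as extractors and are meant to issue no further \textsc{Dlog} queries. Be aware that this works only if $u_i'$ surfaces before any signing query touches a not-yet-opened honest token. Signing queries issued between the forked $\mathsf{H_{pp}}$ query and the first appearance of $u_i'$ are not covered by the paper's short justification either. Whichever route you take, that case needs an explicit argument.

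Three smaller points.
\begin{itemize}
\item Since $\mathcal{B}_2$ chose $f$, every $\mathsf{sk}_i$ is known, and the unknowns per adversarial token are only $(v,r_i,s_i)$. So $n=2$ extra transcripts suffice, as in the paper; the system with rows $(1,e^2,e^3)$ is singular only with negligible probability. This choice is what produces the stated time bound: $N\cdot nL$ with $N=\mathsf{ns}$, $n=2$ gives $16\mathsf{ns}^2q\ln(16\mathsf{ns}/\Pr[E_2])/\Pr[E_2]$. Your $n=3$ gives $24\mathsf{ns}^2q\ln(24\mathsf{ns}/\Pr[E_2])/\Pr[E_2]$, which cannot be ``absorbed'' into the statement.
\item Padding $I$ to size $\mathsf{ns}$ sensibly repairs the paper's run-dependent $|I|$. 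However, when $lr^*.\mathcal{S}\setminus S_3(lr^*)=\emptyset$ there is nothing to repeat. The paper treats that case separately, with a single run and no rewinding.
\item If the target token was already signed once in the main run under some $lr'$, do not also query $\textsc{Dlog}(S_{j^*})$. That relation plus the forgery equation already determine $(r_{j^*},s_{j^*})$, and the extra query would bring the count back to $2q_s$.
\end{itemize}
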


\begin{proof}
Let \F{} be an adversary as described in the theorem. Denote the output message-signature pair of \F{} as $(M^*, sig^* = (\tilde{R}^*, z^*))$. Without loss of generality, we assume \F{} always queries \textsc{RO} on $\mathsf{H_{sig}}(\tilde{X}, M^*, \tilde{R}^*)$ before \F{} returns and always queries \textsc{RO} on $\mathsf{H_{pp}}(\tilde{X}, X_i, R, S, V)$ and $\mathsf{H_{non}}(\tilde{X}, lr)$ prior to the query $\textsc{PSignO}(\mathsf{id}_i, lr)$ for some $\mathsf{id}_i$ and $lr$.
This adds up to $2q_s$ additional \textsc{RO} queries, and we let $q = q_h + 2q_s + 1$ where $q_h=q_{pp} + q_{non} + q_{sig}$.

Let $lr^*$ be the leader query whose nonce query $\mathsf{H_{non}}(\tilde{X}, lr^*)$ is the last such query before the signature hash query $\mathsf{H_{sig}}(\tilde{X}, M^*, \tilde{R}^*)$ and which satisfies $\mathsf{SVf}(\tilde{X},$ $ lr^*, \sigma^*) = \mathsf{true}$. Denote the event $E_2$ as
  \[
    E_2 :\mathsf{Vf}(vk, M^*, \sigma^*) ~\wedge~ lr^* \ne \perp ~\wedge~ S_2(lr^*)\subset S_3(lr^*).
  \]  
To begin, we construct an algorithm \C which is compatible with the syntax in Lemma \ref{lem:mimfork}.
The input of \C consists of $2q_s$ random elements $\mathbf{R} = \{R_1,\dots,R_{q_s}\}$, $\mathbf{S} = \{S_1,\dots,S_{q_s}\}$ from \G, and uniformly random integers $h_1,\dots,h_{q} \in \mathbb{Z}_p^*$. 
Also, \C has access to an oracle \textsc{Dlog}, which on input $X\in \G$ outputs $\mathsf{DL}_{\G,g}(X)$.
\C initializes the signers' states $\mathsf{st}_0,\dots,\mathsf{st}_{\mathsf{ns}}$,
counters $\mathsf{ctr}_h=0$ and $\mathsf{ctr}_s=0$. It then runs \F{}, answering its oracle queries as follows.

  \begin{itemize}
  \item $\widetilde{\textsc{Init}}(CS)$:
    \C picks $a_0,\dots,a_{t-1}\!\xleftarrow{\$}\mathbb{Z}_p$,
    sets $f(x)=\sum_{i=0}^{t-1}a_ix^i$, and defines
    $\tilde{X}=g^{f(0)}$, $X_i=g^{f(\mathsf{id}_i)}$ for $i\in[1..\mathsf{ns}]$.
    For $\mathsf{id}_i\in CS$ it stores $\mathsf{st}_i.\mathsf{sk}=f(\mathsf{id}_i)$.
    It returns $(\tilde{X}, (X_1,\dots,X_{\mathsf{ns}}),
    \{\mathsf{st}_i.\mathsf{sk}\}_{\mathsf{id}_i\in CS})$.
  
  \item $\widetilde{\textsc{RO}}[\mathsf{H}_{\mathsf{pp}}](x)$:
    If $\mathsf{H}_{\mathsf{pp}}(x)\neq \perp$, \C returns $\mathsf{H}_{\mathsf{pp}}(x)$. Otherwise,
    increments $\mathsf{ctr}_h$ to set $\mathsf{H}_{\mathsf{pp}}(x)\leftarrow h_{\mathsf{ctr}_h}$ and returns $\mathsf{H}_{\mathsf{pp}}(x)$.
  
  \item $\widetilde{\textsc{RO}}[\mathsf{H}_{\mathsf{non}}](x)$:
  If $\mathsf{H}_{\mathsf{non}}(x)\neq \perp$, \C returns $\mathsf{H}_{\mathsf{non}}(x)$. 
  Otherwise,
  increments $\mathsf{ctr}_h$ to set $\mathsf{H}_{\mathsf{non}}(x)\leftarrow h_{\mathsf{ctr}_h}$ and returns $\mathsf{H}_{\mathsf{non}}(x)$.
      
  \item $\widetilde{\textsc{RO}}[\mathsf{H}_{\mathsf{sig}}](x)$:
  If $\mathsf{H}_{\mathsf{sig}}(x)\neq \perp$, \C returns $\mathsf{H}_{\mathsf{sig}}(x)$. 
  Otherwise, 
  increments $\mathsf{ctr}_h$ to set $\mathsf{H}_{\mathsf{non}}(x)\leftarrow h_{\mathsf{ctr}_h}$ and returns $\mathsf{H}_{\mathsf{non}}(x)$.
  
  \item $\widetilde{\textsc{PPO}}(\mathsf{id}_i)$:
    Same as in the game $\mathbf{G}_{\mathsf{FROST2+}}^{\textsf{ts-suf-3}}$, except that, in the simulation of $\textsf{SPP}$, \C selects $u\!\xleftarrow{\$}\mathbb{Z}_p$, increments $\mathsf{ctr}_h$ and $\mathsf{ctr}_s$ to set $e=h_{\mathsf{ctr}_h}$,
    and computes $V =
      g^{u}/X_i^{e}R_{\mathsf{ctr}_s}^{e^2}
      S_{\mathsf{ctr}_s}^{e^3}$.    
    \C sets $\mathsf{H}_{\mathsf{pp}}(\tilde{X},X_i,R_{\mathsf{ctr}_s},S_{\mathsf{ctr}_s},
    V) \gets e$ and adds $pp_i \gets(R_{\mathsf{ctr}_s},S_{\mathsf{ctr}_s},u,V)$ to $\mathsf{st}_i.\mathsf{curPP}$.

  \item $\widetilde{\textsc{PSignO}}(\mathsf{id}_i, lr)$:
    Same as in the game $\mathbf{G}_{\mathsf{FROST2+}}^{\textsf{ts-suf-3}}$, except that, in the simulation of \textsf{PS}, if $\mathsf{st}_i.\mathsf{mapPP}(pp_{\mathsf{id}_i})\neq \perp$ and all $pp_j\in lr.\mathsf{PP}$ pass the verifications, \C retrieves $(R,S)\gets pp_{\mathsf{id}_i}$ and other $(R_j,S_j)\in lr.\mathsf{PP}$ to compute $\tilde{R}=\prod_{i\in lr.\mathcal{S}} R_i (\prod_{i\in lr.\mathcal{S}} S_i)^d$ where $d\gets\mathsf{H_{non}}(\tilde{X},lr)$.
    \C then removes $pp_{\mathsf{id}_i}$ from $\mathsf{st}_i.\mathsf{curPP}$ and returns
    $z_i \gets \textsc{Dlog}\bigl( R S^{d}\bigr) + c
      \Delta_{\mathsf{id}_i,lr.\mathcal{S}}(0)\,f(\mathsf{id}_i)$    
    where $c\gets\mathsf{H_{sig}}(\tilde{X},lr.\mathsf{msg},\tilde{R})$.
  \end{itemize}

  Finally, $\mathcal{F}$ outputs its forgery $(lr^*=\{M^*,\mathcal{S},\mathsf{PP}\},\sigma^*=(\tilde{R}^*,z^*))$ and wins the game $\mathbf{G}_{\mathsf{FROST2}\texttt{+}}^{\textsf{ts-suf-4}}$. 
  \C returns $(I,\{\mathsf{out}_i\}_{i\in I},\theta)$ where $I$ are the set of random oracle indices such that $\mathsf{out}_i$ contains $h_i\gets\mathsf{H}_{\mathsf{pp}}(\tilde{X},X_i,R_i,S_i,V_i)$ for $\mathsf{id}_i\in lr^*.\mathcal{S}\setminus S_3(lr^*)$ and $\theta=(lr^*,\sigma^*)$.

\paragraph{Analysis of \C.}
There are two possible forging cases under the condition $S_2(lr^*)\subset S_3(lr^*)$: (1) $lr^*.\mathcal{S}\setminus S_3(lr^*) = \emptyset$; and (2) $lr^*.\mathcal{S}\setminus S_3(lr^*) \neq \emptyset$.
We first present the analysis for case (1) and subsequently explain how to extend it to handle case (2).\\


\noindent
\textbf{Case (1):}
When $lr^*.\mathcal{S}\setminus S_3(lr^*) = \emptyset$, $\F{}$ forges without generating any pre-processing tokens such that $S_3(lr^*) = S_4(lr^*) = lr^*.\mathcal{S}$.
We show that there is an OMDL adversary $\mathcal{B}_2$ which uses \C as the helper algorithm to compute the discrete logarithms of $(R^*,S^*)$ for an $\mathsf{id}^*\in S_3(lr^*)\setminus S_2(lr^*)$.
$\mathcal{B}_2$ queries $2q_s$ challenges $\mathbf{R} = \{R_1,\dots,R_{q_s}\}$, $\mathbf{S} = \{S_1,\dots,S_{q_s}\}$ from \textsc{Chall} and selects random $h_1,\ldots,h_q\in\Zs{p}$ as the input to \C.
The \textsc{Dlog} queries from \C are relayed by $\mathcal{B}_2$ to its \textsc{Dlog} of $\mathbf{G}_{\G}^{\mathsf{omdl}}$.
From $(I,\{\mathsf{out}_i\}_{i\in I},\theta)\gets\C$, $\mathcal{B}_2$ needs only $\theta$ to obtain $((z^*,c^*,f(0)),\tilde{R}^*=\prod_{\mathsf{id}_j\in lr^*.\mathcal{S}} R_j(\prod_{\mathsf{id}_j\in lr^*.\mathcal{S}} S_j)^{d^*})$ and compute the sum
$\alpha=\sum_{\mathsf{id}_j\in lr^*.\mathcal{S}} r_j+d^*\cdot s_j = z^*-c^*f(0)$.

Subsequently, for all $(R_j,S_j)$ belong to the set $D= S_3(lr^*)\setminus S_2(lr^*)$,
$\mathcal{B}_2$ queries $\{r_j\gets\textsc{Dlog}(R_j)\}_{\mathsf{id}_j\in D}$ and 
$\{s_j\gets\textsc{Dlog}(S_j)\}_{\mathsf{id}_j\in D\setminus \{\mathsf{id}^*\}}$ where $\mathsf{id}^*\in D$ is the targeted identity who owns $(R^*,S^*)$.
For those $(R_j,S_j)$ belong to $\mathsf{id}_j\in S_2(lr^*)$, they have been queried to \textsc{PSignO} and $\mathcal{B}_2$ knows
$z_j= r_j + d^*\cdot s_j+c^*\Delta_{\mathsf{id}_j,lr^*.\mathcal{S}}(0)f(\mathsf{id}_j)$
in which $\mathcal{B}_2$ can query $s_j\leftarrow \textsc{Dlog}(S_j)$ and computes $r_j=z_j-c^*\Delta_{\mathsf{id}_j,lr^*.\mathcal{S}}(0)f(\mathsf{id}_j)-d^* \cdot s_j$.
Therefore, $lr^*.\mathcal{S}=D\cup S_2(lr^*)$ and $\mathcal{B}_2$ can extract $s^*\gets \mathsf{DL}_{\G,g}(S^*)$ by computing:
\begin{align*}
    s^*=\frac{\alpha - \sum_{\mathsf{id}_j\in lr^*.\mathcal{S}} r_j - d^*\sum_{\mathsf{id}_j\in lr^*.\mathcal{S}\setminus\{\mathsf{id}^*\}} s_j}{d^*}.
\end{align*}

For the OMDL challenges not involved in the forgery $(R_j,S_j)\not\in lr^*.\mathsf{PP}(\mathsf{id}_j)$ where $\mathsf{id}_j\in HS$, they are either queried or have not queried to \textsc{PSignO}.
If queried, $\mathcal{B}_2$ knows the corresponding partial signature $z_j$ in which it can query $s_j\leftarrow \textsc{Dlog}(S_j)$ and compute $r_j=z_j-f(\mathsf{id}_j)c_j\Delta_{\mathsf{id}_j,lr.\mathcal{S}}(0)-d_j s_j$.
Else, $\mathcal{B}_2$ queries $r_j\leftarrow \textsc{Dlog}(R_j)$ and $ s_j\leftarrow \textsc{Dlog}(S_j)$.

We now count the number of \textsc{Dlog} queries made by $\mathcal{B}_2$. 
For each challenge pair in $\{R_j=g^{r_j},S_j=g^{s_j}\}_{j=1}^{q_s}$, $\mathcal{B}_2$ queries \textsc{Dlog} twice except the $s^*\leftarrow\mathsf{DL}_{\G,g}(S^*)$ used by the targeted honest signer $\mathsf{id}^*\in S_3(lr^*)\setminus S_2(lr^*)$. 
Hence the total $\textsc{Dlog}$ queries made by $\mathcal{B}_2$ equals $2q_s - 1$, breaking the OMDL assumption. 
Denote \textsf{BadHash} as the event of $h_i=h_j$ for some $i\ne j \le q$. Since every hash value is selected independently, it is $\Pr[\mathsf{BadHash}]=1/p$. 
Absorbing all strictly negligible terms into a function $\mathsf{negl}$, we have:
  \[
    \mathsf{Adv}_{\mathbb{G}}^{\mathsf{omdl}}(\mathcal{B}_2) \ge \Pr[\mathbf{G}_{\mathsf{FROST2+}}^{\textsf{ts-suf-3}}(\F{})=\mathsf{true}] - \Pr[\mathsf{BadHash}] \ge \Pr[E_2] - \mathsf{negl},
  \]
and $T(\mathcal{B}_2) \approx T(\F{}) + T(q)$.\\

\noindent
\textbf{Case (2):}
Now, we consider the case where $\F{}$ forges with at least one pre-processing token in $lr^*.\mathsf{PP}$ generated by itself, i.e., $lr^*.\mathcal{S}\setminus S_3(lr^*) \neq \emptyset$. 
In this case, the output of \C's single run is not sufficient for solving the OMDL problem. 
This is because \C does not know the discrete logarithms of $\{R_i,S_i\}_{\mathsf{id}_i\in lr^*.\mathcal{S}\setminus S_3(lr^*)}$ generated by \F{}.
To overcome this obstacle, we need a helper algorithm $\C'$. 

\paragraph{Construct $\C'$ from $\mathsf{MIMFork}_{N,2}^{\C}$.}
We now construct the helper algorithm $\C'$, which is a syntactically valid OMDL adversary, using $\mathsf{MIMFork}_{N,2}^{\C}$ with $N=|I|=|lr^*.\mathcal{S}\setminus S_3(lr^*)|$.
$\C'$ initiates
\C the same as $\mathcal{B}_2$ did in the previous case.
Similar to the proof in case (1), from $(\mathcal{O},\mathcal{O}')\gets\mathsf{MIMFork}_{N,2}^{\C}$, $\C'$ also computes the values $(\alpha, \{r_j\gets\textsc{Dlog}(R_j)\}_{\mathsf{id_j}\in S_3(lr^*)}, \{s_j\gets\textsc{Dlog}(R_j)\}_{\mathsf{id_j}\in S_3(lr^*)\setminus\{\mathsf{id}^*\}})$ but now $\C'$ needs to additionally extract $(r_i\gets\mathsf{DL}_{\G,g}(R_i), s_i\gets\mathsf{DL}_{\G,g}(S_i))$ from $\{(e_{i,j}=h_{i,j},r_ie_{i,j}^2+s_ie_{i,j}^3=u_{i,j}-f(\mathsf{id}_i)e_{i,j},V_i)\gets pp_{i,j}\}_{j=0}^2$ for $\mathsf{id}_i\in D$ where $D=lr^*.\mathcal{S}\setminus S_3(lr^*)$.
As $lr^*.\mathcal{S}=D\cup S_3(lr^*)$, $\C'$ can compute 
$$\mathsf{DL}_{\G,g}(S^*)\rightarrow s^*=\frac{\alpha - \sum_{\mathsf{id}_i\in lr^*.\mathcal{S}} r_i - d^*\sum_{\mathsf{id}_i\in lr^*.\mathcal{S}\setminus\{\mathsf{id}^*\}} s_i}{d^*}.$$
$\C'$ then solves those OMDL challenges not involved in the forgery by querying the \textsc{Dlog} oracle.
Since a hash collision occurs with a negligible probability only, by Lemma \ref{lem:mimfork}, we have:
  \[
    \mathsf{acc}(\C')\ge\mathsf{acc}(\mathsf{MIMFork}_{N,2}^{\C}) - \mathsf{negl}\;\ge\; \Pr[E_2]/8 - \mathsf{negl}.
  \]
However, after broadcasting a $pp_i$ in any of the $|I|\cdot T_{max}$ rewinds in $\mathsf{MIMFork}^{\C}_{N,2}$, \F{} still can issue \textsc{PSignO} queries till it forges a signature. 
So, $\C'$ does not solve OMDL because the $|I|\cdot T_{max}$ rewinds make \C issues more than $2q_s$ \textsc{Dlog} queries to answer these additional \textsc{PSignO} queries.

\paragraph{Construct $\mathcal{B}_2$ from $\C'$.}
We now construct an OMDL adversary $\mathcal{B}_2$ which is defined the same as $\C'$ except that, starting from the first rewind of each instance $i\in I$, $\mathcal{B}_2$ aborts any executions of $\C$ after it has obtained the $pp_i$. 
This avoids $\mathcal{B}_2$ from issuing too many \textsc{Dlog} queries as done by $\C'$.
Moreover, $\mathcal{B}_2$ has the same success probability as $\C'$ because the $\theta$ required to compute $\alpha$ is available after the first run ended completely, that is, before a rewind begins.
Since $N=|I|=|lr^*.\mathcal{S}\setminus S_3(lr^*)| \le \mathsf{ns}-1$, and every $i\in I$ collected $n=2$ forks, $\mathcal{B}_2$ uses time:
\begin{align*}
    T(\mathcal{B}_2)\le T(\C') &\le |I|\cdot T_{max}\cdot T(\C)+T(q)
    \le |I|\cdot T_{max}\cdot T(\F{})+T(q)\\
    &\le \mathsf{ns}\cdot \frac{8\cdot \mathsf{ns}\cdot 2q \ln(8\cdot\mathsf{ns}\cdot 2/\Pr[E_2])}{\Pr[E_2]}\cdot T(\F{})+T(q)\\
    &\le \frac{16\mathsf{ns}^2q \ln(16\mathsf{ns}/\Pr[E_2])}{\Pr[E_2]}\cdot T(\F{})+T(q).
\end{align*}
This completes the proof.   \qed
\end{proof}

Up to this point, \textsf{FROST2}\texttt{+} has achieved the \textsf{TS-SUF-3} security whose forgeries are covered by the events $E_1$ and $E_2$. 
If we apply the \textsf{ATS[TS,DS]} transformation framework as in \textsf{FROST} \cite{BTZ22,BCKMTZ22}, we obtain an \textsf{ATS[FROST2+, DS]} scheme that achieves \textsf{TS-SUF-4} security. 
However, the transformation comes with the cost of managing an extra DS key pair, longer $pp_i$ and slower signing time.
To avoid these, we give a direct proof instead, which is the proof for event $E_3$.
As $S_4(lr)=HS\cap lr^*.\mathcal{S}$ and the event $E_2$ has covered $S_2(lr)\neq S_3(lr)$, we only need to deal with the forgeries under $S_2(lr)=S_3(lr)\subset S_4(lr)$ in $E_3$.

Before presenting the direct proof, we would like to introduce a leader request forger \F{lr} of \textsf{FROST2}\texttt{+} which possesses the same ability as that of a normal \textsf{TS-SUF-4} forger \F{}, but instead of producing a valid signature $(\sigma^*,lr^*)$ as the output, \F{lr} produces a valid leader request $lr^*$ as the output.
Denote $A \Longrightarrow B$ as $A$ implies $B$, that is, $B$ is reducible to $A$.
Below, we show that $\F{lr}\Longleftrightarrow\F{}$ when a forgery falls under $S_3(lr)\subset S_4(lr)$, that is, when at least a pre-processing token is forged.

\begin{theorem}[$\F{lr}\Longleftrightarrow\F{}$]
\label{thm:F_lr<->F}
For \textsf{FROST2}\texttt{+}, a \textsf{TS-SUF-4} forger \F{} which output a forgery that falls under the event 
\[
    E :\; \mathsf{Vf}(\tilde{X}, M^*, \sigma^*) ~\wedge~ lr^* \ne \perp ~\wedge ~ S_3(lr)\subset S_4(lr).
\]
is equivalent to a leader request forger \F{lr}. 
\end{theorem}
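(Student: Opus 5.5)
We prove the two directions separately; both are black-box reductions that simulate the oracles of $\mathbf{G}^{\textsf{ts-suf-4}}_{\mathsf{FROST2+}}$ faithfully, so the only work is to show that the winning condition is preserved. Throughout, a \emph{valid} leader request $lr^*$ is one for which every token $lr^*.\mathsf{PP}(\mathsf{id}_i)=(R_i,S_i,u_i,V_i)$ passes the check $g^{u_i}=V_iX_i^{e'}R_i^{e'^2}S_i^{e'^3}$ in $\mathsf{PS}$. It must also contain at least one honest signer $\mathsf{id}_i\in (HS\cap lr^*.\mathcal{S})\setminus S_3(lr^*)$, i.e.\ $S_3(lr^*)\subset S_4(lr^*)$.

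\textbf{Direction $\F{}\Longrightarrow\F{lr}$.} Given $\F{}$ whose forgery falls in $E$, let $\F{lr}$ run $\F{}$ and forward all oracle queries unchanged. When $\F{}$ outputs $(M^*,\sigma^*)$, $\F{lr}$ identifies $lr^*$ exactly as in the proofs of Lemmas~\ref{lem:tssuf4_1} and \ref{lem:tssuf4_2}: it is the last $\mathsf{H_{non}}(\tilde{X},lr)$ query preceding $\mathsf{H_{sig}}(\tilde{X},M^*,\tilde{R}^*)$ with $\mathsf{SVf}(\tilde{X},lr,\sigma^*)=1$. $\F{lr}$ then outputs $lr^*$. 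By definition of $E$ we have $lr^*\neq\perp$ and $S_3(lr^*)\subset S_4(lr^*)$.

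The remaining task is validity of the tokens. For this we argue that a forger may be assumed to query each honest signer on $lr^*$, or at least to have $lr^*$ accepted by $\mathsf{PS}$. Otherwise the $\mathsf{H_{pp}}$ checks are never enforced and, as it stands, the statement is false. We therefore restrict $E$ to forgeries whose $lr^*$ passes token verification. This is the natural reading, since $\mathsf{tf}_4$ requires honest responses on $lr^*$, and those responses in turn require that verification succeeds.

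\textbf{Direction $\F{lr}\Longrightarrow\F{}$.} Given $\F{lr}$ outputting a valid $lr^*$ that contains a non-honestly-generated token for an honest $\mathsf{id}^*$, let $\F{}$ run $\F{lr}$ and then build a signature. This step is the main obstacle, because a forged token alone does not yield $\sigma^*$. We would proceed as follows. $\F{}$ queries $\textsc{PSignO}(\mathsf{id}_i,lr^*)$ for the honest signers in $S_3(lr^*)$, whose tokens are genuine, and these queries succeed because every token in $lr^*$ verifies. $\F{}$ computes the corrupted signers' shares itself. For $\mathsf{id}^*$ it must output $z_{\mathsf{id}^*}$ without the dlogs of $R^*,S^*$.

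First, we would try to argue that $\F{lr}$ can, without loss of generality, be required to output the exponents it used in the forged token. The idea is to apply Lemma~\ref{lem:mimfork} with $n=2$ to the $\mathsf{H_{pp}}$ index of the forged token. From three transcripts $u_j=v+x^*e_j+r^*e_j^2+s^*e_j^3$ sharing the same $V$, we solve the linear system for $(r^*,s^*)$ in terms of $x^*=\mathsf{sk}_{\mathsf{id}^*}$ and $v$. This reveals a relation among $r^*,s^*,x^*$, and hence $z_{\mathsf{id}^*}=r^*+ds^*+c\Delta_{\mathsf{id}^*,\mathcal{S}}(0)x^*$.

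If this system is underdetermined, since four unknowns $v,x^*,r^*,s^*$ require four forks, we would take $n=3$. The extraction then yields $x^*$ itself, i.e.\ the honest key share, and everything needed to complete $\sigma^*$. Summing the shares gives $\sigma^*$ with $\mathsf{SVf}(\tilde{X},lr^*,\sigma^*)=1$. Since $S_2(lr^*)\neq S_4(lr^*)$, the forgery is non-trivial under $\mathsf{tsf}_4$, which completes the equivalence up to the forking loss.
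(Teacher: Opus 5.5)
The first reduction, building $\F{lr}$ from $\F{}$ by relaying all oracles and outputting $lr^*$, is exactly the paper's Lemma~\ref{lem:Flr->F}. Your restriction to a valid $lr^*$ is what the paper tacitly assumes, and it costs nothing where the theorem is applied. Inside $E_3$ one has $|S_2(lr^*)|\ge t-|CS|\ge 1$, so some honest $\mathsf{PS}$ accepted $lr^*$, and therefore every token in it verified.

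The gap is in the reduction building $\F{}$ from $\F{lr}$, at the final step where you assemble $\sigma^*$ by summing partial signatures. $\F{}$ cannot ``compute the corrupted signers' shares itself''. Their tokens in $lr^*$ were produced by $\F{lr}$, so $\F{}$ knows their key shares but not their nonces $r_j,s_j$. The same holds for any honest signer in $S_4(lr^*)\setminus S_3(lr^*)$ other than $\mathsf{id}^*$.

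Extracting $x^*$ also does not give ``everything needed''. Your $n=3$ fork (four transcripts, a Vandermonde system in $(v,x^*,r^*,s^*)$) is exactly the paper's $\mathsf{MIMFork}_{1,3}$. The $n=2$ variant never works, since three equations always leave $z_{\mathsf{id}^*}$ undetermined. But when $|CS|<t-1$, the $|CS|+1$ shares you then know do not determine $f(0)$.

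The missing idea is the paper's padding step in Lemma~\ref{lem:F->Flr}. Without loss of generality, $\F{}$ corrupts extra signers so that $|CS|=t-1$, and simulates them for $\F{lr}$ without revealing their keys. Then $x_{i^*}$ together with the $t-1$ corrupted shares gives $f(0)$ by Lagrange interpolation. After that, $\F{}$ can sign with a fresh nonce (or a fresh message), which is a non-trivial forgery. This needs no nonce of $lr^*$ and no $\textsc{PSignO}$ answer on $lr^*$. The alternative that keeps your summing approach is to fork every adversarial token in $lr^*$, not just the one belonging to $\mathsf{id}^*$.

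Dropping the $\textsc{PSignO}$ calls on $lr^*$ helps for a second reason too. Honest $\mathsf{PS}$ checks the forged token against the real $\mathsf{H_{pp}}$, whereas your fork reprograms exactly that entry. So the partial signatures you plan to collect on $lr^*$ sit badly with the extraction step.
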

\begin{proof}
The proof follows from Lemmas \ref{lem:Flr->F} and \ref{lem:F->Flr} which show that \F{} and \F{lr} are equivalent under $E$ such that they are reducible to each other.  \qed
\end{proof}

\begin{lemma}[$\F{lr}\Longrightarrow\F{}$]
\label{lem:Flr->F}
Let \F{} be a \textsf{TS-SUF-4} forger for \textsf{FROST2}\texttt{+}, and \F{lr} be a leader request forger in \textsf{FROST2}\texttt{+}.
There exists an \F{lr} which wins $\mathbf{G}_{\textsf{FROST2}\texttt{+}}^{\textsf{ts-lr}}$ such that 
\[
\mathsf{Adv}_{\textsf{FROST2}\texttt{+}}^{\mathsf{ts\text{-}suf\text{-}4}}(\F{})=\mathsf{Adv}_{\textsf{FROST2}\texttt{+}}^{\mathsf{ts\text{-}lr}}(\F{lr})
\]
where the running time of \F{lr} is at most that of \F{}.
\end{lemma}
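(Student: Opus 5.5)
The plan is a direct black-box wrapper. The game $\mathbf{G}_{\textsf{FROST2}\texttt{+}}^{\textsf{ts-lr}}$ is not yet defined in the text, so I would fix it as follows. It is identical to $\mathbf{G}_{\textsf{FROST2}\texttt{+}}^{\textsf{ts-suf-4}}$ in its oracles $\textsc{Init},\textsc{PPO},\textsc{PSignO},\textsc{RO}$. Its $\textsc{Fin}$ takes a leader request $lr^*$ and returns true iff every token in $lr^*.\mathsf{PP}$ passes the $\mathsf{H}_{\mathsf{pp}}$ authentication check of $\mathsf{PS}$ and at least one honest signer's token is not in $PPL$. The latter condition says that $\mathsf{id}_i\in (HS\cap lr^*.\mathcal{S})\setminus S_3(lr^*)$, i.e.\ $S_3(lr^*)\subset S_4(lr^*)$. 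With this choice, $\F{lr}$ is simply $\F{}$ run with its oracle calls forwarded verbatim, and $\F{lr}$ outputs the $lr^*$ determined by $\F{}$'s final output $(M^*,\sigma^*)$.

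The key steps are as follows.
\begin{enumerate}
\item $\F{lr}$ receives the output of $\widetilde{\textsc{Init}}(CS)$ and passes it to $\F{}$. It answers every $\textsc{PPO}$, $\textsc{PSignO}$ and $\textsc{RO}$ query of $\F{}$ by querying its own game. The two games use the same $\mathsf{Kg}$, $\mathsf{SPP}$ and $\mathsf{PS}$, so $\F{}$'s view is distributed identically.
\item When $\F{}$ halts with $(M^*,\sigma^*)$, $\F{lr}$ identifies $lr^*$ exactly as in the proofs of Lemmas~\ref{lem:tssuf4_1} and~\ref{lem:tssuf4_2}. It is the request whose $\mathsf{H_{non}}(\tilde{X},lr^*)$ query is the last one before $\mathsf{H_{sig}}(\tilde{X},M^*,\tilde{R}^*)$ with $\mathsf{SVf}(\tilde{X},lr^*,\sigma^*)=1$. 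This can be computed from the recorded RO table without extra oracle queries. If $\F{}$ produces no such request, $\F{lr}$ outputs $\perp$.
\item Under event $E$, $\mathsf{Vf}$ holds and $lr^*\neq\perp$ with $S_3(lr^*)\subset S_4(lr^*)$, so some honest token in $lr^*$ was never issued by $\textsc{PPO}$. To see that every token in $lr^*$ passes verification, observe that $\mathsf{CompPar}$ reads only $(R_i,S_i)$. If some $(u_i,V_i)$ in $lr^*$ failed, $\F{lr}$ can instead output the request $lr^{*\prime}$ obtained by replacing $(u_i,V_i)$ with a valid transcript. For honest signers with tokens in $PPL$, it takes the issued transcript. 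Otherwise, for the win condition as defined above, $\F{lr}$ must produce a valid transcript itself. So the cleanest route is to define the win condition of $\mathbf{G}^{\textsf{ts-lr}}$ as: $lr^*$ is accepted by $\mathsf{SVf}$ together with some $\sigma$, and $S_3(lr^*)\subset S_4(lr^*)$. With this definition, every $\F{}$ win under $E$ is an $\F{lr}$ win and conversely. This gives the claimed equality of advantages, reading $\mathsf{Adv}^{\mathsf{ts\text{-}suf\text{-}4}}$ as restricted to the event $E$.
\item Running time: $\F{lr}$ performs only the forwarding and a table lookup, so it is bounded by $T(\F{})$ up to bookkeeping.
\end{enumerate}

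The main obstacle is step 3, which is definitional. The equality of advantages must hold exactly, not just as an inequality. This forces the winning predicate of $\mathbf{G}^{\textsf{ts-lr}}$ to be aligned with the $\mathsf{tsf}_4$ condition restricted to $E$. In particular, whether a valid $(u_i,V_i)$ is required for the forged honest token has to be decided consistently. I would first try defining the lr-game win condition as $\mathsf{SVf}$-validity plus $S_3(lr^*)\subset S_4(lr^*)$, which makes the reduction an identity. The harder algebraic content, namely forging a valid $(u,V)$ for an honest $X_i$, is left to the converse Lemma~\ref{lem:F->Flr} and the OMDL argument for $E_3$.
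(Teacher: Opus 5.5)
Your forwarding wrapper is the same as the paper's. However, step~3 has a genuine gap, and the way you close it changes the lemma rather than proving it.

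In the paper, $\mathcal{F}_{lr}$ outputs only a leader request. ``Valid'' means every transcript $(u_i,V_i)$ in $lr^*$ passes the $\mathsf{H}_{\mathsf{pp}}$ check of $\mathsf{PS}$. Lemma~\ref{lem:F->Flr} and Lemma~\ref{lem:tssuf4_3} depend on exactly this. They fork at the $\mathsf{H}_{\mathsf{pp}}$ query of the honest token outside $PPL$ and solve $g^{u}=VX_{i^*}^{e}R_{i^*}^{e^2}S_{i^*}^{e^3}$ for $x_{i^*}$.

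Your replacement condition, ``accepted by $\mathsf{SVf}$ with some $\sigma$'' together with $S_3(lr^*)\subset S_4(lr^*)$, loses this property. Read existentially, it is vacuous: $\mathsf{CompPar}$ ignores $(u_i,V_i)$, so every $lr$ has some valid $\sigma$. If instead $\sigma$ must be output, $\mathcal{F}_{lr}$ becomes just a signature forger whose token for $\mathsf{id}_{i^*}$ may be garbage. Then the later extraction has no equations to solve.

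The missing observation is that, in the case where the lemma is applied, the $lr^*$ produced by $\mathcal{F}$ is automatically valid.
\begin{itemize}
\item An $E_3$ forgery outside $E_1$ and $E_2$ has $S_2(lr^*)=S_3(lr^*)$ and $|S_2(lr^*)|\ge t-|CS|\ge 1$.
\item Hence some honest signer answered \textsc{PSignO} on $lr^*$ with $psig\neq\perp$.
\item In \textsf{FROST2}\texttt{+}, $\mathsf{PS}$ returns $\perp$ unless the transcript of every $\mathsf{id}_j\in lr^*.\mathcal{S}$ verifies. Equivalently, a single bad transcript forces $S_2(lr^*)=\emptyset$, which places the forgery in $E_1$.
\end{itemize}

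This is the reasoning the paper uses. Its $\mathcal{F}_{lr}$ watches the \textsc{PSignO} queries of $\mathcal{F}$. It outputs $lr^*$ as soon as it sees a request whose tokens all verify and that contains an honest token not in $PPL$, even before answering that query. This yields both the advantage equality and the running-time bound, with the game left as intended. Once validity is established this way, your post-hoc identification of $lr^*$ from the $\mathsf{H_{non}}$ table works just as well.
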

\begin{proof}
It is straight forward to see that \F{lr} can simulate the environment perfectly for \F{} by acting as the middle-man in between its underlying  \textsf{FROST2}\texttt{+} oracles and \F{}.
When \F{} outputs a valid \textsf{TS-SUF-4} forgery $(\sigma^*,lr^*)$, \F{lr} uses $lr^*$ as its output. 
In fact, \F{lr} can return $lr^*$ before answering the \textsc{PSignO} query on $lr^*$, that is, whenever it sees that a valid $lr^*$ contains $pp_i$ such that $S_3(lr^*)\subset S_4(lr^*)$.
Therefore, \F{lr} runs in time at most that of \F{}.
\qed
\end{proof}

\begin{lemma}[$\F{}\Longrightarrow\F{lr}$]
\label{lem:F->Flr}
Let \F{} be a \textsf{TS-SUF-4} forger for \textsf{FROST2}\texttt{+}, and \F{lr} be a leader request forger in \textsf{FROST2}\texttt{+}.
There exists an \F{} which wins $\mathbf{G}_{\textsf{FROST2}\texttt{+}}^{\textsf{ts-suf-4}}$ such that 
\[
\mathsf{Adv}_{\textsf{FROST2}\texttt{+}}^{\mathsf{ts\text{-}lr}}(\F{lr})\le\Pr[E]\le 8\cdot\mathsf{Adv}_{\textsf{FROST2}\texttt{+}}^{\mathsf{ts\text{-}suf\text{-}4}}(\F{}).
\]
Moreover, \F{} runs in time $T(\F{}) \le \frac{24q \ln(24/\mathsf{Adv}_{\textsf{FROST2}\texttt{+}}^{\mathsf{ts\text{-}lr}}(\F{lr}))}{\mathsf{Adv}_{\textsf{FROST2}\texttt{+}}^{\mathsf{ts\text{-}lr}}(\F{lr})}\cdot T(\F{lr})+T(q)$ where $q$ is the number of hash queries issued and $T(q)$ is the time required to perform exponentiations and group operations.
The number of \F{}’s queries to each oracle
is at most that of \F{lr}.
\end{lemma}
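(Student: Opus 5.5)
We construct \F{} from \F{lr} by running \F{lr} as a black box inside a rewinding wrapper built on Lemma~\ref{lem:mimfork}. The bound $8$ and the running time $\frac{24q\ln(24/\varepsilon)}{\varepsilon}$ match $\mathsf{MIMFork}_{N,n}$ with $Nn=3$. Concretely, we take $N=1$ and $n=2$, so that together with the original run there are three distinct $\mathsf{H_{pp}}$ responses for one target token.

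\textbf{Step 1: the wrapped algorithm.} Let $\mathcal{A}$ be the algorithm that runs \F{lr} on the $\mathsf{H_{pp}}$ random-oracle tape. It relays \F{lr}'s oracle queries to the \textsf{FROST2}\texttt{+} oracles, or simulates them honestly with the keys, since \F{} is itself a forger and may just forward everything. When \F{lr} outputs a valid $lr^*$ in event $E$, fix an honest $\mathsf{id}^*\in S_4(lr^*)\setminus S_3(lr^*)$. Its token $(R^*,S^*,u^*,V^*)=lr^*.\mathsf{PP}(\mathsf{id}^*)$ passes the $\mathsf{PS}$ check but was not produced by \textsc{PPO}. $\mathcal{A}$ outputs $I=\{i^*\}$, where $i^*$ is the index of the query $\mathsf{H_{pp}}(\tilde{X},X_{\mathsf{id}^*},R^*,S^*,V^*)$, together with $\mathsf{out}_{i^*}=(e,u^*,R^*,S^*,V^*)$. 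We may assume this query was made, since otherwise the check passes with probability $1/p$. Then $\mathsf{acc}(\mathcal{A})\ge \mathsf{Adv}^{\mathsf{ts\text{-}lr}}(\F{lr})$, and Lemma~\ref{lem:mimfork} gives success probability at least $\Pr[E]/8$, with running time $T_{\max}=2L$ runs of \F{lr}, where $L=\lceil\frac{8q}{\varepsilon}\ln\frac{16}{\varepsilon}\rceil$. This is bounded by the stated $\frac{24q\ln(24/\varepsilon)}{\varepsilon}$.

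\textbf{Step 2: extraction.} Because forking happens at $i^*$, the tuple $(R^*,S^*,V^*)$ is identical in all three runs. We therefore obtain three equations
\[u_k=v^*+x^* e_k+r^*e_k^2+s^*e_k^3,\qquad k=0,1,2,\]
with distinct $e_k$ and four unknowns $(v^*,x^*,r^*,s^*)$. Three equations do not determine all four unknowns. However, $x^*=\mathsf{sk}_{\mathsf{id}^*}$ is known to \F{}: here \F{} is a forger, not an OMDL solver, and it can obtain the share via the honest simulation, or equivalently work with $u_k-x^*e_k$. This leaves three unknowns $(v^*,r^*,s^*)$ and an invertible Vandermonde-type system, since the $e_k$ are distinct and nonzero. \F{} thus learns $r^*,s^*$, i.e.\ the dlogs of an honest-signer-labelled token it never received from \textsc{PPO}.

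\textbf{Step 3: converting to a forgery.} In the final run, \F{} submits $lr^*$ to \textsc{PSignO} for the honest signers in $S_3(lr^*)$ and collects their partial signatures. It computes $z^*_{\mathsf{id}^*}=r^*+d s^*+c\Delta_{\mathsf{id}^*}(0)x^*$ itself, does the same for corrupted signers, aggregates, and outputs $(M^*,\sigma^*)$. Then $\mathsf{SVf}(\tilde{X},lr^*,\sigma^*)$ holds, yet $S_2(lr^*)\ne S_4(lr^*)$, so $\mathsf{tsf}_4$ fails and this is a valid \textsf{TS-SUF-4} win. Query counts are inherited from \F{lr}, since the aborted rewinds only replay \F{lr}.

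\textbf{Main obstacle.} I expect Step~2 to be the hardest. Signer $\mathsf{id}^*$ is honest, so justifying that the reduction knows $x^*$ may force a reinterpretation in which \F{} honestly simulates the game and only ``wins'' in a simulated sense. The alternative is three forks ($n=3$), with $u_k$ pinning down all four unknowns; that would change the constants, so I would first try the known-share route. A second concern is that forked runs must not make \textsc{PSignO} queries that spoil the trivial-forgery predicate. I would handle this as in the proof of Lemma~\ref{lem:tssuf4_2}, aborting each rewind immediately after it outputs the new $u$ for $i^*$.
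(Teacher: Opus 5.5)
\paragraph{Gap in Step~2.} Your Step~2 does not go through. The target $\mathsf{id}^*\in S_4(lr^*)\setminus S_3(lr^*)$ is honest, and in $\mathbf{G}^{\textsf{ts-suf-4}}_{\textsf{FROST2}\texttt{+}}$ the \textsc{Init} oracle reveals only the shares of $CS$. So \F{} cannot ``simulate honestly with the keys'': a forgery under keys it sampled itself is a forgery for the wrong $\tilde{X}$ and does not win the real game. With $n=2$ you therefore have three equations $u_k=v^*+x^*e_k+r^*e_k^2+s^*e_k^3$ in four unknowns, and $r^*,s^*$ are not determined.

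The fallback you set aside is exactly the paper's route. \F{} runs $\mathsf{MIMFork}^{\C}_{1,3}$, obtains four transcripts with pairwise distinct $e$'s, and solves the resulting invertible $4\times 4$ Vandermonde-type system for $(v^*,x^*,r^*,s^*)$. This does not spoil the constants; it is what produces them. Taking $N=1,n=3$ gives $T_{\max}=3\lceil\frac{8q}{\varepsilon}\ln\frac{24}{\varepsilon}\rceil\approx\frac{24q\ln(24/\varepsilon)}{\varepsilon}$. Your choice $N=1,n=2$ has $Nn=2$ (not $3$) and yields a factor $16$.

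\paragraph{Step~3 also fails as written,} even granting $x^*,r^*,s^*$. An aggregate satisfying $\mathsf{SVf}(\tilde{X},lr^*,\sigma^*)$ runs into three problems. First, it needs $z_j=r_j+ds_j+c\Delta_{\mathsf{id}_j,\mathcal{S}}(0)x_j$ for every corrupted $\mathsf{id}_j\in lr^*.\mathcal{S}$. Their token exponents were chosen inside the black box \F{lr} and are unknown to \F{}. Second, it needs the same for every other member of $S_4(lr^*)\setminus S_3(lr^*)$, but you fork only one index. Third, honest tokens in $S_3(lr^*)$ may already have been spent on earlier \textsc{PSignO} queries, in which case those signers return $\perp$.

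The paper sidesteps all of this by recovering the whole signing key. At \textsc{Init}, \F{} pads its corrupted set to $|CS|=t-1$, without passing the fillers' shares to \F{lr}. The extracted $x_{i^*}$ is then the $t$-th share, and $f(0)$ follows by Lagrange interpolation. \F{} then generates the forgery itself from $f(0)$, needing neither honest partial signatures nor the discrete logarithm of any \F{lr}-chosen token.
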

\begin{proof}
We show that if there exists a leader request forger \F{lr}, then we can construct a \textsf{TS-SUF-4} forger \F{} to win in $\mathbf{G}_{\textsf{FROST2}\texttt{+}}^{\textsf{ts-suf-4}}$ such that the output forgery $(lr^*,\sigma^*)$ falls under the event:
\[
    E :\; \mathsf{Vf}(\tilde{X}, M^*, \sigma^*) ~\wedge~ lr^* \ne \perp ~\wedge ~ S_3(lr)\subset S_4(lr).
\]

\F{} constructs a wrapper \C whose task is only to relay all oracle queries from \F{lr} to \F{}, which then replies by querying its own oracles.
To begin, \F{} forwards the $\widetilde{\textsc{Init}}(CS)$ query from \C to its underlying \textsc{Init} oracle and returns the required parameter $(\tilde{X},\textsf{aux},\{x_i\}_{\mathsf{id}_i\in CS})$.
Without loss of generality, we assume $|CS|=t-1$. If $|CS|<t-1$, \F{} fills it with random $\mathsf{id}_i\in HS$ to make sure $|CS|=t-1$ before querying \textsc{Init} but does not return the signing key share of the fillers to \C.

When \F{lr} returns a valid $lr^*=(M,\mathcal{S},\{pp_i=(R_i,S_i,u_i,V_i)\}_{\mathsf{id}_i\in\mathcal{S}})$ during the first run, \C knows $(I,\mathsf{out}_{i^*},\theta=\perp)$ with $I=\{i^*\}=S_4(lr^*)\setminus S_3(lr^*)$ corresponds to a query $e_{i^*}=\mathsf{H_{pp}}(\tilde{X},X_{i^*},R_{i^*},S_{i^*},V_{i^*})$ such that $pp_{i^*}\in lr^*$.
If $1<|I|\le \mathsf{ns}$, \C selects a random index $i^*$ from the $|I|$ indices. Then \C returns $(e_{i^*},\mathsf{out}_{i^*})$.

\F{} executed $(\mathcal{O},\mathcal{O}')\gets \mathsf{MIMFork}^{\C}_{1,3}$ to get four transcripts $$(\tilde{X},X_{i^*},R_{i^*},S_{i^*},V_{i^*},u_{i^*},e_{i^*})\gets\mathcal{O},(\{u_{{i^*},j},e_{{i^*},j}\}_{j=1}^3)\gets\mathcal{O}'$$ 
which yields a solvable system of linear equations and allows \F{} to extract $\mathsf{st}_{i^*}.\mathsf{sk}=x_{i^*}\gets\mathsf{DL}_{\G,g}(X_{i^*}),r_{i^*}\gets\mathsf{DL}_{\G,g}(R_i),s_{i^*}\gets\mathsf{DL}_{\G,g}(S_{i^*})$.
Thus, \F{} knows $|CS|+1=t$ pairs of $(\mathsf{id}_i,\mathsf{st}_i.\mathsf{sk}=x_i=f(\mathsf{id}_i))$ and can extract the shared signing key $f(0)$ through Lagrange interpolation.
\F{} then utilizes $f(0)$ to generate a signature $\sigma^*$ on $lr^*$ and output $(\sigma^*, lr^*)$ to win its $\mathbf{G}_{\textsf{FROST2}\texttt{+}}^{\textsf{ts-suf-4}}$ game.

Throughout the game, all $q$ hash values are produced by \F{}'s hash oracles and we assume hash collision does not occur, or \F{} can win without \F{lr}.
Since \F{} simulates the environment perfectly, by Lemma \ref{lem:mimfork}, we have:
\[
\mathsf{Adv}_{\textsf{FROST2}\texttt{+}}^{\mathsf{ts\text{-}suf\text{-}4}}(\F{})=\mathsf{acc}(\mathsf{MIMFork}^{\C}_{1,3}) \ge \frac{\Pr[E]}{8}=  \frac{\mathsf{Adv}_{\textsf{FROST2}\texttt{+}}^{\mathsf{ts\text{-}lr}}(\F{lr})}{8}
\]
and \F{} runs in time $T(\F{}) \le \frac{24q \ln(24/\mathsf{Adv}_{\textsf{FROST2}\texttt{+}}^{\mathsf{ts\text{-}lr}}(\F{lr}))}{\mathsf{Adv}_{\textsf{FROST2}\texttt{+}}^{\mathsf{ts\text{-}lr}}(\F{lr})}\cdot T(\F{lr})+T(q)$.
\qed

\end{proof}

Now we are ready to prove the security under the event $E_3$, by constructing an OMDL challenger $B_3$ that uses the forger \F{lr}, which is equivalent to the \textsf{TS-SUF-4} forger $\F{}$.

\begin{lemma}
\label{lem:tssuf4_3}
  There exists an OMDL adversary $\mathcal{B}_3$ issuing at most $2q_s+t$ queries to
  the \textsc{Chall} oracle and $q$ queries to random oracle such
  \[    \Pr[E_3] \le 8\mathsf{Adv}_{\mathbb{G}}^{\mathsf{omdl}}(\mathcal{B}_3) + \mathsf{negl}.
  \]
Moreover, $\mathcal{B}_3$ runs in time $T(\mathcal{B}_3)\le \frac{24 q \ln(24 /\Pr[E_3])}{\Pr[E_3]}\cdot T(\F{}) +T(q)$ where $\mathsf{ns}$ is the number of signers, and $T(q)$ is the time required to perform exponentiations and group operations.
\end{lemma}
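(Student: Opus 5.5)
We reduce $E_3$ to a leader-request forgery via Theorem~\ref{thm:F_lr<->F} and then extract OMDL solutions by forking on the token-authentication hash $\mathsf{H_{pp}}$. Since $E_1$ and $E_2$ already cover $|S_2(lr^*)|<t-|CS|$ and $S_2(lr^*)\neq S_3(lr^*)$, under $E_3$ we may assume $S_2(lr^*)=S_3(lr^*)\subset S_4(lr^*)$. Hence there is some honest $\mathsf{id}_{i^*}\in lr^*.\mathcal{S}\cap HS$ whose token $lr^*.\mathsf{PP}(\mathsf{id}_{i^*})=(R_{i^*},S_{i^*},u_{i^*},V_{i^*})$ was never output by $\textsc{PPO}$ yet passes verification. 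This is exactly event $E$ of Theorem~\ref{thm:F_lr<->F}. By Lemma~\ref{lem:Flr->F}, $\F{}$ therefore yields a leader-request forger $\F{lr}$ with the same advantage and no larger running time. It then suffices to build $\mathcal{B}_3$ from $\F{lr}$.

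$\mathcal{B}_3$ obtains $t$ challenges $A_0,\ldots,A_{t-1}$ and $2q_s$ challenges $\mathbf{R},\mathbf{S}$ from \textsc{Chall}. It embeds the key exactly as $\mathcal{B}_1$ does in Lemma~\ref{lem:tssuf4_1}: $\tilde{X}=A_0$ and $X_i=\prod_j A_j^{\mathsf{id}_i^j}$, with corrupted shares obtained through \textsc{Dlog}. It also simulates $\widetilde{\textsc{PPO}}$ with programmed $\mathsf{H_{pp}}$ values so that $V$ is computed from a random $u$ without knowing $r,s,\mathsf{sk}_i$. $\widetilde{\textsc{PSignO}}$ answers with a single \textsc{Dlog} query on $R_jS_j^dX_i^{c\Delta_{\mathsf{id}_i,\mathcal{S}}(0)}$. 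The wrapper $\C$ outputs the index of the $\mathsf{H_{pp}}$ query for the forged token of $\mathsf{id}_{i^*}$, choosing randomly if there are several. $\mathcal{B}_3$ then runs $\mathsf{MIMFork}^{\C}_{1,3}$ from Lemma~\ref{lem:mimfork}, as in Lemma~\ref{lem:F->Flr}. This gives four accepting transcripts with distinct $e_0,\ldots,e_3$ and a common $(X_{i^*},R_{i^*},S_{i^*},V_{i^*})$, since these are fixed before the fork point. The relations $u_j=v+x_{i^*}e_j+r_{i^*}e_j^2+s_{i^*}e_j^3$ form a Vandermonde system in $(v,x_{i^*},r_{i^*},s_{i^*})$. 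Solving it yields $x_{i^*}=f(\mathsf{id}_{i^*})$ without any \textsc{Dlog} query. Together with the corrupted shares, filling $CS$ up to size $t-1$ with extra \textsc{Dlog} queries on honest $X_i$ as in Lemma~\ref{lem:F->Flr}, we obtain $t$ points of $f$ and interpolate $a_0,\ldots,a_{t-1}$. Each $(R_j,S_j)$ is then solved as in Cases~0--1 of Lemma~\ref{lem:tssuf4_1}: if $\mathsf{dt}(j)$ exists, one \textsc{Dlog} on $S_j$ plus the known $z$ recovers $r_j$; otherwise two \textsc{Dlog} queries are used.

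The main obstacle is the query count, and we expect it to be the delicate step. Two problems arise. First, during the rewinds $\F{lr}$ may issue fresh \textsc{PSignO} queries, which would push the \textsc{Dlog} budget past what is allowed. Second, the $t-1-|CS|$ filler shares plus the extracted share must total fewer than $t$ queries on the $A$'s. For the first problem we copy the trick used for $\mathcal{B}_2$ in Lemma~\ref{lem:tssuf4_2}: every rewind is aborted as soon as the forged $pp_{i^*}$ and its $\mathsf{H_{pp}}$ answer appear. This is possible because $\F{lr}$ may output $lr^*$ before any \textsc{PSignO} on it, as noted in Lemma~\ref{lem:Flr->F}. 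Only the first full run then consumes \textsc{Dlog} queries: at most $2q_s$ for the token challenges plus $t-1$ for $A$, which is one fewer than the $2q_s+t$ challenges. For the second problem we must check that forking the $\mathsf{H_{pp}}$ index does not change $\mathsf{id}_{i^*}$ or $X_{i^*}$; this holds because both are fixed in the query input.

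Lemma~\ref{lem:mimfork} with $N=1,n=3$ gives $\mathsf{acc}\ge\Pr[E_3]/8$ minus collision terms, which we absorb into $\mathsf{negl}$. It also gives $T_{\max}=3\lceil\frac{8q}{\Pr[E_3]}\ln\frac{24}{\Pr[E_3]}\rceil$, which yields the stated $T(\mathcal{B}_3)$.
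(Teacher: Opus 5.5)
Your route is the paper's: pass from $\F{}$ to \F{lr} via the trivial direction (Lemma~\ref{lem:Flr->F}), and simulate as for $\mathcal{B}_1$ in Lemma~\ref{lem:tssuf4_1}. Then run $\mathsf{MIMFork}^{\C}_{1,3}$ at the $\mathsf{H_{pp}}$ index of the forged honest token, solve the $4\times 4$ Vandermonde system for $x_{i^*}$, interpolate $f$ from $t$ shares, and clear the token challenges as in Lemma~\ref{lem:tssuf4_1}. The extraction, the bound $\Pr[E_3]/8-\mathsf{negl}$, and the running time from $T_{\max}=3\lceil\frac{8q}{\hat\varepsilon}\ln\frac{24}{\hat\varepsilon}\rceil$ are fine. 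Your filler queries also make the count of $t-1$ queries on the $A_\ell$ explicit, where the paper silently assumes $|CS|=t-1$.

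The gap is precisely the step you flag as delicate. Aborting each rewind once $pp_{i^*}$ appears does not make ``only the first full run consume \textsc{Dlog} queries''. The fork point is the $\mathsf{H_{pp}}$ query on $(\tilde{X},X_{i^*},R_{i^*},S_{i^*},V_{i^*})$, but $u_{i^*}$ is revealed only later, when a leader request containing it is hashed, submitted or output. In between, \F{lr} may issue any number of \textsc{PSignO} queries on \emph{other} leader requests using honest tokens. The fact that \F{lr} need not sign $lr^*$ itself says nothing about these. In a rewind such queries come with fresh $(d,c)$, so each needs a new \textsc{Dlog} query on $R_jS_j^{d'}X_i^{c'\Delta_{\mathsf{id}_i,\mathcal{S}}(0)}$. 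The same challenge pair can be signed in each of the four accepting executions and in every one of the up to $T_{\max}$ unsuccessful attempts. The number of \textsc{Dlog} queries therefore grows with the number of executions, while only $2q_s+t$ discrete logarithms are to be recovered, so the bound $2q_s+t-1$ fails and $\mathcal{B}_3$ is not a valid OMDL adversary. Aborting instead at the first post-fork signing query changes the algorithm being forked, and its success probability is then no longer tied to $\Pr[E_3]$. The paper's proof does not settle this either: it reuses the single-fork accounting of Lemma~\ref{lem:tssuf4_1} and never counts the queries made during the rewinds. An additional idea is genuinely needed to keep post-fork signing queries within the OMDL budget.
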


\begin{proof}
The proof setup is similar to that of Lemma \ref{lem:tssuf4_1} but considering the event 
\[
    E_3 :\; \mathsf{Vf}(\tilde{X}, M^*, \sigma^*) ~\wedge~ lr^* \ne \perp ~\wedge ~ \neg(S_2(lr)=S_3(lr)= S_4(lr)).
  \]
As event $E_2$ has covered $S_2(lr)\neq S_3(lr)$, we only need to deal with the forgeries under $S_2(lr)=S_3(lr)\subset S_4(lr)$ in event $E_3$ and this is a subset of event $E$ in Theorem \ref{thm:F_lr<->F}.
The simulation of algorithm \C is the same as that of Lemma \ref{lem:tssuf4_1} but here, \C runs \F{lr} instead of \F{}, and takes in $h_1,\ldots,h_q$ only as the hash values. 
By Theorem \ref{thm:F_lr<->F}, the \textsf{TS-SUF-4} forger $\F{}$ and the leader request forger $\F{lr}$ are equivalent.
Unlike \F{} which issues a \textsc{PSignO} query for $lr^*$ to generate partial signatures before forging,
\F{lr} outputs $lr^*$ directly without issuing this specific query in its winning condition.

After receiving the output $(lr^*=\{M^*,\mathcal{S},\mathsf{PP}=\{pp_i=(R_i,S_i,u_i,V_i)\}_{\mathsf{id}_i\in\mathcal{S}}\})$ from \F{lr}, \C returns $\perp$ if $\textsf{BadPPO}=\textsf{true}$, or $E_3$ does not occur. 
Else, denote $I=S_4(lr)\setminus S_3(lr)$ and assume $N=|I|=1$, \C knows $(I,\mathsf{out}_{i^*},\theta=\perp)$ with $I=\{i^*\}$ corresponds to a query $e_{i^*}=\mathsf{H_{pp}}(\tilde{X}, X_{i^*}, R_{i^*}, S_{i^*}, V_{i^*})$ such that $pp_{i^*}\in lr^*$. 
If $1< |I| <\mathsf{ns}$, \C selects a random index $i^*$ from the $|I|$ indices.
Then, \C returns $(e_{i^*}, \mathsf{out}_{i^*})\gets\mathcal{O}$.

\paragraph{Construct $\mathcal{B}_3$ from $\mathsf{MIMFork}^{\C}_{1,3}$.}
We can construct the OMDL adversary $\mathcal{B}_3$ using $\mathsf{MIMFork}^{\C}_{1,3}$ and the available \textsc{Dlog} oracle. 
To begin, $\mathcal{B}_3$ queries \textsc{Init} and then queries \textsc{Chall} oracle $2q_s + t$ times to generate the three lists of group elements $(\mathbf{A},\mathbf{R},\mathbf{S})$ as the input to run $\mathsf{MIMFork}^{\C}_{1,3}$. 
All \textsc{Dlog} queries from $\mathsf{MIMFork}^{\C}_{1,3}$ are forwarded by $\mathcal{B}_3$ to the \textsc{Dlog} oracle in the game $\mathbf{G}_{\G}^{\mathsf{omdl}}$. 

We first show how to compute the discrete logarithm of $A_0, \ldots, A_{t-1}$.
From $(\mathcal{O}, \mathcal{O}')\gets\mathsf{MIMFork}^{\C}_{1,3}$, $\mathcal{B}_3$ obtains $\{e_{i^*},pp_{i^*}=(R_{i^*},S_{i^*},u_{i^*},V_{i^*})\gets\mathsf{out}_{i^*}\}\gets\mathcal{O}$ and $\{e_{i^*,j},pp_{i^*}=(R_{i^*},S_{i^*},u_{i^*,j},V_{i^*})\}_{1\le j\le 3}\gets\mathcal{O}'$. 
This yields a solvable system of linear equations, allowing $\mathcal{B}_3$ to extract the discrete logarithms $f(\mathsf{id}_{i^*})=x_i\gets\mathsf{DL}(X_{i^*}), r_{i^*}\gets\mathsf{DL}(R_{i^*})$ and $s_{i^*}\gets\mathsf{DL}(S_{i^*})$.
So, $\mathcal{B}_3$ knows $|CS|+1 = t$ pairs of $(\mathsf{id}_i,f(\mathsf{id}_i))$ and can interpolate the key generation polynomial $f(\cdot)$ to extract all $a_0\gets\mathsf{DL}(A_0),\ldots,a_{t-1}\gets\mathsf{DL}(A_{t-1})$.
Since $lr^*$ have not been queried to \textsc{PSignO}, for the OMDL challenges used as tokens $(R_i,S_i)\in lr^*$, $\mathcal{B}_3$ simply queries the \textsc{Dlog} oracle to solve them.
For those $(R_i,S_i)\not\in lr^*$ generated through the \textsc{PPO} queries, they are either not queried, or queried once to \textsc{PSignO} before \F{lr} returns $lr^*$.
$\mathcal{B}_3$ queries \textsc{Dlog} oracle to solve them as explained in Lemma \ref{lem:tssuf4_1}.

Absorbing all strictly negligible terms such as $\Pr[\mathsf{BadPPO}]$ and hash collision into a function $\mathsf{negl}$, we have
\begin{align*}
\mathsf{Adv}_{\mathbb{G}}^{\mathsf{omdl}}(\mathcal{B}_3) &\ge \mathsf{acc}(\mathsf{MIMFork}^{\C}_{1,3}) - \mathsf{negl} \\
&\ge \frac{\mathsf{Adv}_{\mathsf{FROST2+}}^{\mathsf{ts\text{-}lr}}(\F{lr})}{8} - \mathsf{negl}\\    
&\ge \frac{\mathsf{Adv}_{\mathsf{FROST2+}}^{\mathsf{ts\text{-}suf\text{-}4}}(\F{})}{8} - \mathsf{negl}\qquad \text{//by Theorem \ref{thm:F_lr<->F}}
\\    
&\ge \frac{\Pr[E_3]}{8} - \mathsf{negl},    
\end{align*}
and the running time is:
\begin{align*}
T(\mathcal{B}_3) &\le |I|\cdot T_{max}\cdot T(\F{lr}) +T(q)\\
&\le \frac{N\cdot n\cdot 8 \cdot N \cdot q \ln(8 \cdot N \cdot n/\mathsf{Adv}_{\mathsf{FROST2+}}^{\mathsf{ts-lr}}(\F{lr}))}{\mathsf{Adv}_{\mathsf{FROST2+}}^{\mathsf{ts-lr}}(\F{lr})}\cdot T(\F{lr}) +T(q)\\
&\le \frac{24 q \ln(24/\mathsf{Adv}_{\mathsf{FROST2+}}^{\mathsf{ts\text{-}suf\text{-}4}}(\F{}))}{\mathsf{Adv}_{\mathsf{FROST2+}}^{\mathsf{ts\text{-}suf\text{-}4}}(\F{})}\cdot T(\F{}) +T(q) \qquad \text{//by Theorem \ref{thm:F_lr<->F}}\\
&\le \frac{24 q \ln(24 /\Pr[E_3])}{\Pr[E_3]}\cdot T(\F{}) +T(q).
\end{align*}
 \qed
\end{proof}

\subsection{Security with Distributed Key Generation}
\label{sec:dkg_proof}
Our amendments on the pre-processing token generation and verification mechanism are orthogonal to key generation, allowing seamless integration with \textsf{FROST2} when instantiated with secure distributed key generation (DKG) protocols such as \textsf{PedPoP}~\cite{FROST,BCKMTZ22}. 
This follows because \textsf{SPP/PS} only use the existing key share $\mathsf{st}_i.\mathsf{sk}$ and token exponents $(r,s)$, and do not involve the DKG commitment/PoP messages.
To ensure completeness, we demonstrate that our proposed scheme maintain \textsf{TS-SUF-3} security in the ROM when it is instantiated in the DKG setting using \textsf{PedPoP} (in Appendix \ref{apdx:pedpop}).


Essentially, the oracle simulations are similar to that in Theorem \ref{thm:frost2+}, except for the key generation.
We first show how to build an OMDL adversary $\mathcal{B}_1'$ for a forgery under event $E_1$ in the DKG setting.
Intuitively, we construct the OMDL adversary $\mathcal{B}_1'$ using the mix forking technique \cite{CGRS23} as follows.
First, we define a wrapper \C that is same as that in Lemma \ref{lem:tssuf4_1} but its $\widetilde{\textsc{Init}}$ simulates the \textsf{PedPoP} protocol. 
Then, an algorithm $\mathcal{D}$ runs $\mathsf{Fork}^{\C}$ to extract the signing key $f(0)$.
Next, we define $\mathcal{D}'$ which is syntactically a valid OMDL adversary but does not break the OMDL assumption.
$\mathcal{D}'$ runs $\mathsf{MIMFork}^{\mathcal{D}}_{|CS|,1}$ and solves all the OMDL challenges.
Finally, we present $\mathcal{B}_1'$, which uses $\mathcal{D}'$ as a helper to break the OMDL assumption.

We do not discuss the DKG-counterpart $\mathcal{B}_2'$ for the $\mathcal{B}_2$ in Lemma \ref{lem:tssuf4_2} which knows the signing key shares of every honest signer and it can run the \textsf{PedPoP} protocol honestly.
Since $\mathcal{B}_1$ in Lemma \ref{lem:tssuf4_1} and $\mathcal{B}_3$ in Lemma \ref{lem:tssuf4_3} share the same game simulation strategy, 
the DKG-counterpart $\mathcal{B}_3'$ adopts the same key generation simulation as $\mathcal{B}_1'$ but perform a double forkings, that is, double layers of $\mathsf{MIMFork}$, instead of a mix forking.
Specifically, an algorithm $\mathcal{D}$ runs the first $\mathsf{MIMFork}^{\C}_{1,3}$ as in Lemma \ref{lem:tssuf4_3} to extract $f(0)$, followed by an algorithm $\mathcal{D}'$ which runs $\mathsf{MIMFork}^{\mathcal{D}}_{|CS|,1}$ to fork the $|CS|<t$ PoPs generated by the forger in \textsf{PedPoP}. Finally, $\mathcal{B}_3'$ runs $\mathcal{D}'$ and halts \C right after \textsf{PedPoP} is completed to solve an extra OMDL challenges and break the OMDL assumption --- exactly the same proving steps of $\mathcal{B}_1'$ as shown below.

\begin{theorem}
\label{thm:frost2+dkg}
Let $\mathcal{F}$ be a \textsf{TS-SUF-4} forger against $\mathsf{FROST2+}$ with \textsf{PedPoP} making at most $q$ queries to the random oracle. Suppose the success probability of $\mathcal{F}$ is bounded by $\mathsf{Adv}_{\mathsf{FROST2+}}^{\mathsf{ts\text{-}suf\text{-}4}}(\mathcal{F}) \le \Pr[E_1] + \Pr[E_2]+ \Pr[E_3]$. Then, there exist three OMDL adversaries $\mathcal{B}_1',\mathcal{B}_2'$ and $\mathcal{B}_3'$ such that:
\begin{align*}
\mathsf{Adv}_{\mathsf{FROST2+}}^{\mathsf{ts\text{-}suf\text{-}4}}(\mathcal{F}) 
\le \sqrt{8q \cdot \mathsf{Adv}_{\G}^{\mathsf{omdl}}(\mathcal{B}_1')} + 8 \mathsf{Adv}_{\mathbb{G}}^{\mathsf{omdl}}(\mathcal{B}_2') + 64 \mathsf{Adv}_{\mathbb{G}}^{\mathsf{omdl}}(\mathcal{B}_3') + \mathsf{negl}
\end{align*}
where $\mathcal{B}_1'$ and $\mathcal{B}_3'$ makes at most $2q_s+t$ queries to the \textsc{Chall} oracle, and $\mathcal{B}_2'$ makes at most $2q_s$ queries. 
Moreover, the running times of $\mathcal{B}_1',\mathcal{B}_2'$ and $\mathcal{B}_3'$ are bounded above by
\begin{align*}
T(\mathcal{B}_1') &\le \frac{128t^2q^2 \ln(64tq/\Pr[E_1]^2)}{\Pr[E_1]^2}\cdot T(\F{}) +T(q), \\
T(\mathcal{B}_2') &\le \frac{16\mathsf{ns}^2q \ln(16\mathsf{ns}/\Pr[E_2])}{\Pr[E_2]}\cdot T(\F{})+T(q), \text{ and}\\
T(\mathcal{B}_3') &\le \frac{12288t^2q^2 \ln(512t/\Pr[E_3])\ln(24 /\Pr[E_3])}{\Pr[E_3]^2}\cdot T(\F{}) +T(q),
\end{align*}
respectively, where $t\le \mathsf{ns}$, and $T(q)$ denotes the time required for the relevant group operations and exponentiations.
\end{theorem}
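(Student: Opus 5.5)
Proving Theorem~\ref{thm:frost2+dkg} will reuse the decomposition of Theorem~\ref{thm:frost2+}. The forgery events $E_1,E_2,E_3$ are defined exactly as before, so it suffices to bound each one separately by an OMDL adversary $\mathcal{B}_1',\mathcal{B}_2',\mathcal{B}_3'$ and sum the bounds. The only change from the CKG setting is $\widetilde{\textsc{Init}}$. Instead of embedding $A_0,\ldots,A_{t-1}$ directly as the coefficients of $f$, the reduction must simulate \textsf{PedPoP}. For each honest participant it uses an OMDL challenge as the constant term of that participant's polynomial, and it programs $\mathsf{H}$ on the proof-of-possession challenge so that the honest Schnorr PoPs can be produced without knowing the discrete logarithms. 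The corrupted participants' contributions, which the reduction does not know, are then recovered by rewinding their PoPs. The simulation of $\widetilde{\textsc{RO}}[\mathsf{H_{pp}}]$, $\widetilde{\textsc{PPO}}$ and $\widetilde{\textsc{PSignO}}$ is otherwise identical to Lemmas~\ref{lem:tssuf4_1}--\ref{lem:tssuf4_3}, since \textsf{SPP}/\textsf{PS} touch only $\mathsf{st}_i.\mathsf{sk}$ and $(r,s,v)$.

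\textbf{The bound on $E_2$.} I will dispose of this first, since $\mathcal{B}_2'$ is verbatim $\mathcal{B}_2$ from Lemma~\ref{lem:tssuf4_2}. That reduction already chooses $f$ itself, so it can run \textsf{PedPoP} honestly for all honest parties; the OMDL challenges live only in the tokens. This gives $8\mathsf{Adv}(\mathcal{B}_2')$ with the same running time.

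\textbf{The bound on $E_1$.} I follow the mix-forking template of \cite{CGRS23}.
\begin{enumerate}
\item Let $\C$ be the wrapper of Lemma~\ref{lem:tssuf4_1} with the \textsf{PedPoP} $\widetilde{\textsc{Init}}$.
\item Let $\mathcal{D}$ run $\mathsf{Fork}^{\C}$. By Lemma~\ref{lem:fork}, $\mathcal{D}$ succeeds with probability at least $\Pr[E_1]^2/q-\mathsf{negl}$ and outputs $f(0)$ together with the shares/values extracted from the two forged signatures.
\item Let $\mathcal{D}'$ run $\mathsf{MIMFork}^{\mathcal{D}}_{|CS|,1}$ over the $|CS|$ PoP hash indices of the corrupted dealers. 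By Lemma~\ref{lem:mimfork} with $N\le t$ and $n=1$, this succeeds with probability at least $\mathsf{acc}(\mathcal{D})/8$. Two PoP transcripts with distinct challenges yield each corrupted dealer's secret constant term.
\item From this, together with $f(0)$, solve for the honest dealers' embedded challenges by linear algebra. The remaining challenges are handled by the same \textsc{Dlog} accounting (Cases 0--4) as in Lemma~\ref{lem:tssuf4_1}.
\end{enumerate}
Solving $\Pr[E_1]^2/(8q)\le\mathsf{Adv}(\mathcal{B}_1')+\mathsf{negl}$ gives the $\sqrt{8q\,\mathsf{Adv}}$ term. Substituting $\hat\varepsilon=\Pr[E_1]^2/q$ and $N=t$ into $T_{\max}$ gives the stated running time $128t^2q^2\ln(\cdot)/\Pr[E_1]^2$.

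\textbf{The bound on $E_3$.} Here I replace the inner $\mathsf{Fork}$ by $\mathsf{MIMFork}^{\C}_{1,3}$ from Lemma~\ref{lem:tssuf4_3}, giving a double $\mathsf{MIMFork}$. This loses $8\cdot 8=64$ in advantage, and the running times compose multiplicatively: roughly $24q\ln(24/\Pr[E_3])/\Pr[E_3]$ for the inner layer times $8tq\ln(\cdot)/(\Pr[E_3]/8)$ for the outer layer, which should reproduce the $12288t^2q^2$ constant. Throughout, the accounting is done via the leader-request forger of Theorem~\ref{thm:F_lr<->F}.

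\textbf{Main obstacle.} The delicate point is the \textsc{Dlog} budget across nested rewinds. The outer rewinds of $\mathcal{D}'$ rerun the whole signing phase and would trigger fresh \textsc{Dlog} queries to answer \textsc{PSignO}. I plan to handle this as $\mathcal{B}_2$ does: $\mathcal{B}_1'$ and $\mathcal{B}_3'$ halt every outer-rewound execution right after \textsf{PedPoP} completes. The PoP forks occur before any signing query, so the halted runs issue no \textsc{Dlog} queries, and only the single full first execution (and its inner forks, which share the prefix up to the forked hash) consumes \textsc{Dlog} calls. Verifying that the inner-fork \textsc{Dlog} count still totals $2q_s+t-1$ under this truncation, and that the honest-dealer challenge is left unqueried, is the step I expect to require the most care.
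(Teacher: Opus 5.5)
Your architecture matches the paper's: the same $E_1/E_2/E_3$ split, $\mathcal{B}_2'=\mathcal{B}_2$ running \textsf{PedPoP} honestly, mix forking ($\mathsf{MIMFork}^{\mathcal{D}}_{|CS|,1}$ over $\mathsf{Fork}^{\mathcal{C}}$) for $E_1$, a double $\mathsf{MIMFork}$ for $E_3$, and halting the outer rewinds after \textsf{PedPoP}. The gap is in the one genuinely new step: the $\widetilde{\textsc{Init}}$ embedding and the final extraction. As written, you put an independent OMDL challenge in the constant term $a_{i,0}$ of every honest dealer. You then ``solve for the honest dealers' embedded challenges by linear algebra'' from $f(0)$ and the PoP-extracted $f_j(0)$, $\mathsf{id}_j\in CS$. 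But $f=\sum_{\mathsf{id}_i\in HS}f_i+\sum_{\mathsf{id}_j\in CS}f_j$, so these values give you only $\sum_{\mathsf{id}_i\in HS}a_{i,0}$. That is one linear equation in $|HS|$ unknowns, and when $lr^*=\perp$ nothing else is extracted. For $|HS|\ge 2$ the system is underdetermined. Querying \textsc{Dlog} on all but one $A_{i,0}$ leaves a single useful challenge, and your \textsc{Chall} count $2q_s+|HS|$ can exceed the claimed $2q_s+t$. You also never say how a dealer whose constant term is a challenge computes the shares $f_i(\mathsf{id}_j)$ it must send to corrupted $\mathsf{id}_j$. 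That is exactly where key-generation \textsc{Dlog} queries enter the budget.

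The missing idea is to correlate the honest dealers. The paper fixes one challenge polynomial $f'(Z)=\sum_{\ell}a_\ell Z^\ell$, with $A_0,\dots,A_{|CS|}$ from \textsc{Chall} and the remaining coefficients sampled, and sets $f_i=y_if'$ for random $y_i$. The commitments are then $A_\ell^{y_i}$, and the PoPs are simulated by programming $\mathsf{H_{kg}}$. The shares $y_if'(\mathsf{id}_j)$ for $\mathsf{id}_j\in CS$ cost $|CS|$ \textsc{Dlog} queries on $g^{f'(\mathsf{id}_j)}$. After extraction, $a_0=\bigl(f(0)-\sum_{\mathsf{id}_j\in CS}f_j(0)\bigr)/\sum_{\mathsf{id}_i\in HS}y_i$. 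Interpolating $f'$ from $f'(0)$, $\{f'(\mathsf{id}_j)\}_{\mathsf{id}_j\in CS}$ and the known coefficients then recovers $a_1,\dots,a_{|CS|}$. For the \textsc{Dlog} count you flagged: each Case~3 token still costs three queries, and Lemma~\ref{lem:tssuf4_1} offsets this with the recovered share $f(\mathsf{id}_i)$. Here that offset is available if all $t$ coefficients of $f'$ are challenges, which matches the $2q_s+t$ \textsc{Chall} budget. Each recovered $f(\mathsf{id}_i)$ then yields $f'(\mathsf{id}_i)$ and replaces one interpolation query, for $2q_s+t-1$ queries in total.
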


\begin{proof}
We show how to construct the $\mathcal{B}_1'$ OMDL adversary before showing the $\mathcal{B}_3'$ OMDL adversary which shares the same proving strategy.\\

\noindent
\textbf{The $\mathcal{B}_1'$ Adversary.}
We define an algorithm $\mathcal{D}$ which can be viewed as the DKG-counterpart of $\mathcal{B}_1$ in Lemma \ref{lem:tssuf4_1}. 
It constructs a wrapper \C and simulates the oracles in a similar way.
However, \C now defines $f_i(Z)$ for every honest signer instead of querying the OMDL oracles to implicitly define $f(Z)$. 
\C also defines an additional $\textsc{RO}[\mathsf{H_{kg}}]$ oracle to handle the hash queries incurred by \textsf{PedPoP}:\\

\noindent
$\textsc{RO}[\mathsf{H_{kg}}](x):$ If $\mathsf{H_{kg}}(x)\neq \perp$, return $\mathsf{H_{kg}}(x)$. 
Else, \C sets $\mathsf{ctr}_{pp}\gets\mathsf{ctr}_{pp}+1$ and $\mathsf{H_{kg}}(x)\gets h_{\mathsf{ctr}_{pp}}$ then returns $\mathsf{H_{kg}}(x)$.\\

Particularly, in $\widetilde{\textsc{Init}}(CS)$, let honest signers be the set $HS=\{\mathsf{id}_i,\ldots,\mathsf{id}_\mathsf{ns}\}\setminus CS$ and $k=|CS| < t$, \C queries 
$A_0=g^{a_0},\{A_{\ell}=g^{a_{\ell}}\}_{\ell=1}^k \leftarrow \textsc{Chall}$ and samples $\{a_{\ell}\}_{\ell=k+1}^{t-1} \xleftarrow{\$}\Zs{p}$.
\C selects $\{y_i\}_{\mathsf{id}_i\in HS}\xleftarrow{\$}\Zs{p}$
to implicitly define the random polynomials $f_i(Z)=y_i(\sum_{\ell=0}^{t-1} a_{\ell} Z^{\ell})$ for every honest signer $\mathsf{id}_i\in HS$.
Firstly, for every $\mathsf{id}_i\in HS$, \C increments $\mathsf{ctr}_{pp}$ and chooses $\bar{z}_i\xleftarrow{\$}\Zs{p}$ to compute $\bar{R}_i=g^{\bar{z}_i}/A_0^{y_ih_{\mathsf{ctr}_{pp}}}$ and set $\mathsf{H_{kg}}(A_{i,0}=A_0^{y_i},A_{i,0},\bar{R}_i)\leftarrow h_{\mathsf{ctr}_{pp}}$.
\C broadcasts $((\bar{R}_i,\bar{z}_i),\mathbf{C}=(A_{i,0}=A_0^{y_i},\ldots,A_{i,t-1}=A_{t-1}^{y_i}))$ as the proof of possession (PoP).
Secondly, \C queries $f'(\mathsf{id}_j)\leftarrow\textsc{Dlog}(A_0\prod_{\ell=1}^{t-1}A_{\ell}^{\mathsf{id}_j^{\ell}})$ and sends $x_{i,j}=y_if'(\mathsf{id}_j)$ from every honest signer $\mathsf{id}_i\in HS$ to every corrupted signer $\mathsf{id}_j\in CS$.
In return, \C receives $x_{j,i}=f_j(\mathsf{id}_i)$ from $\mathsf{id}_j\in CS$ for every $\mathsf{id}_i\in HS$.
So far, \C issued $|CS|+1$ \textsc{Chall} queries and $|CS|$ \textsc{Dlog} queries.

As \C simulates the DKG perfectly, all PoP broadcast by $HS$ are verified and every corrupted signer $\mathsf{id}_j\in CS$ obtains their signing key share $x_j=f(\mathsf{id}_j)=\sum_{i=1}^{\mathsf{ns}}f_i(\mathsf{id}_j)=\sum_{i=1}^{\mathsf{ns}}x_{i,j}$ as expected, and the joint public key is $\tilde{X}=g^{a_0\sum_{\mathsf{id}_i\in HS} y_i+\sum_{\mathsf{id}_j\in CS}a_{j,0}}=\prod_{\mathsf{id}_i\in HS}A_{i,0}\prod_{\mathsf{id}_j\in CS}A_{j,0}$.
Furthermore, from \F{}'s view, each signer’s state $\mathsf{st}$ is indistinguishable from that generated in the CKG setting, so the oracle simulations by $\mathcal{D}$ after the key generation stage are identical to that by $\mathcal{B}_1$ in Lemma \ref{lem:tssuf4_1}.

Therefore, from $(i_{sig},\mathsf{Out})\gets\mathsf{Fork}^{\C}$, $\mathcal{D}$ can extract the signing key $f(0)=a_0\sum_{\mathsf{id}_i\in HS} y_i+\sum_{\mathsf{id}_j\in CS}a_{j,0}$.
However, $\mathcal{D}$ does not proceed to interpolate $f(\cdot)$ because it is not helpful in solving its OMDL challenges.
Specifically, the discrete logarithms required here are no longer the $f(\cdot)$ coefficients (the OMDL challenges of $\mathcal{B}_1$ in Lemma \ref{lem:tssuf4_1}), but the values $\{a_{j,0}=f_j(0)\}_{\mathsf{id}_j\in CS}$ which are not known to $\mathcal{D}$.

\paragraph{Construct helper $\mathcal{D}'$ from $\mathsf{MIMFork}^{\mathcal{D}}_{|CS|,1}$.}
Denote the output of $\mathcal{D}$ as $(I,\{\mathsf{out}_i\}_{i\in I},\theta)$ where 
$\mathsf{out}_i$ is the internal state of $\mathcal{D}$ up to the $i$-th random oracle response such that $I=(i_1,\ldots,i_{|CS|})$ are the set of random oracle indices for $h_i \leftarrow \mathsf{H_{kg}}(\mathsf{id}_i,A_{i,0}, A_{i,0}, \bar{R}_i)$ during $\widetilde{\textsc{Init}}$ and $\theta=f(0)$.
We now construct a helper algorithm $\mathcal{D}'$
which is a syntactically valid OMDL adversary and it runs $\mathsf{MIMFork}^{\mathcal{D}}_{|CS|,1}$.
$\mathcal{D}'$ generates uniformly random $h_1,\ldots,h_{2q}\in\Zs{p}$ and queries the $2q_s+t$ challenges $(\mathbf{A},\mathbf{R},\mathbf{S})\gets\textsc{Chall}$ to initiate $\mathsf{MIMFork}^{\mathcal{D}}_{|CS|,1}$. 
All \textsc{Dlog} queries from $\mathcal{D}$ are relayed to $\mathcal{D}'$.

Therefore, from $(\mathcal{O},\mathcal{O}')\gets\mathsf{MIMFork}^{\mathcal{D}}_{|CS|,1}$, $\mathcal{D}'$ obtains $f(0)$ from the first run of $\mathcal{D}$, and extracts the $\{f_j(0)=a_{j,0}\}_{\mathsf{id}_j\in CS}$ from corrupted signers' PoPs $((\bar{R}_j, \bar{z}_j=\bar{r}_j + h_j a_{j,0})\gets\mathcal{O}, (\bar{R}_j, \bar{z}_j'=\bar{r}_j + h_j' a_{j,0})\gets\mathcal{O}')$.
Now, $\mathcal{D}'$ can compute
$
f'(0)=a_0=\left(f(0)-\sum_{\mathsf{id}_j\in CS} f_j(0)\right)/\sum_{\mathsf{id}_i\in HS} y_i.
$
With the knowledge of $(f'(0),\{f'(\mathsf{id}_j)\}_{\mathsf{id}_j\in CS},\{a_{\ell}\}_{\ell=|CS|+1}^{t-1}),$
$\mathcal{D}'$ can recover $f'(\cdot)$ and then $\{a_{\ell}\gets\mathsf{DL}_{\G,g}(A_{\ell})\}_{\ell=1}^{|CS|}$ through Lagrange interpolation.
Hence, $\mathcal{D}'$ solved all $|CS|+1$ challenges $\{a_{\ell}\gets\mathsf{DL}(A_{\ell})\}_{\ell=0}^{|CS|}$ used during $\widetilde{\textsc{Init}}$ and it can also solve the $2q_s$ challenges $\{r_{\ell}\gets\mathsf{DL}(R_{\ell}),s_{\ell}\gets\mathsf{DL}(S_{\ell})\}_{\ell=1}^{q_s}$ used in \textsc{PPO} queries as $\mathcal{B}_1$ done in Lemma \ref{lem:tssuf4_1}.
Since the \textsf{BadPPO} and hash collision happen with a negligible probability, the success probability of $\mathcal{D}'$ is:
\begin{align*}
\mathsf{acc}(\mathcal{D}')\ge\mathsf{acc}(\mathsf{MIMFork}^{\mathcal{D}}_{|CS|,1}) -\mathsf{negl} \ge \mathsf{acc}(\mathsf{Fork}^{\C})/8 -\mathsf{negl}
\ge \Pr[E_1]^2/8q  -\mathsf{negl}
\end{align*}

However, rewinding $\mathcal{D}$ for at most $|I|\cdot T_{max}$ times making $\mathcal{D}'$ issued more than $2q_s$ \textsc{Dlog} queries to answer \textsc{PSignO} queries and so $\mathcal{D}'$ does not solve OMDL.

\paragraph{Construct $\mathcal{B}_1'$ from $\mathcal{D}'$.}
We now construct an OMDL adversary $\mathcal{B}_1'$ which is defined the same as $\mathcal{D}'$ except that $\mathcal{B}_1'$ aborts any executions of \C after it obtains all PoPs $(\bar{R}_j, \bar{z}_j)$ which \F{} sent during the $\widetilde{\textsc{Init}}$ query.
This makes sure $\mathcal{D}'$'s internal state recorded the required $(\theta,\{(\bar{R}_j, \bar{z}_j, \bar{z}_j')\}_{\mathsf{id}_j\in CS})$, yet not issuing any additional \textsc{Dlog} query during the rewinds as $\mathcal{D}'$ suffered from.
Therefore, $\mathcal{B}_1'$ has the same success probability as $\mathcal{D}'$ and it runs in time:
\begin{align*}
T(\mathcal{B}_1')\le T(\mathcal{D}') &\le |I|\cdot T_{max}\cdot T(\mathcal{D}) +T(q)\\
&\le t\cdot T_{max}\cdot 2T(\F{}) +T(q)\\
&\le \frac{8t^2q \ln(8t/\mathsf{acc}(\mathcal{D}'))}{\mathsf{acc}(\mathcal{D}')}\cdot 2T(\F{}) +T(q)\\
&\le \frac{128t^2q^2 \ln(64tq/\Pr[E_1]^2)}{\Pr[E_1]^2}\cdot T(\F{}) +T(q)
\end{align*}
such that $\mathsf{Adv}_{\G}^{\mathsf{omdl}}(\mathcal{B}_1')\ge\Pr[E_1]^2/8q  -\mathsf{negl}$.\\

\noindent
\textbf{The $\mathcal{B}_3'$ Adversary.}
As explained previously, $\mathcal{B}_3'$ which handles forgery under event $E_3$ follows the same proving steps of $\mathcal{B}_1'$ although $\mathcal{B}_3'$ executes double $\mathsf{MIMFork}$ but not performing a mix forking.
Therefore, for $\mathcal{B}_3'$, the success probability of $\mathcal{D}'$ is:
\begin{align*}
\mathsf{acc}(\mathcal{D}')\ge\mathsf{acc}(\mathsf{MIMFork}^{\mathcal{D}}_{|CS|,1}) -\mathsf{negl} \ge \mathsf{acc}(\mathsf{MIMFork}^{\C}_{1,3})/8 -\mathsf{negl}
\ge \Pr[E_3]/64  -\mathsf{negl}
\end{align*}
and $\mathcal{B}_3'$ runs in time:
\begin{align*}
T(\mathcal{B}_3')\le T(\mathcal{D}') &\le |I|\cdot T_{max}\cdot T(\mathcal{D}) +T(q)\\
&\le t\cdot T_{max}\cdot \frac{24 q \ln(24 /\Pr[E_3])}{\Pr[E_3]}\cdot T(\F{}) +T(q)\\
&\le \frac{8t^2q \ln(8t/\mathsf{acc}(\mathcal{D}'))}{\mathsf{acc}(\mathcal{D}')}\cdot \frac{24 q \ln(24 /\Pr[E_3])}{\Pr[E_3]}T(\F{}) +T(q)\\
&\le \frac{512t^2q \ln(512t/\Pr[E_3])}{\Pr[E_3]}\cdot \frac{24 q \ln(24 /\Pr[E_3])}{\Pr[E_3]}T(\F{}) +T(q)\\
&\le \frac{12288t^2q^2 \ln(512t/\Pr[E_3])\ln(24 /\Pr[E_3])}{\Pr[E_3]^2}\cdot T(\F{}) +T(q)
\end{align*}
such that $\mathsf{Adv}_{\G}^{\mathsf{omdl}}(\mathcal{B}_3')\ge\Pr[E_3]/64  -\mathsf{negl}$.
\qed
\end{proof}

\section{Performance Evaluations}
\label{sec:performance}
We benchmark the proposed \textsf{FROST2}\texttt{+} against ZCash's \textsf{FROST} implementation (GitHub commit \texttt{3ffc19d}) \cite{zfnd2023speedingFROST} on a laptop with an Intel i9-14900HX (2.20 GHz), 32 GB RAM, running Windows 11 Home Single Edition. 
To support our (and \textsf{FROST2}'s) \textsf{CompPar} algorithm, which differs from the original \textsf{FROST} design, we modified the functions \texttt{binding\_factor\_preimages} and \texttt{compute\_group\_commitment} in \texttt{frost-core/src/lib.rs}.

We benchmark the related schemes by measuring only their signing throughput. 
Variants such as \textsf{ROAST} \cite{ROAST} and \textsf{Olaf} \cite{CGRS23} build on \textsf{FROST2} and add robustness mechanisms.
When those mechanisms are not invoked, their signing cost is essentially that of \textsf{FROST2}. 
Similarly, \textsf{ATS[FROST,DS]} is \textsf{FROST} plus an additional (Schnorr) signature per pre-processing token, so it is dominated by the measured \textsf{FROST} cost. 
We therefore benchmark \textsf{FROST}, \textsf{FROST2}, \textsf{ms-FROST2} and \textsf{FROST2}\texttt{+}, which represent the main computational differences in Table \ref{tab:TS_comparison}, using the Ed25519 curve across various predefined $(t=0.7n,n)$ parameter sets. 
For each configuration, we executed 100 rounds and recorded the average time per signature generation.
Each timing experiment was conducted twice: once with pre-computations enabled and once with them disabled.
When pre-computations were enabled, the timings for token verification and the pre-processing round were excluded from the signature generation time; when disabled, they were included.
The full numerical results are provided in Table~\ref{tab:benchmark} (Appendix~\ref{apdx:benchmark_data}), and Figure~\ref{fig:comparisons} visualizes the signatures-per-second for the four schemes.
We also present a comparison of the algorithmic complexity which focuses on scalar multiplications, and security of our schemes against \textsf{FROST} and its two-round variants in Table~\ref{tab:TS_comparison}.
This analysis highlights the core computational differences in scalar operations, with pre-processing contributions denoted in parentheses.

\begin{figure*}[h!]
    \centering
    \includegraphics[width=0.9\linewidth]{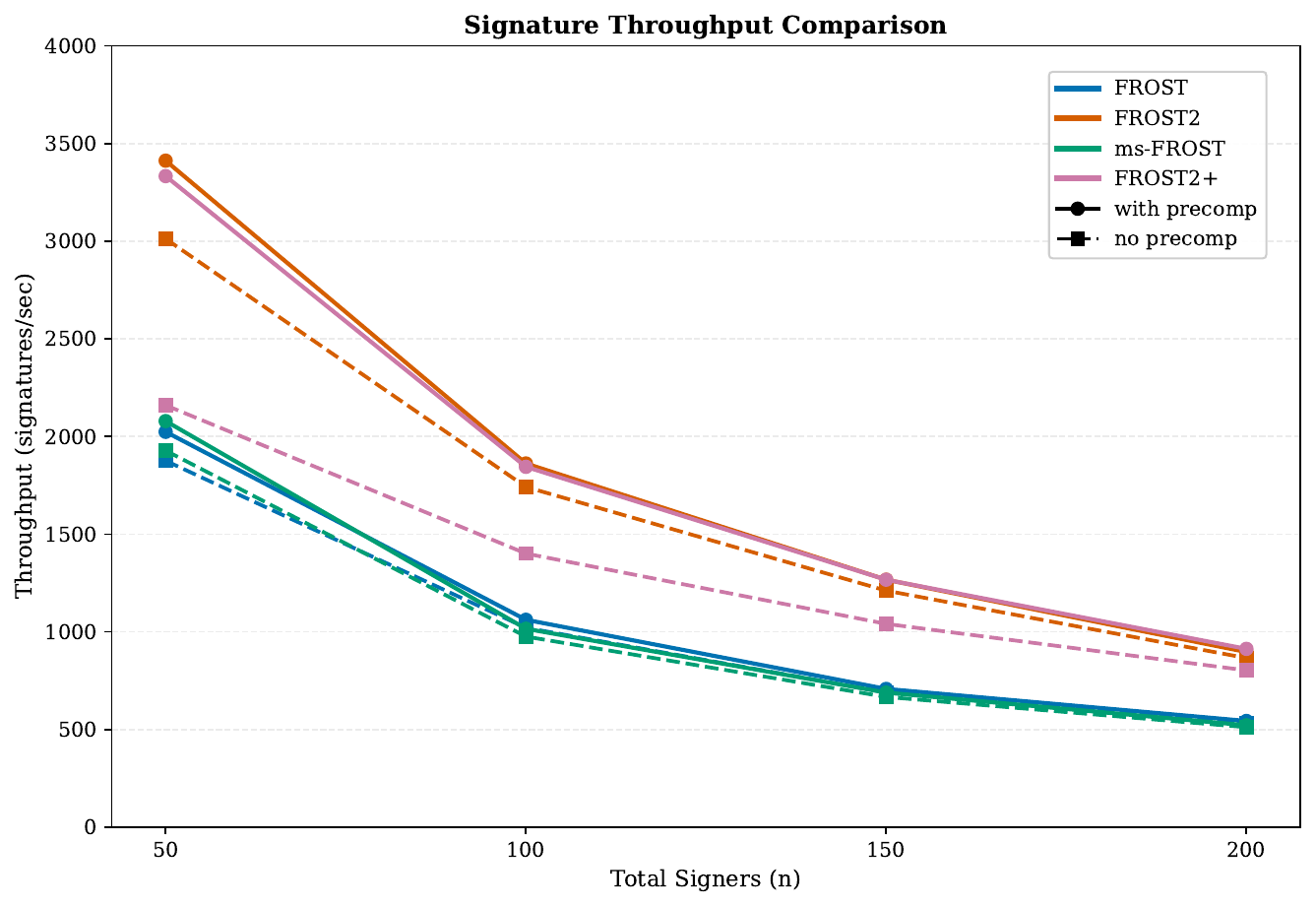}\vspace{-0.5cm}
    \caption{Signatures generated per seconds for \textsf{FROST}, \textsf{FROST2}, \textsf{ms-FROST}, \textsf{FROST2}\texttt{+}.}
    \label{fig:comparisons}
\end{figure*}

\begin{table}[h!]
\centering
\caption{Comparison of complexity andsecurity for the proposed \textsf{FROST}\texttt{+} against \textsf{FROST} and its variants with two signing rounds.}
\label{tab:TS_comparison}
\begin{tabular}{l@{}cclc@{ }c@{}c@{ }cc@{ }c@{ }c@{ }c@{ }}
\toprule
\multirow{2}{*}{Scheme} &  
\multicolumn{5}{c}{Sign} & 
\multirow{2}{*}{\textsf{Agg}} & 
\multirow{2}{*}{\textsf{Vf}} &
\multicolumn{4}{c}{Security}\\
\cmidrule(lr){2-6}
\cmidrule(lr){9-12}
& Round & $|pp|$ & \multicolumn{1}{c}{$smul$} & \G & \Z{p} & $smul$ & $smul$ & Notion & Model & Corrupt & \textsf{Kg} \\
\midrule
\textsf{FROST} \cite{FROST} & 2 & 2 & $(2)+t$ & 2 & 1 & $t$ &2 &\textsf{TS-UF-1} & ROM & Static & D \\
\textsf{FROST} \cite{BCKMTZ22} & 2 & 2 & $(2)+t$ & 3 & 1 & 0 &2 &\textsf{TS-SUF-3} & ROM & Static & C \\
\textsf{FROST}$^{**}$ \cite{CKKTZ25,CKM25} & 2 & 2 & $(2)+t$ & 2 & 1 & $t$ &2 &\textsf{adp-TS-UF} & AGM & Adaptive & C \\
\textsf{ATS[TS, DS]$^*$} \cite{BCKMTZ22} & 2 & 4 & $(3)+3t$ & 4 & 2 & 0 &2 &\textsf{TS-SUF-4} & ROM & Static & C \\
\textsf{FROST2} \cite{CKM21} & 2 & 2 & (2)+1 & 2 & 1 & 1 &2 &\textsf{TS-UF-1} & ROM & Static & D \\
\textsf{FROST2} \cite{BCKMTZ22} & 2 & 2 & (2)+1 & 3 & 1 & 0 &2 &\textsf{TS-SUF-2} & ROM & Static & C \\
\textsf{ROAST} \cite{ROAST} & 2 & 2 & (2)+1 & 2 & 1 & 1 &2 &\textsf{TS-SUF-2} & ROM & Static & C \\
\textsf{Olaf} \cite{CGRS23} & 2 & 2 & (2)+1 & 2 & 1 & 1 &2 &\textsf{TS-SUF-2} & ROM & Static & D \\
\textsf{Arctic} \cite{KG25} & 2 & 1 & $(1)+t^2+t$ & 1 & 1 & $t$ &2 &\textsf{UF} & ROM & Static & C,D \\
\textsf{ms-FROST} \cite{BCLTZ25} & 2 & 2 & (2)+1 & 3 & 1 & 0 &2 &\textsf{adp-TS-UF-4} & AGM & Adaptive & C \\
\textsf{FaFROST} \cite{BGCRSY25} & 2 & 2 & (2)+1 & 3 & 1 & 0 &2 &\textsf{adp-TS-UF} & AGM & Adaptive & C \\
\textsf{FROST-Mask} \cite{Chen25} & 2 & 2 & (2)+1 & 3 & 1 & 1 &2 &\textsf{adp-UF} & ROM & Adaptive & C \\
\textsf{FROST2}$^{**}$ \cite{CKM25} & 2 & 2 & $(2)+1$ & 3 & 1 & 0 &2 &\textsf{adp-TS-UF} & AGM & Adaptive & C \\
\textsf{FROST3}$^{**}$ \cite{CKM25} & 2 & 2 & $(2)+1$ & 2 & 1 & 1 &2 &\textsf{adp-TS-UF} & AGM & Adaptive & C \\
\hline
\textbf{\textsf{FROST2}\texttt{+}} & 2 & 4 & $(3)+4t+1$ & 4 & 2 & 0 &2 &\textsf{TS-SUF-4} & ROM & Static & C,D \\
\textbf{\textsf{FROST2}\texttt{+}}$^{\$}$ & 2 & 4 & $(3+4t)+1$ & 4 & 2 & 0 &2 &\textsf{TS-SUF-4} & ROM & Static & C,D \\
\bottomrule
\multicolumn{12}{l}{$^*$Assume $\mathsf{DS}$ is the Schnorr signature scheme. $^{\$}$with pre-computations.}\\
\multicolumn{12}{l}{$^{**}$ Maximum of $t/2$ adaptive corruptions if in ROM.}\\
\multicolumn{12}{l}{Note: $smul:$ scalar multiplication, $\G,\Z{p}:$ numbers of group and field element sent per signer,}\\
\multicolumn{12}{l}{\textsf{Kg}: key generation, C: centralized, D: decentralized, $t:$ threshold, $(num):$ op. in pre-proc. round}
\end{tabular}
\end{table}

When pre-computations are enabled, \textsf{FROST2}\texttt{+} achieves throughput nearly identical (within -2.3\% to 1.8\%) to \textsf{FROST2} across all group sizes, demonstrating that its additional features come with negligible overhead.
Besides, \textsf{FROST2}\texttt{+} is always $64.7\%=\frac{3333-2024}{2024}\times 100$ (at $n=50$) to $79\%=\frac{1267-708}{708}\times 100$ (at $n=150$) faster than \textsf{FROST} with pre-computations, confirming the better scalability of the underlying algorithms in \textsf{FROST2} and \textsf{FROST2}\texttt{+}.
Specifically, at $n = 200$ signers, \textsf{FROST2}\texttt{+} reaches 913 signatures/sec, marginally surpassing \textsf{FROST2} (897 sig/s) and outperforming \textsf{FROST} (544 sig/s). 
\textsf{ms-FROST} shows a similar performance to \textsf{FROST}, suggesting that the hashing operation is also an important factor for signing speed.

When pre-computations are disabled, \textsf{FROST2}\texttt{+} exhibits a noticeable performance degradation compared to \textsf{FROST2} (e.g., 2160 vs. 3012 sig/s at $n=50$). 
This suggests that the optimizations in \textsf{FROST2}\texttt{+} rely heavily on the pre-computations to achieve competitive throughput, and without it the additional operations from authenticating tokens impose a measurable cost.
Nonetheless, even without pre-computations, \textsf{FROST2}\texttt{+} outperforms \textsf{FROST} for all tested $n$, demonstrating the efficiency gained from the lighter-weight \textsf{CompPar} algorithm.

\section{Conclusion}
\label{sec:conclude}
We presented an enhanced \textsf{FROST2} scheme, namely, \textsf{FROST2}\texttt{+} with proven \textsf{TS-SUF-4} security under OMDL assumption in the ROM with static corruption and (de)centralized key generation settings.
\textsf{FROST2}\texttt{+} is the only ROM-based \textsf{TS-SUF-4} scheme to date apart from Bellare et al.'s \textsf{ATS[FROST,DS]} where the primitive \textsf{FROST} used there is at least $64\%$ slower than \textsf{FROST2}\texttt{+} for a signing group with threshold $t=0.7n$. 


\bibliographystyle{splncs04}
\bibliography{references}

\appendix

\makeatletter
\renewcommand{\theHsection}{\Alph{section}}
\renewcommand{\theHfigure}{\Alph{section}.\arabic{figure}}
\renewcommand{\theHtable}{\Alph{section}.\arabic{table}}
\makeatother

\section{Attacks on \textsf{FROST2}}
\subsection{Types of Forgeries}
Referring to the security model in Section \ref{sec:security_model}, we recall the conditions under which a (strong) forgery can be achieved in a threshold signature scheme that is also an echo scheme \cite{BTZ22,BCKMTZ22} (e.g., \textsf{FROST2}) in Figure \ref{fig:s_tree}.
\begin{figure}[t]
\centering
\tikzset{
  decision/.style={draw, rounded corners, fill=blue!10, align=center, inner sep=3pt},
  outcome/.style={draw, rounded corners, fill=green!8, align=center, inner sep=3pt},
  xoutcome/.style={draw, rounded corners, fill=red!8, align=center, inner sep=3pt},
  edgearr/.style={-{Latex[length=3mm]}, thick}
}
\begin{tikzpicture}[
  node distance=10mm and 24mm
]
\node[decision] (root) {$\mathsf{SVf}(\tilde{X},lr,\sigma)=\mathsf{true}$};
\node[decision, below left=10mm and 10mm of root] (L) {$|S_2(lr)| < t - |CS|$\\(\textsf{TS-SUF-2})};
\node[decision, below right=10mm and 10mm of root] (R) {$|S_2(lr)| \ge t - |CS|$};
\draw[edgearr] (root.south) -- (L.north);
\draw[edgearr] (root.south) -- (R.north);
\node[decision, below=18mm of $(L)!0.5!(R)$] (shared) {$S_2(lr) \subset S_4(lr)$};
\draw[edgearr] (R.south) -- (shared.north);
\node[decision, below left=14mm and 20mm of shared] (gt) {$S_3(lr) \subset S_2(lr)$};
\node[decision, below=14mm of shared] (eq) {$S_2(lr) = S_3(lr)$};
\node[decision, below right=14mm and 40mm of shared] (lt) {$S_2(lr) \subset S_3(lr)$};
\draw[edgearr] (shared.south) -- (gt.north);
\draw[edgearr] (shared.south) -- (eq.north);
\draw[edgearr] (shared.south) -- (lt.north);
\node[xoutcome, below=12mm of gt] (out1) {$S_3(lr) \subset S_2(lr) \subset S_4(lr)$\\(\textsf{TS-SUF-3})};
\draw[edgearr] (gt.south) -- (out1.north);
\node[outcome, below=12mm of eq] (out2) {$S_2(lr) = S_3(lr) \subset S_4(lr)$\\(\textsf{TS-SUF-4})};
\draw[edgearr] (eq.south) -- (out2.north);
\node[outcome, below left=12mm and -6mm of lt] (out3) {$S_2(lr) \subset S_3(lr) = S_4(lr)$\\(\textsf{TS-SUF-3})};
\node[outcome, below right=12mm and -6mm of lt] (out4) {$S_2(lr) \subset S_3(lr) \subset S_4(lr)$\\(\textsf{TS-SUF-3})};
\draw[edgearr] (lt.south) -- (out3.north);
\draw[edgearr] (lt.south) -- (out4.north);
\node[decision, right=14mm of R] (eqS24) {$S_2(lr) = S_4(lr)$};
\draw[edgearr] (R.east) -- (eqS24.west);
\node[xoutcome, below=12mm of eqS24] (out5) {$S_3(lr) \subset S_4(lr) = S_2(lr)$\\(\textsf{TS-SUF-3})};
\draw[edgearr] (eqS24.south) -- (out5.north);
\end{tikzpicture}
\caption{Decision tree for queries comparisons and forgery outcomes. \fcolorbox{black}{red!8}{ \strut } marks the algorithmically prohibited forgery types (i.e., $\mathsf{st.mapPP}(pp_i)\ne \perp$ enforced in \textsf{PS}) while \fcolorbox{black}{green!8}{ \strut } marks the possible forgery types.}
\label{fig:s_tree}
\end{figure}

\subsection{\textsf{TS-UF-3} Attack}
\label{apdx:ts-uf-3_attack}
In Bellare et al.'s work \cite{BTZ22,BCKMTZ22}, they provide a security proof of \textsf{TS-SUF-2} and also a \textsf{TS-UF-3} attack for \textsf{FROST2}.
This indicates that the best possible security level \textsf{FROST2} can achieve is only \textsf{TS-SUF-2}. 
We recall Bellare et al.'s \textsf{TS-UF-3} attack in the setting below:
\begin{itemize}
    \item Total signers are $\{P_1,P_2,P_3,P_4\}$.
    \item Signers $\{P_3,P_4\}$ corrupted.
    \item A signing request on signer set $lr.\mathcal{S}=\{1,2,3\}$ for a message $lr.\mathsf{msg}=M$.
    \item The pre-processing token $(R_1,S_1),(R_2,S_2)$ are known to the corrupted signers before they generate theirs.
    \item First signer's partial signature $z_1=r_1+ds_1+c\Delta_{1,\mathcal{S}}(0)x_1$ can be obtained.
    \item Expected combined signature is
    $z = \textsf{dlog}(\tilde{R})+cx=\sum_{i\in\mathcal{S}}r_i+ds_i+c\Delta_{i,\mathcal{S}}(0)x_i$
\end{itemize}
The attack procedure is as follows:
\begin{enumerate}
    \item $P_3$ generates its pre-processing token as
    \[
    R_3=R_1^{\frac{\Delta_{1,\mathcal{S}'}(0)}{\Delta_{1,\mathcal{S}}(0)}-1}R_2^{-1},S_3=S_1^{\frac{\Delta_{1,\mathcal{S}'}(0)}{\Delta_{1,\mathcal{S}}(0)}-1}S_2^{-1}.
    \]
    where $\mathcal{S}'=\{1,3,4\}$. The resulting group pre-processing token will be:
    \[
    \tilde{R}=R_1^{\frac{\Delta_{1,\mathcal{S}'}(0)}{\Delta_{1,\mathcal{S}}(0)}}S_1^{d\frac{\Delta_{1,\mathcal{S}'}(0)}{\Delta_{1,\mathcal{S}}(0)}}=\prod_{i\in \mathcal{S}}R_iS_i^{d}
    \]
    where $\textsf{dlog}(\tilde{R})=r_1\frac{\Delta_{1,\mathcal{S}'}(0)}{\Delta_{1,\mathcal{S}}(0)}+ds_1\frac{\Delta_{1,\mathcal{S}'}(0)}{\Delta_{1,\mathcal{S}}(0)}$ and $d=\mathsf{H}_{\mathsf{non}}(\tilde{X},lr)$.
    \item $P_3$ computes $z_1'=r_1\frac{\Delta_{1,\mathcal{S}'}(0)}{\Delta_{1,\mathcal{S}}(0)}+ds_1\frac{\Delta_{1,\mathcal{S}'}(0)}{\Delta_{1,\mathcal{S}}(0)}+c\Delta_{1,\mathcal{S}'}(0) x_1=\frac{\Delta_{1,\mathcal{S}'}(0)}{\Delta_{1,\mathcal{S}}(0)}z_1$
    \item $P_3$ and $P_4$ forge a valid combined signature for $M$ under $lr.\mathcal{S}$ as
    \[
    z' = \textsf{dlog}(\tilde{R}) + cx = z'_1+c\Delta_{3,\mathcal{S}'}(0) x_3 + c\Delta_{4,\mathcal{S}'}(0) x_4.
    \]
\end{enumerate}

This is a valid \textsf{TS-UF-3} attack because $S_2=\{1\}$ but unequal to $S_3=\{1,2\}$, i.e., $S_2(lr) \subset S_3(lr) = S_4(lr)$ in Figure \ref{fig:s_tree}.
We note that this attack does not work on our schemes because the attacker's pre-processing token $(R_3,S_3)$ is a combination of the honest signers' pre-processing token whose discrete logarithms are not known. Consequently, a valid authentication transcript $(u,V)$ for $(R_3,S_3)$ cannot be generated and the honest signers abort during \textsc{PSignO}.

\subsection{\textsf{TS-UF-4} Attack}
\label{apdx:ts-uf-4_attack}
In Bellare et al.'s work \cite{BTZ22,BCKMTZ22}, they demonstrate a \textsf{TS-UF-4} attack for \textsf{FROST} but not \textsf{FROST2}.
Here, we mount the attack on \textsf{FROST2} as follows:
\begin{itemize}
    \item Total signers are $\{P_5,P_{10},P_{11},P_{15},P_{20}\}$.
    \item Signers $\{P_5,P_{10}\}$ corrupted.
    \item A signing request on signer set $lr.\mathcal{S}=\{11,15,20\}$ for a message $lr.\mathsf{msg}=M$.
    \item The pre-processing token $lr.\mathsf{PP}=\{(R_{11},S_{11}),(R_{15}=g^{r_{15}},S_{15}=g^{s_{15}}),(R_{20}=g^{r_{20}},S_{20}=g^{s_{20}})\})$ and $r_{15},s_{15},r_{20},s_{20}\in\Zs{p}$ are known to the corrupted signers before they generate theirs.
    \item First signer's partial signature $z_{11}=r_{11}+d_{11}s_{11}+c\Delta_{11,\mathcal{S}}(0)x_{11}$ can be obtained.
    \item Expected combined signature is
    $z = \textsf{dlog}(\tilde{R})+cx=\sum_{i\in\mathcal{S}}r_i+ds_i+c\Delta_{i,\mathcal{S}}(0)x_i$
\end{itemize}
The attack procedure is as follows:
\begin{enumerate}
    \item Let $\mathcal{S}'=\{5, 10, 11\}$, the Lagrange coefficients are the same for:
    \[
    \Delta_{11,lr.\mathcal{S}} = \frac{15 \cdot 20}{(15 - 11)(20 - 11)} = \frac{25}{3} = \frac{5 \cdot 10}{(5 - 11)(10 - 11)} = \Delta_{11,\mathcal{S}'}.
    \]
    The resulting group pre-processing token will be:
    \[
    \tilde{R}=R_{11}S_{11}^{d}g^{r_{15}+ds_{15}+r_{20}+ds_{20}}=\prod_{i\in \mathcal{S}}R_iS_i^{d}
    \]
    where $\textsf{dlog}(\tilde{R})=r_{11}+ds_{11}+r_{15}+ds_{15}+r_{20}+ds_{20}$ and $d=\mathsf{H}_{\mathsf{non}}(\tilde{X},lr)$.
    \item $P_{5}$ and $P_{10}$ forge a valid combined signature for $M$ under $lr.\mathcal{S}$ as
    \[
    z' = \textsf{dlog}(\tilde{R}) + cx = z_{11}+r_{15}+ds_{15}+r_{20}+ds_{20}+c\Delta_{5,\mathcal{S}'}(0) x_{5} + c\Delta_{10,\mathcal{S}'}(0) x_{10}.
    \]
\end{enumerate}

This is a valid \textsf{TS-UF-4} attack because $S_2=\{11\}=S_3$ but unequal to $S_4=\{11,15,20\}$, i.e., $S_2(lr) = S_3(lr) \subset S_4(lr)$ in Figure \ref{fig:s_tree}.
We note that this attack does not work on our schemes for the similar reason in Section \ref{apdx:ts-uf-3_attack}.
Although the attacker knows the discrete logarithms for the pre-processing tokens $(R_{15},S_{15}),(R_{20},S_{20})$, it does not know the honest signers' signing key shares which are required to generate the authentication transcripts $(u_{15},V_{15}),(u_{20},V_{20})$ for the tokens.
Therefore, the honest signer will abort during partial signing because of invalid $(u_{15},V_{15}),(u_{20},V_{20})$.

\section{\textsf{PedPoP}}
\label{apdx:pedpop}
We recall \textsf{PedPoP} key generation \cite{FROST,BCKMTZ22} in Figure \ref{fig:peppop_algo}.

\begin{figure}[h!]
    \centering
    \begin{tabular}{|p{1.1\linewidth}|}
    \hline
    \begin{minipage}{1\linewidth}
    $$\mathsf{PedPop.Kg}(\mathsf{par},t,\mathsf{ns})$$
    \begin{enumerate}    
    \item Every signer $\mathsf{id}_i$ chooses a polynomial $f_i(Z)\xleftarrow{\$}\mathbb{Z}_p^{t-1}$:
    \[
    f_i(Z) = a_{i,0} + a_{i,1}Z + \cdots + a_{i,t-1}Z^{t-1}
    \]
    and computes $A_{i,k} = g^{a_{i,k}}$ for $k = 0, \ldots, t - 1$. Let $x_i = a_{i,0}$ and $X_{i,0} = A_{i,0}$. Every $\mathsf{id}_i$ selects $\bar{r}_i \xleftarrow{\$} \mathbb{Z}_p$ and sets $\bar{R}_i \leftarrow g^{\bar{r}_i}$ to compute $\bar{z}_i \leftarrow \bar{r}_i + h_i x_i$ where $h_i \leftarrow \mathsf{H_{kg}}(\mathsf{id}_i,X_{i,0}, X_{i,0}, \bar{R}_i)$. They broadcast $((\bar{R}_i, \bar{z}_i), \mathbf{C}_i)$ where the $\mathbf{C}_i = (A_{i,0}, \ldots, A_{i,t-1})$ is a commitment vector.

    \item Given $((\bar{R}_j, \bar{z}_j), \mathbf{C}_j)$ from each other signer $\mathsf{id}_j$, every signer computes:
    \[
    h_j \leftarrow \mathsf{H_{kg}}(\mathsf{id}_j,A_{j,0}, A_{j,0}, \bar{R}_j)
    \]
    and verifies whether:
    \[
    \bar{R}_j A_{j,0}^{h_j} = g^{\bar{z}_j}.
    \]
    If any checks fail, they disqualify the corresponding signer and continue to the next step.

    \item Every $\mathsf{id}_i$ computes secret shares $\bar{x}_{i,j} = f_i(\mathsf{id}_j)$ and privately sends $\bar{x}_{i,j}$ to signer $\mathsf{id}_j$ for $j = 1, \ldots, \mathsf{ns}$.

    \item Each signer $\mathsf{id}_j$ verifies the shares they received from other signers by checking:
    \[
    g^{\bar{x}_{i,j}} = \prod_{k=0}^{t-1} A_{i,k}^{\mathsf{id}_j^k}
    \]
    If the check fails for some index $i$, $\mathsf{id}_j$ broadcasts a complaint against $\mathsf{id}_i$.

    \item For each complaint by signer $\mathsf{id}_j$ against $\mathsf{id}_i$, $\mathsf{id}_i$ broadcasts the share $\bar{x}_{i,j}$. If any revealed share fails to satisfy the verification equation, or if $\mathsf{id}_i$ does not broadcast a share for a complaining signer, then $\mathsf{id}_i$ is disqualified and their shares are set to 0.

    \item The secret share for every $\mathsf{id}_j$ is then:
    \[
    \bar{x}_j = \sum_{i=1}^{\mathsf{ns}} \bar{x}_{i,j}.
    \]

    \item If $X_{i,0} = X_{j,0}$ for any $i \neq j$, then abort. Else, output the joint public key as:
    \[
    \tilde{X} = \prod_{i=1}^{\mathsf{ns}} X_{i,0}
    \]
\end{enumerate}
    \end{minipage}\\
    \hline
    \end{tabular}
    \caption{The \textsf{PedPoP} algorithm \cite{FROST,BCKMTZ22}.}
    \label{fig:peppop_algo}
\end{figure}

\section{Detailed Benchmark Results}
\label{apdx:benchmark_data}
The detailed benchmark data is in Table \ref{tab:benchmark}.

\begin{table*}[h]
\centering
\caption{Performance comparison of threshold signature schemes (excluding DKG setup). $(t,n)$ denotes the threshold and number of signers.}
\label{tab:benchmark}
\small
\begin{tabular}{l c c c c c}
\toprule
\textbf{Scheme} & \textbf{Pre-comp} & \textbf{$(t,n)$} & \textbf{Signature Gen.} (ms) & \textbf{Throughput} (sig/sec) & \textbf{Security} \\
\midrule
\multirow{8}{*}{\textsf{FROST}} & \multicolumn{1}{c}{\multirow{4}{*}{\ding{51}}} & $(35,50)$ & 0.494 & 2,024 & \multirow{8}{*}{\textsf{TS-SUF-3}} \\

& & $(70,100)$ & 0.941 & 1,062 & \\
& & $(105,150)$ & 1.412 & 708 & \\
& & $(140,200)$ & 1.838 & 544 & \\

\cline{2-5}
& \multicolumn{1}{c}{\multirow{4}{*}{\ding{55}}} & $(35,50)$ & 0.533 & 1,876 & \\
& & $(70,100)$ & 0.979 & 1,021 & \\
& & $(105,150)$ & 1.449 & 690 & \\
& & $(140,200)$ & 1.879 & 532 & \\

\midrule \multirow{8}{*}{\textsf{FROST2}} & \multicolumn{1}{c}{\multirow{4}{*}{\ding{51}}} & $(35,50)$ & 0.293 & 3,412 & \multirow{8}{*}{\textsf{TS-SUF-2}} \\

& & $(70,100)$ & 0.537 & 1,862 & \\

& & $(105,150)$ & 0.789 & 1,267 & \\

& & $(140,200)$ & 1.115 & 897 & \\

\cline{2-5}
& \multicolumn{1}{c}{\multirow{4}{*}{\ding{55}}} & $(35,50)$ & 0.332 & 3,012 & \\
& & $(70,100)$ & 0.574 & 1,742 & \\
& & $(105,150)$ & 0.826 & 1,211 & \\
& & $(140,200)$ & 1.155 & 866 & \\

\midrule
\multirow{8}{*}{\textsf{ms-FROST}} & \multirow{4}{*}{\ding{51}} &  (35,50)  & 0.481 & 2079 &  \multirow{8}{*}{\textsf{adp-TS-UF-4}}\\
 &  & (70,100) & 0.984 & 1016 &  \\
 &  & (105,150)& 1.458 & 689  &  \\
 &  & (140,200)& 1.914 & 522  &  \\
\cline{2-5}
 & \multirow{4}{*}{\ding{55}} & (35,50)  & 0.519 & 1928  &  \\
 &  & (70,100) & 1.025 & 976  &  \\
 &  & (105,150)& 1.498 & 667  &  \\
 &  & (140,200)& 1.954 & 512  &  \\

\midrule 
\multirow{8}{*}{\textsf{FROST2}\texttt{+}} & \multicolumn{1}{c}{\multirow{4}{*}{\ding{51}}} & $(35,50)$ & 0.300 & 3,333 & \multirow{8}{*}{\textsf{TS-SUF-4}} \\

& & $(70,100)$ & 0.542 & 1,845 & \\
& & $(105,150)$ & 0.789 & 1,267 & \\
& & $(140,200)$ & 1.095 & 913 & \\

\cline{2-5}
& \multicolumn{1}{c}{\multirow{4}{*}{\ding{55}}} & $(35,50)$ & 0.463 & 2,160 & \\
& & $(70,100)$ & 0.714 & 1,401 & \\
& & $(105,150)$ & 0.961 & 1,041 & \\
& & $(140,200)$ & 1.245 & 803 & \\
\bottomrule 
\end{tabular} 
\end{table*}

\end{document}